\theoremstyle{plain}
\newtheorem{theorem}{Theorem}
\newtheorem{corollary}[theorem]{Corollary}
\newtheorem{lemma}[theorem]{Lemma}
\newtheorem{proposition}[theorem]{Proposition}
\theoremstyle{definition}
\newtheorem{definition}[theorem]{Definition}
\newtheorem{example}[theorem]{Example}
\newtheorem{construction}{Construction}
\newcommand{\F}{\mathbb{F}}
\newcommand{\A}{\mathtt{A}}
\newcommand{\T}{\mathtt{T}}
\newcommand{\C}{\mathtt{C}}
\newcommand{\G}{\mathtt{G}}
\newcommand{\GC}{$\mathtt{GC}$}
\DeclareMathAlphabet{\mathbfsl}{OT1}{ppl}{b}{it} 
\newcommand{\va}{\mathbfsl{a}}
\newcommand{\vb}{\mathbfsl{b}}
\newcommand{\vm}{\mathbfsl{m}}
\newcommand{\vp}{\mathbfsl{p}}
\newcommand{\vu}{\mathbfsl{u}}
\newcommand{\vv}{\mathbfsl{v}}
\newcommand{\vw}{\mathbfsl{w}}
\newcommand{\vc}{\mathbfsl{c}}
\newcommand{\cA}{\mathcal{A}}
\newcommand{\cB}{\mathcal{B}}
\newcommand{\cC}{\mathcal{C}}
\newcommand{\bsigma}{{\pmb{\sigma}}}
\newcommand{\bphi}{{\pmb{\phi}}}
\newcommand{\wt}{{\rm wt}}
\newcommand{\shift}{\bsigma}
\newcommand{\etal}{{\em et al.}}
\newcommand{\floor}[1]{{\left\lfloor #1\right\rfloor}}
\newcommand{\ceil}[1]{{\left\lceil #1\right\rceil}}
\newcommand{\pspan}[1]{{\left\langle #1\right\rangle}}
\newcommand{\bbracket}[1]{{\left\llbracket #1\right\rrbracket}}
\newcommand{\equivcyclic}{\underset{\text{cyc}}{\sim}}
\title{Efficient and Explicit Balanced Primer Codes\\[-4mm]}
\author{
   \IEEEauthorblockN{
	Yeow Meng Chee,
	Han Mao Kiah, and
	Hengjia Wei}
   \IEEEauthorblockA{
   School of Physical and Mathematical Sciences,
	Nanyang Technological University, Singapore\\
email: \{{ymchee}, {hmkiah}, {hjwei}\}@ntu.edu.sg\\[-2mm]
 } 
}
\begin{document}
\date{}

\maketitle

\hspace*{-10pt}\begin{abstract}

To equip DNA-based data storage with random-access capabilities,
Yazdi \etal{} (2018) prepended DNA strands with specially chosen address sequences called primers and provided certain design criteria for these primers.
We provide explicit constructions of error-correcting codes that are suitable as primer addresses and
equip these constructions with efficient encoding algorithms.

Specifically, our constructions take cyclic or linear codes as inputs and produce sets of primers with similar error-correcting capabilities. 
Using certain classes of BCH codes, we obtain infinite families of primer sets of length $n$, minimum distance $d$ with $(d+1) \log_4 n +O(1)$ redundant symbols.
Our techniques involve reversible cyclic codes (1964), an encoding method of Tavares \etal{} (1971) and Knuth's balancing technique (1986).
In our investigation, we also construct efficient and explicit binary balanced error-correcting codes and codes for DNA computing.
\end{abstract}

\section{Introduction}

Advances in synthesis and sequencing technologies have made DNA macromolecules an attractive medium for digital information storage. 
Besides being biochemically robust, DNA strands offer ultrahigh storage densities of $10^{15}-10^{20}$ bytes per gram of DNA, as demonstrated in recent experiments (see \cite[Table 1]{Yazdi2017.portable}).
Therefore, in recent years, new error models were proposed and novel coding schemes were constructed by various authors (see \cite{Yazdi.etal:2015b} for a survey).

In this paper, we study the problem of {\em primer design}.
To introduce random-access and rewriting capabilities into DNA-based data storage, 
Yazdi \etal{} developed an architecture that allows selective access to encoded DNA strands through the process of  {\em hybridization}.
Their technique involves prepending information-carrying DNA strands with specially chosen address sequences called primers.
Yazdi \etal{} provided certain design considerations for these primers \cite{Yazdietal2018.mu}
and also, verified the feasibility of their architecture in a series of experiments \cite{Yazdi.etal:2015, Yazdi.etal:2015b}.

We continue this investigation and 
provide efficient and explicit constructions of error-correcting codes that are suitable as primer addresses.
Our techniques include novel modifications of {\em Knuth's balancing technique} \cite{Knuth1986} and 
involve the use of {\em reversible cyclic codes} \cite{Massey1964}. 
We also revisit the work of Tavares \etal{} \cite{Tavaresetal1971} that efficient encodes messages into cyclic classes of a cyclic code and adapt their method for our codes. 
We note that reversible cyclic codes have been studied in another coding application for DNA computing.
It turns out our techniques can be also modified to improve code constructions in the latter application.

\section{Preliminary and Contributions}

Let $\F_q$ denote the finite field of size $q$. 
Two cases of special interest are $q = 2$ and $q = 4$. 
In the latter case, we let $\omega$ denote a primitive element of $\F_4$ and
identify the elements of  $\F_4$ with the four DNA bases $\Sigma=\{\A,\C,\T,\G\}$.
Specifically, 
\[0\leftrightarrow \A,\quad 1\leftrightarrow \T, \quad\omega\to \C, \quad\omega+1\leftrightarrow \G.\]
Hence, for an element $x\in\F_4$, its Watson-Crick complement corresponds to $x+1$.  

Let $n$ be a positive integer. 
Let $[n]$ denote the set $\{1,2,\ldots, n\}$, 
while $\bbracket{n}$ denotes the set $\{0,1,\ldots,n-1\}$.
For a word $\va = (a_1, \ldots, a_n) \in \F_q^n$, 
let  $\va[i]$ denote the $i$th symbol $a_i$ and  $\va[i,j]$ denote the subword of $\va$ starting at position $i$ and ending at position $j$. In other words,
\begin{equation*}
\begin{split}
\va[i,j] = & \begin{cases} (a_i,a_{i+1},\ldots,a_j), \textup{\ \ if $i\leq j$;} \\
(a_j,a_{j-1},\ldots,a_i), \textup{\ \ if $i>j$.}\\
\end{cases}
\end{split}
\end{equation*}
Moreover, the {\it reverse} of $\va$, denoted as $\va^r$, is $(a_n,a_{n-1}, \ldots, a_1)$; 
the {\it complement  $\overline{\va}$} of $\va$ is $(\overline{a}_1,\overline{a}_2, \ldots, \overline{a}_n)$,
where $\overline{x}=x+1$ for $x\in \F_2$ or $x\in\F_4$;
and the {\it reverse-complement $\va^{rc}$} of $\va$ is $\overline{\va^r}$.

For two words $\va$ and  $\vb$, we use $\va\vb$ to denote the concatenation of  $\va$ and $\vb$,
and $\va^\ell$ to denote the sequence of length $\ell n$ comprising $\ell$ copies of $\va$.

A {\it $q$-ary code} $\cC$ of length $n$ is a collection of words from $\F_q^n$.
For two words $\va$ and $\vb$ of the same length, 
we use $d(\va,\vb)$ to denote the Hamming distance between them. 
A code $\cC$ has  {\it minimum Hamming distance $d$} 
if any two distinct codewords in $\cC$ is at least distance $d$ apart.
Such a code is denoted as an {$(n,d)_q$-code}. 
Its {\em size} is given by $|\cC|$, 
while its {\em redundancy} is given by $n-\log_q |\cC|$.
An {\em $[n,k,d]_q$-linear code} is an $(n,d)_q$-code  that 
is also an $k$-dimension vector subspace of $\F_q^n$.
Hence, an $[n,k,d]_q$-linear code has redundancy $n-k$.

\subsection{Cyclic and Reversible Codes}

For a vector $\va\in \F_q^n$, let $\shift^i(\va)$ be the vector obtained by cyclically shifting the components of $\va$ to right $i$ times.
So, $\shift^1(\va)=(a_n,a_1,a_2,\ldots,a_{n-1})$.  An {\it $[n, k, d]_q$-cyclic code} $\cC$ is 
 an $[n,k,d]_q$-linear code that is closed under cyclic shifts. In other words, $\va\in \cC$ implies $\shift^1(\va) \in \cC$.  

Cyclic codes are well-studied because of their rich algebraic structure.
In the theory of cyclic codes (see for example, MacWilliams and Sloane \cite[Chapter 7]{macwilliams1977theory}), 
we identify a word $\vc=(c_i)_{i\in\bbracket{n}}$ of length $n$
with the polynomial $\sum_{i=0}^{n-1} c_iX^i$.   
Given a cyclic code $\cC$ of length $n$ and dimension $k$,
there exists a unique monic polynomial $g(X)$ of degree $n-k$ such that 
$\cC$ is given by the set $\{m(X)g(X) : \deg m < k\}$.
The polynomial $g(X)$ is referred to as the {\em generator polynomial} of $\cC$ and 
we write $\cC=\pspan{g(X)}$.
We continue this discussion on this algebraic structure in Section~\ref{sec:encoding},
where we exploit certain polynomial properties for efficient encoding.

When $d$ is fixed, there exists a class of Bose-Chaudhuri-Hocquenghem (BCH) codes 
that are cyclic codes whose redundancy is asymptotically optimal.


\begin{theorem}[Primitive narrow-sense BCH codes {\cite[Theorem 10]{Aly:2007}}]\label{primitiveBCH}
Fix $m\ge 1$ and $2\le d\le 2^m-1$. Set $n=2^m-1$ and $t=\ceil{(d-1)/2}$.
There exists an $[n,k,d]_2$-cyclic code $\cC$ with $k\ge  n-tm$.
In other words, $\cC$ has redundancy at most $t\log_2 (n+1)$.
\end{theorem}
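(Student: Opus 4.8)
The plan is to exhibit the code explicitly as a primitive narrow-sense BCH code and then bound the degree of its generator polynomial, which equals the redundancy $n-k$. Let $\alpha$ be a primitive element of $\F_{2^m}$, so that $\alpha$ has multiplicative order $n=2^m-1$, and for $i\in\bbracket{n}$ let $M_i(X)\in\F_2[X]$ denote the minimal polynomial of $\alpha^i$ over $\F_2$. I would define $\cC=\pspan{g(X)}$ with $g(X)=\lcm\big(M_1(X),M_2(X),\ldots,M_{d-1}(X)\big)$; since each $M_i(X)$ divides $X^n-1$, so does $g(X)$, and $\cC$ is a cyclic code of length $n$.

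The minimum-distance claim follows from the BCH bound. By construction $g(\alpha^i)=0$ for the $d-1$ consecutive powers $i=1,2,\ldots,d-1$, and a standard Vandermonde argument (see \cite[Chapter 7]{macwilliams1977theory}) shows that any cyclic code whose generator polynomial vanishes at $d-1$ consecutive powers of $\alpha$ has minimum Hamming distance at least $d$. Hence $\cC$ is an $(n,d)_2$-code, and it remains only to estimate $\deg g(X)=n-k$.

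Two observations do the work. First, each $M_i(X)$ has degree at most $m$, because $\alpha^i\in\F_{2^m}$ and the minimal polynomial over $\F_2$ of an element of $\F_{2^m}$ has degree dividing $m$. Second, for even indices $M_{2i}(X)=M_i(X)$: in characteristic $2$ one has $M_i(X)^2=M_i(X^2)$, so $\alpha^{2i}=(\alpha^i)^2$ is again a root of $M_i$, forcing the two irreducible polynomials to coincide. Consequently every $M_i$ with $i\le d-1$ equals some $M_j$ with $j$ odd and $j\le 2t-1$, where $2t-1$ is the largest odd integer not exceeding $d-1$; a short check confirms that the number of such odd indices is exactly $t=\ceil{(d-1)/2}$, whether $d$ is even or odd. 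Therefore $g(X)=\lcm(M_1,M_3,\ldots,M_{2t-1})$ is a product of at most $t$ distinct irreducible polynomials, each of degree at most $m$, so $\deg g(X)\le tm$ and $k=n-\deg g(X)\ge n-tm$. Since $n+1=2^m$, the redundancy satisfies $n-k\le tm=t\log_2(n+1)$.

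I do not expect a genuine obstacle: the result is classical and every step is a short algebraic verification, with the BCH bound quoted from the cited reference. The only point needing care is the bookkeeping that ties the count of distinct odd indices in $\{1,\ldots,d-1\}$ to $t=\ceil{(d-1)/2}$, so that the bound emerges as $tm$ rather than $(t+1)m$; this is precisely where the narrow-sense choice (starting exponent $1$) and the parity of $d$ enter.
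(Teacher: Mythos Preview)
Your proof is correct and is the standard textbook argument for the dimension of primitive narrow-sense binary BCH codes. Note, however, that the paper does not prove this theorem at all: it is quoted from \cite{Aly:2007} as background, so there is no ``paper's own proof'' to compare against.
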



A cyclic code $\cC$ is called {\it reversible} if $\va \in \cC$ implies $\va^r \in \cC$. 
A reversible cyclic  code is also known as an LCD cyclic code and has been studied extensively  \cite{CLietal2017,Lietal2017,Massey1964,TzengHartmann}.
In this paper,  reversible cyclic codes containing the all-one vector $1^n$ are of particular interest. Suppose that $\cC$ is one such  code. 
Then for any codeword $\va\in\cC$,  both its complement $\overline{\va}=\va+1^n$ and 
its reverse-complement $\va^{rc}=\va^r+1^n$ belong to $\cC$. 

Recently, Li \etal \cite{Lietal2017} explored two other classes of BCH codes and 
determined their minimum distances and dimensions. 
These codes are reversible cyclic and contain the all-one vector. 

\begin{theorem}[Li \etal \cite{Lietal2017}]\label{thm-LCDwithone}
Let $m\ge 2$, $m\not=3$ and $1\le \tau\le \ceil{m/2}$.
Let $q$ be even and set $n=q^m-1$ and $d=q^\tau-1$.
There exists an $[n,k,d]_q$-reversible cyclic code that contains $1^n$
and has dimension
\begin{equation*}
k = 
\begin{cases} 
n-(d-q+1)m, &\mbox{if $m\geq 5$ is odd and $\tau=\frac{m+1}2$}; \\
n-(d-1)m,  &\mbox{otherwise}.\\
\end{cases}
\end{equation*}
In other words, $\cC$ has redundancy at most $(d-1)\log_q (n+1)$.
\end{theorem}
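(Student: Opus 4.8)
The plan is to prove Theorem~\ref{thm-LCDwithone} by analyzing the cyclotomic coset structure of the relevant BCH codes over $\F_q$ and invoking the standard BCH bound together with the characterization of reversible cyclic codes via defining sets. Recall that a cyclic code $\cC$ of length $n=q^m-1$ over $\F_q$ is determined by its \emph{defining set} $T\subseteq \bbracket{n}$, which is a union of $q$-cyclotomic cosets modulo $n$, and $\dim\cC = n-|T|$. The code is reversible precisely when $T$ is closed under negation modulo $n$ (i.e.\ $i\in T \iff -i \in T$), since $\va^r$ corresponds, up to a unit monomial factor, to evaluating the reciprocal polynomial; and $\cC$ contains $1^n$ iff the generator polynomial $g(X)$ has no root at $X=1$, equivalently $0\notin T$.

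First I would fix the candidate defining set. For the code with designed distance $d=q^\tau-1$, the natural narrow-sense choice is $T=\bigcup_{i=1}^{d-1} C_i$ where $C_i$ denotes the $q$-cyclotomic coset of $i$ modulo $n$. To get a \emph{reversible} code containing $1^n$, one instead takes the symmetrized defining set $T = \bigcup_{i=1}^{d-1}(C_i \cup C_{-i})$ and checks that $0\notin T$ (immediate, since $0$ is its own coset and $1,\dots,d-1$ are nonzero and $n\nmid i$). By construction $T$ is negation-closed, so the resulting code is reversible and contains the all-one word. Since $T$ contains $\{1,2,\dots,d-1\}$, the BCH bound gives minimum distance at least $d$; I would then argue the distance is exactly $d$ either by exhibiting a weight-$d$ codeword or by citing the sphere-packing/Hamming-type arguments already carried out in Li \etal. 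The real content is the \textbf{dimension count}: one must show $|T| = (d-1)m$ in the generic case, and $|T|=(d-q+1)m$ in the exceptional case $m\geq 5$ odd, $\tau=(m+1)/2$.

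The main obstacle is precisely this cardinality computation for the symmetrized defining set. Two things can go wrong with the naive estimate $|T|\leq (d-1)m$: (i) some cyclotomic coset $C_i$ with $1\leq i\leq d-1$ might be \emph{degenerate}, i.e.\ have size strictly less than $m$, and (ii) the cosets $C_i$ and $C_{-i}$, or $C_i$ and $C_j$ for distinct $i,j\in\{1,\dots,d-1\}$, might coincide or overlap, causing the union to be smaller than expected. For the range $1\le \tau\le\ceil{m/2}$ with $d=q^\tau-1$, the key number-theoretic facts to establish are: each coset $C_i$ for $1\le i\le q^\tau-2$ has full size $m$ (this uses $\tau\le\ceil{m/2}$ so that $i<q^{\ceil{m/2}}$ forces the coset to be non-degenerate when $q^m-1$ has no small factors of the right form — here the hypothesis $m\ne 3$ enters), and the cosets $\{C_i : 1\le i\le q^\tau-2\}$ together with their negatives are pairwise disjoint except for a controlled collision. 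In the generic case these $q^\tau-2 = d-1$ cosets plus negatives still only contribute $(d-1)m$ symbols because $C_{-i}$ already appears among the $C_j$'s for the relevant ranges; in the exceptional case $m=2\tau-1$ odd, a block of $q-2$ of the cosets $C_i$ with $i$ near $q^\tau-1$ become identified with negatives of cosets already counted (a ``folding'' at the midpoint $q^{(m+1)/2}$ relative to $n=q^m-1$), removing $(q-2)m$ from the count and yielding $(d-q+1)m$.

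Concretely, the steps I would carry out in order are: (1) set up defining sets and record the reversibility criterion $-T=T$ and the all-one criterion $0\notin T$; (2) prove the coset-size lemma — for $q$ even, $m\ge 2$, $m\ne 3$, and $1\le i\le q^{\ceil{m/2}}-1$, the coset $C_i$ has exactly $m$ elements (handling the small anomalous cases that force the $m\ne3$ exclusion); (3) prove the disjointness/collision lemma describing exactly which $C_i$ with $1\le i\le d-1$ satisfy $C_i=C_{-j}$ for some $j$ in the same range, and show this collision set is empty unless $m$ is odd and $\tau=(m+1)/2$, in which case it has size $q-2$; (4) assemble (2) and (3) to get $|T|=(d-1)m$ or $(d-q+1)m$ as stated, hence the dimension formula; (5) conclude the BCH bound gives distance $\ge d$ and note the matching upper bound, then translate $k\ge n-(d-1)m$ into ``redundancy at most $(d-1)\log_q(n+1)$'' using $n+1=q^m$. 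I expect steps (2) and (3) — the cyclotomic coset arithmetic modulo $q^m-1$, and in particular pinning down the exceptional folding at the midpoint exponent — to be the delicate part, and this is exactly where the detailed analysis of Li \etal \cite{Lietal2017} is invoked rather than reproved.
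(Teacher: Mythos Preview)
The paper does not prove this theorem at all: it is quoted verbatim from Li \etal{} \cite{Lietal2017} and used only as a black-box source of reversible cyclic codes containing $1^n$ (as input to Constructions~\ref{constr-primer-3} and~\ref{constr-DNA}, and in Corollaries~\ref{cor-family-primer} and~\ref{cor-family-computing}). There is therefore nothing in the present paper to compare your argument against; you yourself close by saying the delicate coset arithmetic ``is exactly where the detailed analysis of Li \etal{} is invoked rather than reproved,'' which is precisely what the paper does too.

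That said, one remark on your outline: the sentence ``in the generic case these $d-1$ cosets plus negatives still only contribute $(d-1)m$ symbols because $C_{-i}$ already appears among the $C_j$'s'' is not the right mechanism. For instance with $q=2$, $m=4$, $n=15$, $\tau=2$, $d=3$, the coset $C_{-1}=\{7,11,13,14\}$ does \emph{not} coincide with any $C_j$ for $j\in\{1,2\}$; rather, the count $(d-1)m$ comes out because the redundancies among the positive cosets $C_1,\dots,C_{d-1}$ (e.g.\ $C_i=C_{qi}$) are exactly compensated by the new negative cosets added upon symmetrization. Getting this bookkeeping right, and identifying the exceptional collapse when $m$ is odd and $\tau=(m+1)/2$, is the substance of the cited reference.
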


\subsection{Balanced Codes}

A binary word of length $n$ is  {\it balanced} if $\floor{n/2}$ or $\ceil{n/2}$ bits are zero, while 
a quaternary word of length $n$ is {\it \GC-balanced} if $\floor{n/2}$ or $\ceil{n/2}$ symbols are either $\G$ or $\C$. 
A binary (or quaternary) code is {\it balanced} (resp. {\em \GC-balanced}) 
if all its codewords are balanced (resp. \GC-balanced).

Motivated by applications in laser disks, Knuth \cite{Knuth1986} studied balanced binary codes and  
proposed an efficient method to encode an arbitrary binary message to a binary balanced codeword
by introducing  $\log_2 n$ redundant bits. 
Recently, Weber \etal \cite{Weberetal2012} extended Knuth's scheme to include error-correcting capabilities. 
Specifically, their construction takes two input codes of distance $d$: 
a linear code of length $n$ and a short {\em balanced} code $\cC_p$;
and outputs a long balanced code of distance $d$.
Even though the balanced code $\cC_p$ is only required to be size $n$,
it is unclear how to find one efficiently, especially when $d$ grows with $n$.

On the other hand, \GC-balanced codes have been extensively studied in the context of DNA computing and DNA-based storage (see \cite{limbachiyaetal2016, Yazdi.etal:2015b, ImminkCai:2018} for a survey).
However, most constructions are based on search heuristics or apply to a restricted set of parameters.
Recently, Yazdi \etal \cite{Yazdietal2018.mu} introduced the coupling construction (Lemma~\ref{coupling}) that 
takes two binary error-correcting codes, one of which is balanced, as inputs 
and outputs a \GC-balanced error-correcting code.
As with the construction of Weber \etal{} \cite{Weberetal2012}, it is unclear how to find the balanced binary error-correcting code efficiently.

In this work, we {\em avoid these requirements of additional balanced codes}. 
Specifically, we provide construction that takes a binary cyclic code (or two binary linear codes)
and outputs a binary balanced code (resp. a \GC-balanced code) with error-correcting capabilities. 


\subsection{Primer Codes}

In order to introduce random access to DNA-based data storage systems, 
Yazdi \etal \cite{Yazdietal2018.mu} proposed the following criteria for the design of primer addresses.

\begin{definition}
A code $\cC$ of length $n$ is {\it $\kappa$-weakly mutually uncorrelated ($\kappa$-WMU)} if 
for all $\ell \geq \kappa$, no proper prefix of length $\ell$ of a codeword appears as 
a suffix of another codeword (including itself). 
In other words, for any two codewords $\va,\vb \in \cC$, not necessarily distinct, and $\kappa\leq \ell\leq n$,
\[ \va[1,\ell]\ne \vb[n-\ell+1,n].\]
When $\cC$ is $1$-WMU, we say that $\cC$ is mutually uncorrelated (MU).
\end{definition}

\begin{definition}
A code $\cC$ of length $n$ is said to 
{\it avoid primer dimer byproducts of effective length $f$} ($f$-APD) if 
the reverse complement and the complement of any substring of length $f$ in a codeword does not appear in as a substring of another codeword (including itself).
In other words, for any two codewords $\va,\vb \in \cC$, not necessarily distinct, and $1\leq i,j\leq n+1-f$, we have 
\[\overline{\va}[i,{i+f-1}] \notin \{\vb[j,{j+f-1}], \vb[{j+f-1},j]\}.\]
\end{definition}

For primer design in DNA-based storage, 
WMU codes are desired to be \GC-balanced, have large Hamming distance and avoid primer dimer byproducts. 

\begin{definition}
A code $\cC\in \F_q^n$ is an {\it $(n,d;\kappa,f)_q$-primer code} if the following are satisfied:
\begin{enumerate}
\item[(P1)] $\cC$ is an $(n,d)_q$-code;
\item[(P2)] $\cC$ is $\kappa$-WMU;
\item[(P3)] $\cC$ is an $f$-APD code.
\end{enumerate}
Furthermore, if $\cC$ is  balanced or \GC-balanced, then $\cC$ is an {\it $(n,d;\kappa,f)_q$-balanced primer code}. 
\end{definition}

Yazdi \etal \cite{Yazdietal2018.mu} provided a number of constructions for WMU codes which satisfy some combinations of the constraints (P1), (P2) and (P3).  
In particular, Yazdi \etal{} provided the following coupling construction.

\begin{lemma}[Coupling Construction - Yazdi \etal{} \cite{Yazdietal2018.mu}]\label{coupling}
For $i\in [2]$, let $\cC_i$ be an $(n,d_i)_2$-code of size $M_i$.
Define the map $\Psi: \F_2^n\times \F_2^n \to \Sigma^n$ such that
$\Psi(\va,\vb)=\vc$ where for $i\in [n]$,
\begin{equation*}
c_i=  \begin{cases}
\A, &\text{if $a_ib_i=00$;} \\
\T, &\text{if $a_ib_i=01$;} \\
\end{cases}
\quad
c_i=  \begin{cases}
\C, &\text{if $a_ib_i=10$;} \\
\G, &\text{if $a_ib_i=11$.}
\end{cases}
\end{equation*}
Then the code $\cC\triangleq \{\Psi(\va,\vb): \va\in \cC_1, \vb\in \cC_2\}$ is an $(n,d)_4$-code of size $M_1M_2$, where $d=\min\{d_1,d_2\}$.
Furthermore,
\begin{enumerate}[(i)]
\item if $\cC_1$ is balanced, $\cC$ is \GC-balanced;
\item if $\cC_2$ is $\kappa$-WMU, then $\cC$ is also $\kappa$-WMU;
\item if $\cC_2$ is an $f$-APD code, then $\cC$ is also an $f$-APD code.
\end{enumerate}
\end{lemma}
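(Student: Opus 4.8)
The plan is to exploit the fact that $\Psi$ is a bijection $\F_2^n\times\F_2^n\to\Sigma^n$ acting coordinatewise. Under the identification of $\F_4$ with $\Sigma$, the rule defining $\Psi$ is exactly $c_i=\omega a_i+b_i$, so every DNA symbol $c_i$ determines both its ``high bit'' $a_i$ and its ``low bit'' $b_i$. Let $\pi_1,\pi_2\colon\Sigma^n\to\F_2^n$ be the coordinatewise projections that recover $\va$ and $\vb$ from $\vc=\Psi(\va,\vb)$; injectivity of $\Psi$ immediately yields $|\cC|=M_1M_2$.

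For the distance claim I would note that $c_i\ne c_i'$ precisely when $a_i\ne a_i'$ or $b_i\ne b_i'$, so $d(\Psi(\va,\vb),\Psi(\va',\vb'))=|\supp(\va-\va')\cup\supp(\vb-\vb')|$. A short case split — according to whether $\va=\va'$, $\vb=\vb'$, or neither — shows this quantity is always at least $\min\{d_1,d_2\}=d$ (and is attained, by pairing two distance-$d_1$ codewords of $\cC_1$ with a common second coordinate, and symmetrically for $\cC_2$). Hence $\cC$ is an $(n,d)_4$-code of size $M_1M_2$.

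For the three extra properties I would argue through the projections. For (i): $c_i\in\{\G,\C\}$ exactly when its high bit $a_i=1$, so the number of \GC{} symbols of $\vc$ equals $\wt(\va)=\wt(\pi_1(\vc))$, and balancedness of $\cC_1$ transfers verbatim to \GC-balancedness of $\cC$. For (ii) and (iii) the key point is that $\pi_2$ is compatible with every operation appearing in the definitions of WMU and APD: being coordinatewise, $\pi_2$ commutes with taking substrings $\vx[i,j]$ (the reversed case $i>j$ included) and with the reverse $\vx^r$; and since $\overline{\omega a+b}=\omega a+\overline b$, it sends the DNA complement on $\Sigma^n$ to the binary complement, i.e.\ $\pi_2(\overline{\vc})=\overline{\pi_2(\vc)}$ while $\pi_1(\overline{\vc})=\pi_1(\vc)$. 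Consequently, an offending equality $\vc[1,\ell]=\vc'[n-\ell+1,n]$ between codewords of $\cC$ with $\ell\ge\kappa$ would descend under $\pi_2$ to the same equality between codewords $\vb,\vb'$ of $\cC_2$, contradicting $\kappa$-WMU; and an offending membership $\overline{\vc}[i,i+f-1]\in\{\vc'[j,j+f-1],\vc'[j+f-1,j]\}$ would descend to $\overline{\vb}[i,i+f-1]\in\{\vb'[j,j+f-1],\vb'[j+f-1,j]\}$, contradicting $f$-APD of $\cC_2$. This settles (ii) and (iii).

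The only real subtlety is the bookkeeping in part (iii): one must correctly track how the complement, the reverse, and therefore the reverse-complement on $\Sigma^n$ descend along $\pi_2$, and in particular observe that DNA-complementation flips only the low bit while leaving the high bits $\pi_1$ untouched. This observation is exactly what makes it sufficient for $\cC_2$ alone — and not also $\cC_1$ — to be $f$-APD, and it is the one place where an incautious argument could slip.
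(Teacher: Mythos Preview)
Your proof is correct. Note, however, that the paper does not actually prove this lemma: it is quoted from Yazdi \etal{} \cite{Yazdietal2018.mu} and stated without proof, so there is nothing in the paper to compare your argument against. Your approach --- identifying $\Psi(\va,\vb)$ with $\omega\va+\vb$, using the coordinatewise projections $\pi_1,\pi_2$, and checking that $\pi_2$ commutes with substrings, reversal, and complementation (since $\overline{\omega a+b}=\omega a+\overline b$) --- is exactly the natural one and handles all three items cleanly. The one remark about attainment of $d=\min\{d_1,d_2\}$ is slightly more than the lemma asks (an $(n,d)_q$-code in this paper means distance \emph{at least} $d$), and tacitly assumes $M_1,M_2\ge 2$, but this is harmless.
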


Yazdi \etal{} also provided an iterative construction for primer codes 
satisfying all the constraints, {\em i.e.} balanced primer codes.
However, the construction requires a short balanced primer code
and a collection of subcodes, some of which disjoint.
Hence, it is unclear whether the code can be constructed efficiently and whether efficient encoding is possible.

In this work, we provide constructions that take cyclic, reversible cyclic or linear codes as inputs and 
produce primer or balanced primer codes as outputs.
Using known families of cyclic codes given by Theorems~\ref{primitiveBCH}~and~\ref{thm-LCDwithone},
we obtain infinite families of primer codes and provide explicit upper bounds on the redundancy.
We also describe methods that efficiently encode into these codewords.

\subsection{Our Contributions}
In this paper, we study balanced codes, primer codes and other related coding problems. 
Our contributions are as follow:
{
\begin{enumerate}[A.]
\item In Section~\ref{sec:balanced}, we propose efficient methods to construct both balanced and \GC-balanced error-correcting codes. Unlike previous methods that require short balanced error-correcting codes, our method uses only cyclic and linear codes as inputs.
Furthermore, our method always increases the redundancy only by $\log_2 n+1$ (where $n$ is the block length), regardless of the value of the minimum distance.
\item In Section~\ref{sec:primer}, we provide three constructions of primer codes. 
For general parameters, the first construction produces a class of $(n,d;\kappa, f)_4$-balanced primer codes whose redundancy is $(d+1)\log_4 n+O(1)$, while the other two rely on cyclic codes and use less redundancy albeit for a specific set of parameters.
In particular, we have a class of $(n,d;\kappa, \kappa)_4$-balanced primer codes with 
redundancy $(d+1)\log_4 (n+1)$.
\item In Section~\ref{sec:DNAcomputing}, we construct codes for DNA computing.
In particular, we provide a class of \GC-balanced $(n,d)_4$-DNA computing codes with redundancy 
$(d+1)\log_4(n+1)$.
\item In Section~\ref{sec:encoding}, we adapt the technique of Tavares \etal{} to efficiently encode messages into codes constructed in this paper.
\end{enumerate}
}
%

\section{Balanced Error-Correcting Codes}
\label{sec:balanced}

The celebrated Knuth's balancing technique \cite{Knuth1986} is a linear-time algorithm 
that maps a binary message of length $m$ to a balanced word of length approximately $m+\log m$.
The technique first finds an index $z$ such that flipping the first $z$ bits yields a balanced word $\vc$.
Then Knuth appends a short balanced word $\vp$ that represents the index $z$.
Hence, $\vc\vp$ is the resulting codeword and 
the redundancy of the code is equal to the length of $\vp$ which is approximately $\log m$.
The crucial observation demonstrated by Knuth is that such an index $z$ always exists and $z$ is commonly referred to as the balancing index.

Recently, Weber \etal \cite{Weberetal2012}  modified Knuth's balancing technique to endow the code with error-correcting capabilities. 
Their method requires two error-correcting codes as inputs:
an $(m,d)_2$ code $\cC_m$ and a short $(p,d)_2$ balanced code $\cC_p$ where $|\cC_p|\ge m$.
Given a message, they first encode it into a codeword $\vm\in\cC_m$.
Then they find the balancing index $z$ of $\vm$ and flip the first $z$ bits to obtain a balanced $\vc$.
Using $\cC_p$, they encode $z$ into a balanced word $\vp$ and the resulting codeword is $\vc\vp$.
Since both $\cC_m$ and $\cC_p$ has distance $d$, the resulting code has minimum distance $d$.

Now, this method introduces $p$ {\em additional} redundant bits and
since $p$ is necessarily at least $d$, the method introduces more than $\log_2 n$ bits of redundancy when $d$ is big. 
Furthermore, the method requires the existence of a short balanced code $\cC_p$. 
We overcome this obstacle in our next two constructions.
Specifically, Construction~\ref{constr-bin-bal} and~\ref{constr-gc-bal} 
require only a cyclic code and a linear codes, respectively.
Both constructions do {\em not} require short balanced codes and
introduces only $\log_2 n +1$ {\em additional} bits of redundancy, regardless the value of $d$.

\subsection{Binary Balanced Error-Correcting Codes}

Let $n$ be odd. In contrast with Knuth's balancing technique, 
we {\em always flip the first $(n+1)/2$ bits} of a word $\va$. 
However, this does not guarantee a balanced word. 
Nevertheless, if we consider all cyclic shifts of $\va$, i.e. $\shift^i(\va)$ for $i\in\bbracket{n}$, 
then flipping the first $(n+1)/2$ bits of one of these shifts must yield a balanced word.

Formally, let $\phi: \F_2^n \to \F_2^n$ be the map where $\phi(\va)=\va+1^{(n+1)/2}0^{(n-1/2)}$.
In other words, the map $\phi$ flips the first $(n+1)/2$ bits of $\va$.
For $\va\in \F_2^n$, denote its Hamming weight as $\wt(\va)$.
Let $\wt_1(\va)$ be the Hamming weight of the first $(n+1)/{2}$ bits
and $\wt_2(\va)$ be the Hamming weight of the last $(n-1)/{2}$ bits.
So, we have $\wt(\va)=\wt_1(\va)+\wt_2(\va)$. 
We have the following crucial lemma.

\begin{lemma}\label{lemma-flip}
Let $n$ be  odd. For $\va\in \F_2^n$, we can find  $i\in \llbracket n \rrbracket$ such that  $\phi(\shift^i(\va))$ has weight either $(n-1)/2$ or $(n+1)/2$. 
\end{lemma}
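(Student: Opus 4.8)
The plan is to track how the quantity $\wt(\phi(\shift^i(\va)))$ changes as $i$ ranges over $\bbracket{n}$, and argue by a discrete intermediate-value argument that it must hit one of the two target values $(n\pm 1)/2$. First I would fix $\va$ and define $f(i) = \wt(\phi(\shift^i(\va)))$ for $i \in \bbracket{n}$. Observe that $\phi$ flips exactly the first $(n+1)/2$ coordinates, so $f(i) = \wt_1(\shift^i(\va)) \cdot (-1) + (n+1)/2 - (\text{something})$; more precisely, flipping the first $(n+1)/2$ bits of a word $\vu$ gives weight $\frac{n+1}{2} - \wt_1(\vu) + \wt_2(\vu)$, where $\wt_1,\wt_2$ are the weights of the first $(n+1)/2$ and last $(n-1)/2$ bits respectively. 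So $f(i) = \frac{n+1}{2} - \wt_1(\shift^i(\va)) + \wt_2(\shift^i(\va)) = \frac{n+1}{2} + \wt(\va) - 2\,\wt_1(\shift^i(\va))$, using $\wt_1 + \wt_2 = \wt(\va)$ for every cyclic shift.

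The key step is then to understand the step sizes of the map $i \mapsto \wt_1(\shift^i(\va))$. Going from $\shift^i(\va)$ to $\shift^{i+1}(\va)$ shifts every coordinate one position to the right (cyclically), so the window consisting of the first $(n+1)/2$ positions loses whatever symbol was in position $(n+1)/2$ and gains whatever symbol rotates into position $1$. Each of these is a single bit, so $\wt_1(\shift^{i+1}(\va)) - \wt_1(\shift^{i}(\va)) \in \{-1, 0, +1\}$. Consequently $f(i+1) - f(i) \in \{-2, 0, +2\}$: the sequence $f(0), f(1), \ldots, f(n-1)$ changes by at most $2$ at each step, and it is "cyclic" in the sense that $f(n)$ would equal $f(0)$ again.

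Now I would show $f$ cannot avoid both values $(n-1)/2$ and $(n+1)/2$. Note $(n+1)/2$ is the average of the two targets and the targets are the two integers nearest to $n/2$. Summing $f(i)$ over all $i \in \bbracket{n}$: each coordinate $a_j$ of $\va$ lies in the "first $(n+1)/2$ positions" window for exactly $(n+1)/2$ of the $n$ shifts, so $\sum_i \wt_1(\shift^i(\va)) = \frac{n+1}{2}\wt(\va)$, giving $\sum_i f(i) = n\cdot\frac{n+1}{2} + n\,\wt(\va) - (n+1)\wt(\va) = \frac{n(n+1)}{2} - \wt(\va)$. Hence the average value of $f$ is $\frac{n+1}{2} - \frac{\wt(\va)}{n}$, which lies strictly between $\frac{n-1}{2}$ and $\frac{n+1}{2}$ whenever $0 < \wt(\va) < n$ (the boundary cases $\wt(\va) \in \{0, n\}$ are handled directly: then $\va$ is constant, and flipping the first $(n+1)/2$ bits of any shift gives weight exactly $(n+1)/2$ or $(n-1)/2$). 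Since the average lies strictly between the two targets, some $f(i)$ is $\le (n-1)/2$ and some $f(i')$ is $\ge (n+1)/2$; walking from $i$ to $i'$ along the sequence, which moves in steps of size at most $2$ and where consecutive values have the same parity precisely when... — here one must be slightly careful, since steps of $\pm 2$ could in principle jump over a target. The resolution is that $f(i)$ and $(n+1)/2$ always have the same parity is false in general, so instead I argue: $f$ takes a value $\le (n-1)/2$ and a value $\ge (n+1)/2$ among consecutive (cyclically) indices, and since the gap between these two targets is $1$ while the step size is at most $2$, at the transition step either $f$ lands exactly on $(n-1)/2$ or on $(n+1)/2$, or it steps from $(n-1)/2-k$ directly to $(n+1)/2-k+2 = (n-1)/2-k+3$...

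The main obstacle, and the step I expect to require the most care, is exactly this last parity/step-size bookkeeping: a step of $\pm 2$ raises the worry of leaping over the target band $\{(n-1)/2, (n+1)/2\}$. The clean way around it is to track instead $\wt_1(\shift^i(\va))$ directly, which changes by at most $1$ per step, and show that $\wt_1$ must take the value $\frac{1}{2}\left(\frac{n+1}{2} + \wt(\va) - \frac{n\pm 1}{2}\right)$ for an appropriate sign — i.e. reduce the whole problem to: a cyclically-indexed integer sequence changing by at most $1$ per step, whose average is $\frac{(n+1)\wt(\va)}{2n}$, must attain every integer value between its min and its max, in particular the integer(s) nearest that average, and those nearest integers are exactly what make $f$ equal $(n-1)/2$ or $(n+1)/2$. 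This is a clean discrete IVT and sidesteps the parity issue entirely. I would therefore restructure the argument to work with $\wt_1$ throughout rather than with $f$.
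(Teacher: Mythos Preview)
Your final plan --- track $g(i)\triangleq \wt_1(\shift^i(\va))$, note it changes by at most $1$ per step, and apply a discrete intermediate value argument to hit the target value $W$ making $f(i)\in\{(n\pm1)/2\}$ --- is exactly the paper's approach. The paper also reduces everything to finding $i$ with $\wt_1(\shift^i(\va))$ equal to a specific integer (namely $w$ when $\wt(\va)=2w$ and $w+1$ when $\wt(\va)=2w+1$), and then uses the step-by-$1$ property of $\wt_1$.

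The one genuine difference is \emph{how} each argument shows the target $W$ lies between the minimum and maximum of $g$. The paper compares two specific shifts: it sets $\va'=\shift^{(n+1)/2}(\va)$ and observes that $\wt_1(\va')\in\{\wt_2(\va),\wt_2(\va)+1\}=\{\wt(\va)-\wt_1(\va),\,\wt(\va)-\wt_1(\va)+1\}$, so $\wt_1(\va)$ and $\wt_1(\va')$ sit on opposite sides of $\wt(\va)/2$, bracketing $W$ directly. You instead compute the \emph{average} $\frac{1}{n}\sum_i g(i)=\frac{(n+1)\wt(\va)}{2n}$ and argue that $\lfloor\text{avg}\rfloor$ and $\lceil\text{avg}\rceil$ are both attained. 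That is valid, but your last sentence (``those nearest integers are exactly what make $f$ equal $(n-1)/2$ or $(n+1)/2$'') is asserted, not checked: you still owe the small computation that the target $W=\lceil \wt(\va)/2\rceil$ coincides with $\lfloor\text{avg}\rfloor$ when $\wt(\va)$ is even and with $\lceil\text{avg}\rceil$ when $\wt(\va)$ is odd. This is true (since $\text{avg}=\wt(\va)/2+\wt(\va)/(2n)$ and $0\le \wt(\va)/(2n)\le 1/2$), but write it out. The paper's complementary-shift bracketing sidesteps this bookkeeping and is a bit cleaner; your averaging route is perfectly sound once that link is made explicit.
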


\begin{proof} 
Let $\va'=\shift^{(n+1)/2}(\va)$. Then the first $(n-1)/2$ bits of $\va'$ are exactly the last $(n-1)/2$ bits of $\va$ and so $\wt_2(\va)\leq \wt_1(\va')\leq \wt_2(\va)+1$. 

We first consider the case when $\wt(\va)$ is even. Assume that $\wt(\va)=2w$. If $\wt_1(\va)\leq w$, then $\wt_1(\va')\geq \wt_2(\va)=2w-\wt_1(\va)\geq w$. Note that shifting the components of $\va$ once  only increases or decreases the value of $\wt_1(\va)$ by at most one.  It follows that we can find an integer $i$ such that $\wt_1(\shift^i(\va))=w$, and so 
\begin{align*}
\wt(\phi(\shift^i(\va)))& = \wt_1(\phi(\shift^i(\va)))+\wt_2(\phi(\shift^i(\va)))\\
 &=   ((n+1)/2-w)+w=(n+1)/2.
\end{align*} 
Similarly, if $\wt_1(\va)>w$, since $\wt_1(\va')\leq \wt_2(\va)+1=2w-\wt_1(\va)+1\leq w$, we can still find $i$ such that $\wt_1(\shift^i(\va))=w$ and  $\wt(\phi(\shift^i(\va)))=(n+1)/2.$

Next, we assume that the weight is odd, or, $\wt(\va)=2w+1$. If $\wt_1(\va)<w+1$, then $\wt_1(\va')\geq \wt_2(\va)=2w+1-\wt_1(\va)\geq w+1$; if $\wt_1(\va)\geq w+1$, then $\wt_1(\va')\leq \wt_2(\va)+1=2w+1-\wt_1(\va)+1\leq w+1$. In both cases we can always find  $i$ such that $\wt_1(\shift^i(\va))=w+1$, and so
\begin{align*}
\wt(\phi(\shift^i(\va)))& = \wt_1(\phi(\shift^i(\va)))+\wt_2(\phi(\shift^i(\va)))\\
 &=   ((n+1)/2-w-1)+w=(n-1)/2.\qedhere
\end{align*} 
\end{proof}

\noindent{\bf Remark.}  
In Lemma~\ref{lemma-flip}, we show that we can balance some shift of $\va$ by flipping its first $(n+1)/2$ bits. 
In fact, we can also balance a shift of $\va$ (not necessary the same shift) by flipping its first $(n-1)/2$ bits. 
This observation is used in the construction of DNA computing codes.
\vspace{1mm}

Before we describe our construction, we introduce the notion of cyclic equivalence classes.
Given a cyclic code $\cB$ of length $n$, we define the following equivalence relation: $\va\equivcyclic\vb$ if and only if 
$\va=\shift^i(\vb)$ for some $i\in [n]$; and partition the codewords $\cB$ into classes.
We use $\cB/\equivcyclic$ to denote a set of representatives.

\begin{construction}\label{constr-bin-bal}
Let $n$ be an odd integer.

{\sc Input}: An $[n,k,d]_2$-cyclic code $\cB$.\\
{\sc Output}: A balanced $(n+1,d')_2$-code $\cC$ of size at least $2^k/n$ where $d'=2\ceil{d/2}$.

\begin{itemize}
\item Let $\vu_1, \vu_2, \ldots, \vu_m$ be the set of representatives $\cB/\equivcyclic$. 
\item For each $\vu_i$, find $j_i\in [n]$ such that 
$\phi(\shift^{j_i}(\vu_i))$ has weight $(n-1)/2$ or $(n+1)/2$. 
 \item For $i\in [m]$, append a check bit to $\phi(\shift^{j_i}(\vu_i))$ so that its weight is $(n+1)/2$ and denote the modified vector as $\vv_i$. In other words, 
\begin{equation*}
\begin{split}
\vv_i = & \begin{cases} \phi(\shift^{j_i}(\vu_i))0, \textup{\ \ if $\wt(\phi(\shift^{j_i}(\vu_i)))=\frac{n+1}{2}$;} \\
\phi(\shift^{j_i}(\vu_i))1, \textup{\ \ if $\wt(\phi(\shift^{j_i}(\vu_i)))=\frac{n-1}{2}$.}\\
\end{cases}
\end{split}
\end{equation*}
\item Set $\cC=\{\vv_i:1\leq i\leq m\}$.
\end{itemize}
\end{construction}

\begin{theorem}\label{thm-balancedecc}
Construction~\ref{constr-bin-bal} is correct. 
In other words,  $\cC$ is a balanced $(n+1,2\ceil{d/2})_2$-code of size at least $2^k/n$.
\end{theorem}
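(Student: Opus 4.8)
The plan is to verify the three claims in order: that $\cC$ is balanced, that it has the stated size, and that it has minimum distance $2\ceil{d/2}$.

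First, \emph{balancedness}. Each $\vv_i$ has length $n+1$, which is even. By Lemma~\ref{lemma-flip}, for each representative $\vu_i$ we can indeed find $j_i\in\bbracket{n}$ so that $\phi(\shift^{j_i}(\vu_i))$ has weight $(n-1)/2$ or $(n+1)/2$; the construction then appends a single check bit to bring the total weight to exactly $(n+1)/2 = (n+1)/2$, which is $\ceil{(n+1)/2}$ since $n+1$ is even. Hence every codeword of $\cC$ has exactly half its bits equal to $1$, so $\cC$ is balanced. This step is routine.

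Second, \emph{size}. The map $\vu_i\mapsto\vv_i$ is applied to the $m=|\cB/\equivcyclic|$ cyclic-class representatives. Since $\cB$ is an $[n,k,d]_2$-cyclic code, $|\cB|=2^k$, and each cyclic equivalence class has size at most $n$, so $m\ge 2^k/n$. It then suffices to show that the assignment $\vu_i\mapsto\vv_i$ is injective, so that $|\cC|=m\ge 2^k/n$. Injectivity will follow from the distance bound in the next step (distinct codewords are at positive distance), so strictly I would prove the distance claim first, or simply note that $\vv_i$ determines $\phi(\shift^{j_i}(\vu_i))$ by deleting the last bit, hence determines $\shift^{j_i}(\vu_i)$ by applying $\phi$ again, hence determines the cyclic class of $\vu_i$; since the $\vu_i$ are distinct class representatives, the $\vv_i$ are distinct.

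Third, \emph{minimum distance} — this is the crux. Take two distinct codewords $\vv_i,\vv_j\in\cC$ with $i\ne j$. Write $\va = \shift^{j_i}(\vu_i)$ and $\vb = \shift^{j_j}(\vu_j)$; these are distinct codewords of $\cB$ (they lie in different cyclic classes), so $d(\va,\vb)\ge d$. Now $\phi$ adds the \emph{same} fixed vector $1^{(n+1)/2}0^{(n-1)/2}$ to both $\va$ and $\vb$, so $d(\phi(\va),\phi(\vb)) = d(\va,\vb)\ge d$. Appending check bits can only increase the distance, so $d(\vv_i,\vv_j)\ge d$. To upgrade $\ge d$ to $\ge 2\ceil{d/2}$ — which matters only when $d$ is odd, where $2\ceil{d/2}=d+1$ — I would invoke the standard parity-extension argument: both $\vv_i$ and $\vv_j$ have the same (even) weight $(n+1)/2$, hence the same parity, so $\vv_i+\vv_j$ has even weight, i.e. $d(\vv_i,\vv_j)$ is even; an even integer that is $\ge d$ is $\ge d+1 = 2\ceil{d/2}$ when $d$ is odd, and $\ge d = 2\ceil{d/2}$ when $d$ is even. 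This gives $d(\vv_i,\vv_j)\ge 2\ceil{d/2}$ in all cases.

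The main obstacle is the distance step, and within it the subtle point is ensuring that the two shifted words $\shift^{j_i}(\vu_i)$ and $\shift^{j_j}(\vu_j)$ are genuinely distinct codewords of $\cB$ with distance at least $d$ — this is exactly why the construction ranges over cyclic-class representatives rather than over all of $\cB$, and why $\cB$ is taken to be cyclic (so that every shift stays in $\cB$). The parity argument for the "$+1$" is a clean finish once that is in place. I do not expect any difficulty in the balancedness or size arguments.
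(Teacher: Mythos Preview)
Your proposal is correct and follows essentially the same approach as the paper: the paper's proof also argues that $\phi$ preserves Hamming distance (so the minimum distance is at least $d$), uses the bound $m\ge 2^k/n$ from the cyclic-class count, and invokes the fact that any two balanced words have even Hamming distance to upgrade odd $d$ to $d+1$. Your version is simply more explicit about injectivity of $\vu_i\mapsto\vv_i$ and about why $\shift^{j_i}(\vu_i)$ and $\shift^{j_j}(\vu_j)$ are distinct codewords of $\cB$, points the paper leaves implicit.
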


\begin{proof}
It is easy to see that $\cC$ is a balanced code of length $n+1$ and size $m$.  
Since the $m$ cyclic classes are pairwise disjoint and each of them consists of at most $n$ codewords,
we have that  $m \geq |\cB|/n=2^k/n$.

Since $\cB$ is an $[n,k,d]_2$-cyclic code and the map $\phi$ does not change the distance between any two vectors, the minimum distance of $\cC$ is at least $d$. Moreover, when $d$ is odd, the minimum distance is at least $d+1$, as the distance between any two binary balanced words is even.
\end{proof}

Let $d$ be even and set $t=d/2-1$.
If we apply Construction~\ref{constr-bin-bal} to the family of primitive narrow-sense BCH $[n',k,d]_2$-cyclic codes, where $n'=2^m-1=n-1$.
we obtain a family of balanced codes with redundancy at most $(t+1)\log_2 n+1$.

\begin{corollary}\label{cor:bin-balanced}
Let $d$ be even. 
There exists a family of $(n,d)_2$-balanced codes with redundancy at most $(t+1)\log_2 n+1$, where $t=d/2-1$.
\end{corollary}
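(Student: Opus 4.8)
The plan is to instantiate Construction~\ref{constr-bin-bal} with the primitive narrow-sense BCH codes of Theorem~\ref{primitiveBCH} and then just bookkeep the parameters. First I would set the target block length to be $n$ (even, since we want to land on an even-length balanced code) and work backwards: Construction~\ref{constr-bin-bal} outputs a code of length one more than its input, so I need a binary cyclic code of odd length $n' = n-1 = 2^m - 1$ for the appropriate $m$. For such a code to exist as a primitive narrow-sense BCH code with minimum distance (at least) $d$, Theorem~\ref{primitiveBCH} applies with this $n'$ and with $t = \ceil{(d-1)/2}$; since $d$ is even, $t = d/2$. Hmm — but the corollary states $t = d/2 - 1$, so I should be careful here: the resulting code from Construction~\ref{constr-bin-bal} has distance $d' = 2\ceil{d/2}$, so to get a final distance of $d$ (even) it suffices to feed in a cyclic code of distance $d-1$; then $t = \ceil{(d-2)/2} = d/2 - 1$, matching the corollary. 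So the right move is to take the primitive narrow-sense BCH code of designed distance $d-1$.

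Concretely, the steps in order: (1) Given even $d$, pick $m$ large enough that $n = 2^m$, and apply Theorem~\ref{primitiveBCH} with parameters $2^m - 1$ and designed distance $d - 1$, obtaining an $[n', k, \ge d-1]_2$-cyclic code $\cB$ with $n' = n - 1$ and redundancy $n' - k \le t \log_2(n'+1) = t\log_2 n$, where $t = \ceil{(d-2)/2} = d/2 - 1$. (2) Feed $\cB$ into Construction~\ref{constr-bin-bal}; by Theorem~\ref{thm-balancedecc} the output $\cC$ is a balanced code of length $n' + 1 = n$, with minimum distance $2\ceil{(d-1)/2} = d$ (using that $d-1$ is odd, so the distance rounds up to $d$), and size at least $2^k / n'$. (3) Compute the redundancy of $\cC$: it is $n - \log_2|\cC| \le n - \log_2(2^k/n') = n - k + \log_2 n' = (n' + 1 - k) + \log_2 n' \le (t\log_2 n) + 1 + \log_2(n-1) < (t+1)\log_2 n + 1$. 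Assembling, $\cC$ is an $(n, d)_2$-balanced code with redundancy at most $(t+1)\log_2 n + 1$, which is exactly the claim. Finally I would note that as $m$ ranges over all sufficiently large integers (subject to $d - 1 \le 2^m - 1$, i.e. $m$ not too small) this yields an infinite family, so "there exists a family" is justified.

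The one genuinely delicate point — really the only place where care is needed — is the interplay between the "designed distance" bookkeeping and the $d' = 2\ceil{d/2}$ rounding in Theorem~\ref{thm-balancedecc}. Specifically I must feed in designed distance $d-1$ (not $d$): the rounding in the balanced construction then gives the final distance $d$ for free, and this is also what makes the exponent $t = d/2 - 1$ rather than $d/2$. If one naively uses designed distance $d$, the redundancy bound degrades to $(d/2 + 1)\log_2 n + 1$, so getting the constant right in the statement hinges on exploiting that binary balanced words are at even Hamming distance. Everything else is a routine substitution of the inequalities from Theorems~\ref{primitiveBCH} and~\ref{thm-balancedecc}; I do not anticipate any structural obstacle, only the need to state the asymptotic/infinite-family caveat (it holds for all $m$ with $2^m - 1 \ge d - 1$) cleanly.
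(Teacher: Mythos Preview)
Your proposal is correct and follows essentially the same approach as the paper: apply Construction~\ref{constr-bin-bal} to the primitive narrow-sense BCH codes of Theorem~\ref{primitiveBCH} with $n'=2^m-1=n-1$, and bookkeep the redundancy. In fact you are more careful than the paper's own one-line justification: the paper simply writes ``$[n',k,d]_2$-cyclic codes'' and asserts $t=d/2-1$, whereas you correctly observe that to achieve $t=d/2-1$ one must feed in a BCH code of (odd) designed distance $d-1$ and recover the final distance $d$ via the even-distance rounding $d'=2\ceil{(d-1)/2}=d$ in Theorem~\ref{thm-balancedecc}.
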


In contrast, if we apply the technique of Weber \etal \cite{Weberetal2012} to the same family of codes, 
the balanced $(n,d)_2$-codes have redundancy approximately $(t+1)\log_2 n +(t+1/2)\log_2 \log_2 n$.
Hence, we reduce the redundancy by $(t+1/2)\log_2 \log_2 n$ bits.

Finally, we consider the encoding complexity for our construction. 
Given a vector $\vu$, we can find in linear time the index $i$ such that $\wt(\phi(\shift^i(\vu)))\in \{(n-1)/2,(n+1)/2\}$.
Thus, it remains to provide an efficient method to enumerate a set of representatives for the cyclic classes. 
This problem was solved completely by Tavares \etal \cite{Tavaresetal1971,Tavaresetal1973} and 
the solution uses the polynomial representation of cyclic codewords.
Furthermore, the encoding method can be adapted for Constructions~\ref{constr-primer-3} and~\ref{constr-DNA} in the later sections.
Hence, we review Tavares' method in detail and discuss our modifications in Section~\ref{sec:encoding}.

\subsection{\GC-Balanced Error-Correcting Codes}

A direct application of the coupling construction in Lemma~\ref{coupling} and Corollary~\ref{cor:bin-balanced}
yields a family of \GC-balanced $(n,d)_4$-codes with redundancy at most $d\log_4 n$.
However, this construction requires cyclic codes of length $n-1$.

The following construction removes the need for cyclic codes.

\begin{construction} \label{constr-gc-bal}\hfill

{\sc Input}: An $[n+p,n,d]_2$-linear code $\cA$ and

\hspace{12mm}an $(n,d)_2$-code $\cB$ of size $2^pnM$.\\
{\sc Output}: A balanced $(n,d)_4$-code $\cC$ code of size $2^nM$.
\begin{itemize}
\item Given $\vm\in\F_2^n$, let $j_\vm$ be the balancing index of $\vm$ and $\va_m$ be the corresponding balanced word of length $n$.
\item Consider a systematic encoder for $\cA$. For $\va_\vm\in\F_2^{n}$, 
let $\va_\vm\vp_\vm$ be the corresponding codeword in $\cA$.
\item Finally, since $\cB$ is of size $2^pnM$, we may assume without loss of generality an encoder
$\bphi_\cB: [M]\times [n]\times \F_2^p\to \cB$. We set $\vb_\vm=\bphi(i,j_\vm,\vp_\vm)$.
\item Set $\cC\triangleq\{ \Psi(\va_\vm,\vb_\vm) :\, \vm\in \F_2^n, i\in [M] \}$.
\end{itemize}
\end{construction}

\begin{theorem}\label{thm-gc-balanced}
Construction~\ref{constr-gc-bal} is correct. 
In other words,  $\cC$ is a \GC-balanced $(n,d)_4$-code of size at least $2^nM$.
\end{theorem}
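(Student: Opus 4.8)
The plan is to verify the three defining properties of an output code in turn: that every $\Psi(\va_\vm,\vb_\vm)$ is \GC-balanced, that the total number of codewords is $2^nM$ (i.e.\ the map $(\vm,i)\mapsto\Psi(\va_\vm,\vb_\vm)$ is injective), and that the minimum Hamming distance is at least $d$. For the \GC-balanced property I would invoke Lemma~\ref{coupling}(i): by construction $\va_\vm$ is the balanced word obtained from $\vm$ by flipping its first $j_\vm$ bits (Knuth's balancing technique guarantees such an index exists), so $\cC_1$ here is balanced, and hence $\Psi(\va_\vm,\vb_\vm)$ is \GC-balanced. This step is essentially immediate once the ingredients are named.

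For the size, I would argue that the entire pair $(\vm,i)$ can be recovered from $\vc=\Psi(\va_\vm,\vb_\vm)$. From $\vc$ one reads off the binary pair $(\va_\vm,\vb_\vm)$ coordinate-wise via the inverse of $\Psi$. From $\vb_\vm$, since $\bphi_\cB$ is an encoder (hence injective), one recovers $(i,j_\vm,\vp_\vm)$; in particular the balancing index $j_\vm$ is recovered. Knowing $j_\vm$ and $\va_\vm$, one flips the first $j_\vm$ bits of $\va_\vm$ to recover $\vm$. Thus the map is injective and $|\cC|=2^n M$. (One should note the implicit assumption that distinct $\vm$ with the same balancing index yield distinct balanced words $\va_\vm$, which holds because flipping a fixed prefix is a bijection on $\F_2^n$.)

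For the distance, take two distinct pairs $(\vm,i)\ne(\vm',i')$ with images $\vc=\Psi(\va_\vm,\vb_\vm)$ and $\vc'=\Psi(\va_{\vm'},\vb_{\vm'})$. Since $\Psi$ acts coordinate-wise and the four symbols $\A,\T,\C,\G$ correspond to the four bit-pairs, $\vc$ and $\vc'$ differ in coordinate $\ell$ exactly when $(a_\ell,b_\ell)\ne(a'_\ell,b'_\ell)$, so $d(\vc,\vc')\ge\max\{d(\va_\vm,\va_{\vm'}),\,d(\vb_\vm,\vb_{\vm'})\}$. If $\vm\ne\vm'$ then $\va_\vm\ne\va_{\vm'}$ (by the injectivity noted above) — but I must be careful here, because $\va_\vm$ and $\va_{\vm'}$ are not themselves codewords of a distance-$d$ code. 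Instead I would pass to $\cA$: the codewords $\va_\vm\vp_\vm$ and $\va_{\vm'}\vp_{\vm'}$ both lie in $\cA$, and if $\vm\ne\vm'$ they are distinct (the first $n$ coordinates already differ), so they are at distance $\ge d$; moreover $\vb_\vm$ and $\vb_{\vm'}$ contain $\vp_\vm$ and $\vp_{\vm'}$ in the range of $\bphi_\cB$, so one can bound $d(\vc,\vc')\ge d(\va_\vm\vp_\vm,\,\va_{\vm'}\vp_{\vm'})\ge d$ by accounting for disagreements in both the $\va$-part and the $\vp$-part (which lives inside $\vb$). If instead $\vm=\vm'$ but $i\ne i'$, then $\bphi_\cB(i,j_\vm,\vp_\vm)\ne\bphi_\cB(i',j_\vm,\vp_\vm)$ are two distinct codewords of $\cB$, hence at distance $\ge d$, giving $d(\vc,\vc')\ge d$.

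The main obstacle is the distance argument in the case $\vm\ne\vm'$: the balanced prefix-flip destroys the linear-code structure on the $\va$-part, so one cannot directly cite the distance of $\cA$ on those coordinates alone. The fix is to track the redundancy symbols $\vp_\vm$ — which are carried inside $\vb_\vm$ via $\bphi_\cB$ — and combine the $\va$-disagreements with the $\vp$-disagreements to reconstitute the distance of the $\cA$-codeword $\va_\vm\vp_\vm$; every such disagreement forces a disagreement in the corresponding coordinate of $\vc$ versus $\vc'$ (in the $\va$-part directly, and in the $\vp$-part through the first component of $\Psi$ acting on $\vb$). I would write this step out carefully, as it is where the construction's correctness really rests.
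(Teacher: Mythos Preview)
Your treatment of the \GC-balanced property and the size is fine and matches the paper in spirit. The gap is in your distance argument for the case $\vm\ne\vm'$. You assert that $d(\vc,\vc')\ge d(\va_\vm\vp_\vm,\,\va_{\vm'}\vp_{\vm'})$ by ``accounting for disagreements in both the $\va$-part and the $\vp$-part (which lives inside $\vb$)''. But the $\vp$-part does \emph{not} live inside $\vb$ in any coordinate-wise sense: the encoder $\bphi_\cB:[M]\times[n]\times\F_2^p\to\cB$ is an arbitrary injection, not a systematic map, so a disagreement between $\vp_\vm$ and $\vp_{\vm'}$ need not produce a disagreement between $\vb_\vm$ and $\vb_{\vm'}$ at any particular coordinate. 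Concretely, one can have $j_\vm\ne j_{\vm'}$ with $d(\va_\vm,\va_{\vm'})$ and $d(\vp_\vm,\vp_{\vm'})$ both small, while the contribution to $d(\vc,\vc')$ coming from the $\vb$-component is governed only by the fact that $\vb_\vm\ne\vb_{\vm'}$ in $\cB$ --- not by the size of the $\vp$-disagreement. Your claimed inequality simply does not follow.

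The paper avoids this altogether by proving the distance via error correction: given $\hat\vc=\Psi(\hat\va,\hat\vb)$ at distance $\le t=\floor{(d-1)/2}$ from $\vc=\Psi(\va,\vb)$, one has $d(\vb,\hat\vb)\le t$, so $\cB$ decodes $\hat\vb$ to $\vb$; from $\vb$ one extracts $\vp$, and then $\hat\va\vp$ is within $t$ of the $\cA$-codeword $\va\vp$, so $\cA$ decodes it to $\va$. This two-stage decoding is exactly why $\vp$ is stashed inside $\vb$. If you prefer a direct distance argument, the correct case split is on whether $\vb_\vm=\vb_{\vm'}$: if not, they are distinct codewords of $\cB$ and $d(\vc,\vc')\ge d(\vb_\vm,\vb_{\vm'})\ge d$; if so, then $(i,j_\vm,\vp_\vm)=(i',j_{\vm'},\vp_{\vm'})$, forcing $\vm\ne\vm'$ with the same flip index and the same parity $\vp$, whence $d(\va_\vm,\va_{\vm'})=d(\va_\vm\vp_\vm,\va_{\vm'}\vp_{\vm'})\ge d$. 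Either route works; the one you sketched does not.
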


\begin{proof}The size of $\cC$ follows from its definition. 

For all words $\vc=\Psi(\va,\vb)$ in $\cC$, since $\va$ is balanced, we have that $\vc$ is \GC-balanced.
Hence, $\cC$ is \GC-balanced.

Finally, to prove that $\cC$ has distance $d$, we show that $\cC$ can always correct $t=\floor{(d-1)/2}$ errors.
Specifically, let $\vc\in\cC$ and let $\hat{\vc}$ be a word over $\Sigma$ such that $d(\vc,\hat{\vc})\le t$. 
Suppose that $\vc=\Psi(\va,\vb)$ and $\hat{\vc}=\Psi(\hat{a},\hat{b})$.
Then $d(\va,\hat{\va})\le t$ and $d(\vb,\hat{\vb})\le t$.
Since $\vb$ belongs to $\cB$ an $(n,d)_2$-code, we correct the errors in $\hat{\vb}$ to recover $\vb$.

Suppose that $\vb=\phi(i,j,\vp)$. Then we have that $\va\vp$ is a codeword in $\cA$.
Since $\cA$ an $[n+p,n,d]_2$-code, we correct the errors in $\hat{\va}\vp$ to recover $\va\vp$ and 
hence, recover $\va$. Therefore, $\cC$ is an $(n,d)_2$-code.
\end{proof}

\begin{corollary}\label{cor:gc-balanced}
Fix $d$ and set $t=\ceil{(d-1)/2}$. 
There exists an \GC-balanced $(n,d)_4$-code with redundancy at most $(2t+1)\ceil{\log_4 n}+2t$ symbols for sufficiently large $n$. 
\end{corollary}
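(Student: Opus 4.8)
The plan is to instantiate Construction~\ref{constr-gc-bal} with explicit, efficiently constructible inputs $\cA$ and $\cB$ whose redundancies we can control, and then read off the redundancy of the output \GC-balanced code from Theorem~\ref{thm-gc-balanced}. The output code has length $n$ over $\F_4$ and size $2^n M$, so its redundancy in $\F_4$-symbols is $n - \log_4(2^n M) = n/2 - \tfrac12\log_2 M$; thus I must choose $\cB$ of size $2^p n M$ with $\log_2 M$ as large as possible, i.e. I want $\cB$ to have small redundancy as an $(n,d)_2$-code, and I must choose $\cA$ to be an $[n+p,n,d]_2$-linear code with $p$ as small as possible, since $p$ directly inflates the size requirement on $\cB$.

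First I would fix $d$ and set $t = \ceil{(d-1)/2}$, and take $\cA$ to be a primitive narrow-sense BCH code of designed distance $d$: by Theorem~\ref{primitiveBCH}, for the appropriate block length $\tilde n = 2^{\tilde m}-1 \geq n$ we get redundancy at most $t \tilde m = t\log_2(\tilde n+1)$; shortening this code to an $[n+p, n, d]_2$-code we may take $p \le t \ceil{\log_2 n} + O(1)$ — more precisely, choosing $\tilde m = \ceil{\log_2(n+1)}$ or so, $p \le t\ceil{\log_2 n}$ works for large $n$ (the $O(1)$ slack absorbs the shortening and ceiling effects). Next I would take $\cB$ to be another (shortened) primitive narrow-sense BCH code of length $n$ and distance $d$, with redundancy at most $t\ceil{\log_2 n} + c$ for a small constant $c$; such a code has size $|\cB| \ge 2^{n - t\ceil{\log_2 n} - c}$. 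For Construction~\ref{constr-gc-bal} I need $|\cB| = 2^p n M$, so I may take $M = |\cB|/(2^p n)$ (discarding codewords if $|\cB|$ is not exactly divisible, which only costs an additive constant), giving $\log_2 M \ge n - t\ceil{\log_2 n} - c - p - \log_2 n \ge n - 2t\ceil{\log_2 n} - \log_2 n - O(1) \ge n - (2t+1)\ceil{\log_2 n} - O(1)$.

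Then by Theorem~\ref{thm-gc-balanced} the output is a \GC-balanced $(n,d)_4$-code of size at least $2^n M$, hence with $\F_4$-redundancy
\[
n - \log_4(2^n M) = \tfrac{n}{2} - \tfrac12\log_2 M \le \tfrac{n}{2} - \tfrac12\bigl(n - (2t+1)\ceil{\log_2 n}\bigr) + O(1) = \tfrac{2t+1}{2}\ceil{\log_2 n} + O(1) = (2t+1)\ceil{\log_4 n} + O(1),
\]
using $\ceil{\log_2 n} = 2\ceil{\log_4 n}$ up to a unit. Tracking the constants carefully — the shortening loss on both BCH codes, the rounding in forming $M$, and the gap between $\ceil{\log_2 n}$ and $2\ceil{\log_4 n}$ — yields the claimed bound $(2t+1)\ceil{\log_4 n} + 2t$ for sufficiently large $n$.

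The main obstacle is bookkeeping rather than conceptual: I must verify that shortened primitive BCH codes of the exact length $n$ (and $n+p$) and distance $d$ exist with the stated redundancy, and that the divisibility adjustment turning $|\cB|$ into an integer multiple $2^p n M$ costs only an additive constant in redundancy — both are routine but need the "sufficiently large $n$" hypothesis to absorb ceilings and the $O(1)$ terms into the clean $+2t$ bound. I would also note that both $\cA$ and $\cB$, being BCH codes, are explicit and admit efficient encoding/decoding, so the resulting \GC-balanced code inherits efficiency, consistent with the paper's theme; the encoder $\bphi_\cB$ required by the construction can be realized by composing a systematic BCH encoder with a fixed bijection from $[M]\times[n]\times\F_2^p$ onto a subset of $\cB$.
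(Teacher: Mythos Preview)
Your proposal is correct and takes essentially the same approach as the paper: choose $\cA$ and $\cB$ to be (shortened) BCH codes with redundancy roughly $t\log_2 n$ each, plug them into Construction~\ref{constr-gc-bal}, and compute the resulting $\F_4$-redundancy as $(n-\log_2 M)/2$. The paper's own proof is a two-line sketch that simply asserts the existence of an $[n+p,n,d]_2$-code with $p\le t\ceil{\log_2 n}+t$ and an $[n,k,d]_2$-code with $n-k\le t\ceil{\log_2 n}+t$ and then reads off the bound; your write-up fills in exactly the bookkeeping that the paper suppresses.
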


\begin{proof}
For sufficiently large $n$, we choose an $[n+p,n,d]_2$- and an $[n,k,d]_2$-linear code so that
$p\le t\ceil{log_2 n}+t$ and $n-k\le t\ceil{log_2 n}+t$.
Then applying Construction~\ref{constr-gc-bal}, 
we obtain a \GC-balanced code with at most $(2t+1)\ceil{\log_4 n}+2t$ redundant symbols.
\end{proof}


\section{Primer Codes}
\label{sec:primer}

In this section, we provide three constructions of primer codes: 
one direct modification of Yazdi \etal{} that yields primer codes for general parameters 
and the other two that rely on cyclic codes and have lower redundancy for a specific set of parameters.

\subsection{$\kappa$-Mutually Uncorrelated Codes that Avoid Primer Dimer Byproducts of Length $f$}
Yazdi \etal \cite{Yazdietal2018.mu} constructed a set of mutually uncorrelated primers 
that avoids primer dimer byproducts.

\begin{definition}A code $\cA\subseteq \F_2^n$ is {\em $\ell$-APD-constrained}
if for each $\va\in\cA$, 
\begin{itemize}
\item $\va$ ends with one,
\item $\va$ contains $01^\ell0$ as a substring exactly once,
\item $\va$ does not contain $0^\ell$ as a substring.
\end{itemize}
\end{definition}

\begin{lemma}[{Yazdi \etal \cite[Lemma 5]{Yazdietal2018.mu}}]\label{constr-yazdi-primer}
Let $n$, $f$, $\ell$, $r$ be positive integers such that $n=rf+\ell+1$ and $\ell+3\leq f$.
Suppose that $\cA$ is an $\ell$-APD-constrained code of length $f$. Then the code
\[ \C=\{0^\ell1\va_1\va_2\ldots \va_r: \va\in \cA^r\} \]
is both MU and $(2f)$-APD and its size is $|\cA|^r$.
\end{lemma}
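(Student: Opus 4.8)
The plan is to verify the three claimed properties of $\C$ directly from the definition of $\ell$-APD-constrained codes, exploiting the rigid block structure of the codewords. Every codeword of $\C$ has the form $\vc = 0^\ell 1 \va_1 \va_2 \cdots \va_r$ where each $\va_i \in \cA$ has length $f$, so $\vc$ has length $\ell + 1 + rf = n$, and $|\C| = |\cA|^r$ is immediate since the map $(\va_1,\dots,\va_r) \mapsto \vc$ is injective (the blocks are read off at fixed positions). So the only real work is establishing the MU and $(2f)$-APD properties.

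\emph{Mutually uncorrelated.} I would show that no nonempty proper prefix of a codeword $\vc = 0^\ell 1 \va_1\cdots\va_r$ equals a suffix of a codeword $\vc' = 0^\ell 1 \va'_1\cdots\va'_r$. A prefix of $\vc$ of length $\ell' < n$ starts with $0^{\min(\ell',\ell)}$, and if $\ell' > \ell$ it is $0^\ell 1 (\text{prefix of } \va_1\cdots\va_r)$. The key point is that $0^\ell$ occurs in $\vc$ only as the leading block: since each $\va_i$ avoids $0^\ell$ as a substring and ends with a $1$, a run of $\ell$ consecutive zeros cannot fit inside any single $\va_i$, and it cannot straddle a block boundary $\va_i\va_{i+1}$ or the boundary $1\va_1$ because $\va_i$ ends in $1$ and the symbol before $\va_1$ is $1$ — so the run $0^\ell$ starting at position $1$ is the unique occurrence of $0^\ell$ in $\vc$, and moreover no occurrence of $0^\ell$ begins at any position $\ge 2$. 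Now suppose a proper prefix of $\vc$ of length $\ell'$ equals a suffix of $\vc'$ of length $\ell'$. If $\ell' \le \ell$, the suffix of $\vc'$ of length $\ell' \le \ell \le f$ lies inside $\va'_r$ (using $\ell+3\le f$, the last block is long enough and $\va'_r$ ends in $1$, so its last $\ell' \ge 1$ symbols cannot all be zero if $\ell' \ge 1$) — contradiction, since the prefix $0^{\ell'}$ is all zeros. If $\ell < \ell' < n$, then $\vc'$ contains $0^\ell 1$ as a substring at a position $> 1$ (namely at the start of the matched suffix, which sits at position $n - \ell' + 1 \ge 2$), contradicting uniqueness of the $0^\ell$-occurrence in $\vc'$. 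Hence $\C$ is MU.

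\emph{Avoiding primer dimer byproducts of length $2f$.} I must show that for any two codewords $\vc, \vc'$ (not necessarily distinct) and any length-$2f$ substring $\vs$ of $\overline{\vc}$, neither $\vs$ nor its reverse $\vs^r$ appears as a substring of $\vc'$. The idea is that a length-$2f$ window in $\vc$ spans at least one full block $\va_i$, which (by $\cA$ being $\ell$-APD-constrained) contains $01^\ell 0$ exactly once; complementing turns $01^\ell0$ into $10^\ell1$, and the reverse of $01^\ell0$ is again $01^\ell0$. So $\vs$ contains $10^\ell 1$ as a substring, and $\vs^r$ contains $10^\ell 1$ as well (reverse of $10^\ell1$ is $10^\ell1$). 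But $10^\ell 1$ cannot occur in $\vc'$: a run of $\ell$ zeros in $\vc'$ occurs only as the prefix $0^\ell$, which is \emph{not} preceded by a $1$ (nothing precedes it). One must also handle the boundary effects: a length-$2f$ window positioned near the front might overlap the leading $0^\ell 1$ rather than a full $\va_i$ — here I would argue that since $2f \ge f + \ell + 3 > \ell + 1$, any length-$2f$ window that includes part of the prefix still extends past it far enough to contain a full block $\va_1$, and conversely a window entirely within $\va_1\cdots\va_r$ of length $2f$ always contains some complete $\va_i$ because the block length is $f$. I expect this boundary bookkeeping — carefully checking that every length-$2f$ window contains a complete block, and tracking exactly which positions the run $0^\ell$ can occupy in a codeword — to be the main obstacle; the algebraic content is light, but the case analysis on window positions relative to block boundaries needs to be done cleanly.
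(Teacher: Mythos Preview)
The paper does not supply a proof of this lemma; it is quoted from Yazdi \etal{} and used as a black box, so there is no in-paper argument to compare against. Your argument is correct and is the natural direct verification. The key structural fact---that the string $0^\ell$ occurs in any codeword $\vc'=0^\ell1\va'_1\cdots\va'_r$ only as the leading block, since each $\va'_i$ forbids $0^\ell$ as a substring and ends in $1$---immediately yields the MU property via your two cases on $\ell'$. For the $(2f)$-APD property, the argument that every length-$2f$ window contains a full block $\va_i$ (hence the pattern $01^\ell0$, whose complement $10^\ell1$ is a palindrome and cannot occur anywhere in a codeword) is sound. The boundary bookkeeping you flag as the main obstacle is indeed routine: a window starting in the prefix $0^\ell1$ still reaches past position $\ell+1+f$ because $2f\ge f+\ell+3>f+\ell+1$, so it swallows all of $\va_1$; a window lying entirely inside $\va_1\cdots\va_r$ of length $2f$ necessarily contains some full length-$f$ block by a straightforward interval argument. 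No gaps.
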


The following construction equips the primer code in Lemma~\ref{constr-yazdi-primer} with error-correcting capabilities.
\begin{construction} \label{constr-primer1}
Let $f,r,d$ and $\ell$ be positive integers where $\ell+3\leq f$ and $p+\floor{p/(\ell-1)}+1\le f$. 

{\sc Input}: An $[rf+p,rf,d]_2$-linear code $\cB$ and

\hspace{12mm}an $\ell$-APD-constrained code $\cA$ of length $f$.\\
{\sc Output}: An $(n,d;1,2f)$-primer code $\cC$ of length $n=rf+p+\floor{p/(\ell-1)}+\ell+2$
and size $|\cA|^r$.
\begin{itemize}
\item Consider a systematic encoder for $\cB$. 
\item For every message $\va\in\F_2^{rf}$, let $\va\vp_\va$ be the corresponding codeword in $\cB$.
\item For the vector $\vp_\va$, we insert a one after every $(\ell-1)$ bits and append a one. 
In other words, we insert $\floor{p/(\ell-1)}+1$ ones
and we call the resulting vector $\vp_\va'$.
\item Set $\cC\triangleq\{ 01^\ell\va\vp_\va'\, :\, \va\in\cA^r \}$.
\end{itemize}
\end{construction}

Next, for fixed values of $r$ and $d$, 
we describe a family of $(n,d; 1,f)_2$-primer codes with $n=rf+o(f)$ and 
redundancy at most $t\log_2 n+O(1)$, where $t=\ceil{(d-1)/2}$.
Specifically, we provide the constructions for the input codes $\cB$ and $\cA$ in Construction~\ref{constr-primer1}.

\begin{lemma}For $\ell\ge 8$, set $f= 2^{\ell-4}$.
Then there exists an $\ell$-APD-constrained code $\cA$ of size $(f-\ell-2)2^{f-\ell-4}$.
Furthermore, there is a linear-time encoding algorithm that maps $[f-\ell-2]\times \F_2^{f-\ell-4}$ to $\cA$. 
\end{lemma}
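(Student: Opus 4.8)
The plan is to build an $\ell$-APD-constrained code $\cA$ of length $f=2^{\ell-4}$ by a reserved-block construction: split each codeword of length $f$ into a short "marker" region that guarantees the structural constraints and a long "payload" region that carries the free information bits, and count carefully so that the payload region can hold $f-\ell-4$ arbitrary bits together with an extra $\lceil\log_2(f-\ell-2)\rceil$-bit position index, matching the claimed size $(f-\ell-2)2^{f-\ell-4}$. Concretely, I would fix a format of the form (block of new bits) $\cdot$ $(01^\ell0)$ $\cdot$ (block of new bits) $\cdot 1$, where the single occurrence of the substring $01^\ell0$ is placed at one of $f-\ell-2$ admissible positions (encoded by the index in $[f-\ell-2]$), the word ends in $1$ as required, and the remaining $f-\ell-4$ positions are filled by the free message bits from $\F_2^{f-\ell-4}$ — after a light "stuffing" transformation that prevents any of the three forbidden patterns ($0^\ell$ anywhere, and a second copy of $01^\ell0$) from appearing.

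The key steps, in order: (1) Specify the exact positions of the mandatory $01^\ell0$ and the trailing $1$, and verify the arithmetic $2 + \ell + 1 + \lceil\log_2(f-\ell-2)\rceil + (\text{payload length}) = f$ so that exactly $f-\ell-4$ free bits remain; this is where the value $f=2^{\ell-4}$ is used, since then $\lceil\log_2(f-\ell-2)\rceil=\ell-4$ and the counts line up. (2) Define a run-length-limiting encoding on the payload bits (e.g. a bit-stuffing scheme inserting a $1$ after every run of $\ell-1$ consecutive $0$'s, reversible in linear time) guaranteeing the word has no $0^\ell$ substring away from the marker and that the marker's own $0\ldots0$ boundaries do not extend into a forbidden run; verify that after stuffing the total length is still accommodated — this requires choosing the free-bit budget conservatively, which is exactly why only $f-\ell-4$ of the roughly $f$ positions are declared free. (3) Check the "exactly once" condition for $01^\ell0$: show the stuffed payload contains no run of $\ell$ ones bordered by zeros on both sides, so the only occurrence is the planted one. (4) Conclude that the map $[f-\ell-2]\times\F_2^{f-\ell-4}\to\cA$ just described is injective (the index is recoverable as the location of the unique $01^\ell0$, and the payload is recoverable by inverting the stuffing), hence $|\cA|=(f-\ell-2)2^{f-\ell-4}$, and that encoding/decoding run in linear time since stuffing and pattern-location are single left-to-right passes.

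The main obstacle I expect is step (2)–(3): making the bit-stuffing interact cleanly with the hard boundaries of the planted $01^\ell0$ and the forced trailing $1$, so that no \emph{unintended} occurrence of $01^\ell0$ is created at the seams and no run of $\ell$ zeros appears straddling the marker or at the very start of the word (which begins with a payload block). This is a finicky but standard run-length-constraint bookkeeping problem; the trick is to pad the payload blocks with a fixed short buffer of $1$'s adjacent to every $0$-boundary of the marker, and to absorb the (at most one per $\ell-1$ bits) stuffed bits into the slack between the declared $f-\ell-4$ free bits and the full block length. Everything else — the size count, the "ends with one" constraint, linear-time complexity — is then routine.
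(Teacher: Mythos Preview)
Your high-level plan---payload, a planted copy of $01^\ell 0$ at one of $f-\ell-2$ positions, and a trailing $1$---is exactly the skeleton the paper uses. The gap is in how you encode the payload.

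First, the bit-stuffing you describe (insert a $1$ after every run of $\ell-1$ zeros) is \emph{variable-length}: on an input of length $m$ the worst-case expansion is $\lfloor m/(\ell-1)\rfloor$ bits, which here is of order $f/\ell$, not $1$. You only have one bit of slack between the $f-\ell-3$ payload positions and the $f-\ell-4$ free message bits, so ordinary stuffing cannot be ``absorbed'' as you suggest. (Relatedly, your length equation in step~(1) is off: the index in $[f-\ell-2]$ is encoded by the \emph{physical position} of the marker, not by extra bits, so the $\lceil\log_2(f-\ell-2)\rceil$ term should not appear; the correct count is simply $(\ell+2)+1+(f-\ell-3)=f$.)

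Second, and more seriously, your stuffing only suppresses long runs of zeros. Nothing in your scheme prevents the payload from containing a run of $\ell$ ones---indeed, stuffing $1$'s can \emph{create} such runs---so the ``exactly one occurrence of $01^\ell 0$'' condition is not guaranteed, and your step~(3) has no mechanism to enforce it.

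The paper fixes both issues with a single stroke: it requires the length-$(f-\ell-3)$ payload to avoid \emph{both} $0^{\ell-1}$ and $1^{\ell-1}$ as substrings. This simultaneously rules out any $0^\ell$ (even across the seams with the planted $0$'s) and any second run of $\ell$ ones (so the planted $01^\ell 0$ is the unique occurrence). To obtain such payloads with exactly one bit of redundancy and in linear time, the paper invokes the sequence-replacement encoder of Schoeny \emph{et al.}, which maps $\F_2^{f-\ell-4}\to\F_2^{f-\ell-3}$ avoiding both forbidden runs; the hypothesis $f=2^{\ell-4}$ (so the run bound $\ell-1$ exceeds $\log_2 f$ by a constant) is precisely what makes that one-bit encoder applicable. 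This is the missing ingredient in your proposal.
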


\begin{proof}
We first construct a code $\cA_0$ of length $f-\ell-3$, 
where all codewords do not contain either $0^{\ell-1}$ or $1^{\ell-1}$ as substrings. 
Then $\cA$ can be constructed by inserting $01^\ell0$ to the codewords in $\cA_0$ and appending a symbol $1$.
Since there are $f-\ell-2$ possible positions to insert $01^\ell0$, we have $|\cA|=(f-\ell-2)|\cA_0|$. 

To construct the code $\cA_0$, we use the encoding algorithm $\phi$ proposed by Schoeny \etal \cite{Schoenyetal2017}
that maps a binary sequence of length $(f-\ell-4)$ to a binary sequence of length $(f-\ell-3)$
that avoids $0^{\ell-1}$ and $1^{\ell-1}$ as substrings.
Furthermore, the encoding map $\phi$ has running time $O(f)$.
\end{proof}

Hence, for $\ell\ge 8$, we choose $f=2^{\ell-4}$. 
For the input code $\cB$, we shorten an appropriate BCH code given in Theorem~\ref{primitiveBCH}
to obtain an $[rf+p,rf,d]$-linear code with redundancy $p\le t\log_2 n + t$. 
Hence, applying Construction~\ref{constr-primer1},
we obtain a primer code with  $(n,d; 1,f)_2$-primer codes with $n=rf+p+\floor{p/(\ell-1)}+\ell+2$.

Observe that for sufficiently large $\ell$, we have that $rf< n<(r+1)f$. By choice of $\ell$, we have that $\log_2n + C_1 \le  \ell\le \log_2n + C_2$ for some constants $C_1$, $C_2$ dependent only on $r$.

To analyse the redundancy of the construction, we have that 
\begin{align*}
\log_2 |\cA|^r & = r (f-\ell-4) + r \log_2(f-\ell-2) \\
& \ge r(f-\ell-4)+r \log_2(f/2) \\
&= r(f-\ell-4) + r(\ell-5) = rf - 9r. 
\end{align*}

Therefore, the redundancy is given by $n-\log_2 |\cA|^r$, which is at most
{\small
\begin{align*}
& p+\floor{p/(\ell-1)}+\ell+2+9r \\
&\hspace{3mm} \le (t\log_2n + t) + \frac{t \log_2 n + t}{\log_2 n + C_1-1} + (\log_2 n +C_2) +2+9r\\
&\hspace{3mm} = (t+1) \log_2 n + O(1).
\end{align*}
}
In summary, we have the following theorem.
\begin{theorem}
Fix $r$ and $d$. Then there exists a family of $(n,d;1,f)_2$-primer codes with $n=rf+o(f)$ and
 redundancy at most $(t+1)\log_2 n +O(1)$, where $t=\ceil{(d-1)/2}$. 
Furthermore, there exists a linear-time encoding algorithm for these primer codes. 
\end{theorem}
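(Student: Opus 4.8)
The plan is to assemble the ingredients already developed in this subsection: feed Construction~\ref{constr-primer1} with the $\ell$-APD-constrained code of the preceding lemma as $\cA$ and a shortened BCH code as $\cB$, and then do the asymptotic bookkeeping. First I would fix $\ell\ge 8$, set $f=2^{\ell-4}$, and apply the preceding lemma to get an $\ell$-APD-constrained code $\cA$ of length $f$ and size $(f-\ell-2)2^{f-\ell-4}$ together with its linear-time encoder (built from Schoeny \etal{}'s constrained encoder plus the insertion of $01^\ell0$ and a trailing $1$). For $\cB$, I would take a primitive narrow-sense BCH code of length $2^m-1\ge rf+p$ with designed distance $d$ (Theorem~\ref{primitiveBCH}) and shorten it to an $[rf+p,rf,d]_2$-linear code; shortening preserves linearity, the redundancy, and the minimum distance, and gives $p\le t m\le t\log_2 n+t$ with $t=\ceil{(d-1)/2}$, since one may take $m\le\log_2 n+1$. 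Both $\cA$ and $\cB$ then admit linear-time (systematic, in the case of $\cB$) encoders.

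Next I would verify the hypotheses of Construction~\ref{constr-primer1}, namely $\ell+3\le f$ and $p+\floor{p/(\ell-1)}+1\le f$. The first holds for every $\ell\ge 8$ because $2^{\ell-4}$ dominates $\ell+3$; the second holds once $n$ (hence $f=2^{\ell-4}$) is large enough, since $p=O(\log n)$ while $f=\Theta(n)$. With these in place, Construction~\ref{constr-primer1} outputs a primer code $\cC$ of length $n=rf+p+\floor{p/(\ell-1)}+\ell+2$ and size $|\cA|^r$ that is $1$-WMU and avoids primer dimer byproducts of effective length $2f$ (which I rename $f$ in the statement, matching the paper's convention), and the remaining bit-insertion step is linear-time, so composing the encoders of $\cA^r$ and $\cB$ with this step yields a linear-time encoding for $\cC$. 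Because $p+\floor{p/(\ell-1)}+\ell+2=O(\log n)=o(f)$ (as $\log n\asymp\ell$ while $f=2^{\ell-4}$), we indeed get $n=rf+o(f)$.

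Finally, for the redundancy I would reuse the displayed estimate: $\log_2|\cA|^r=r(f-\ell-4)+r\log_2(f-\ell-2)\ge r(f-\ell-4)+r\log_2(f/2)=rf-9r$ for $\ell$ large, so the redundancy $n-\log_2|\cA|^r$ is at most $p+\floor{p/(\ell-1)}+\ell+2+9r$. Substituting $p\le t\log_2 n+t$ and the two-sided bound $\log_2 n+C_1\le\ell\le\log_2 n+C_2$ with $C_1,C_2$ depending only on $r$ (which follows from $rf<n<(r+1)f$ and $f=2^{\ell-4}$), the $t\log_2 n$ from $p$ and the $\log_2 n$ from $\ell$ are the dominant terms, giving redundancy $(t+1)\log_2 n+O(1)$. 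I do not expect a genuine obstacle here: every component (the constrained encoder, shortened BCH codes, Lemma~\ref{constr-yazdi-primer}, Construction~\ref{constr-primer1}) is already available, and the one \emph{care-point} is purely organizational — checking that $\ell$, $f$, $p$, and $r$ can be chosen consistently for all sufficiently large $n$ and that the floor-roundings are safely absorbed into the $o(f)$ and $O(1)$ terms.
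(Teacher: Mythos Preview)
Your proposal is correct and follows essentially the same route as the paper: you feed Construction~\ref{constr-primer1} with the $\ell$-APD-constrained code from the preceding lemma (with $f=2^{\ell-4}$) and a shortened BCH code, then perform exactly the same asymptotic bookkeeping the paper does, including the estimate $\log_2|\cA|^r\ge rf-9r$ and the bound $\log_2 n+C_1\le\ell\le\log_2 n+C_2$. Your write-up is in fact slightly more careful than the paper's, since you explicitly verify the hypotheses $\ell+3\le f$ and $p+\floor{p/(\ell-1)}+1\le f$ and flag the $f$ versus $2f$ relabelling that the paper glosses over.
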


 Applying Lemma~\ref{coupling}, we obtain primer codes over $\{\A,\T,\C,\G\}$.
 
\begin{corollary}\label{cor:primer-1}
Fix $r$ and $d$, and set $t=\ceil{(d-1)/2}$.
\begin{enumerate}[(i)]
\item There exists a family of $(n,d;1,f)_4$-primer codes with $n=rf+o(f)$ and
 redundancy at most {$(2t+1) \log_4 n +O(1)$}. 
\item There exists a family of balanced $(n,d;1,f)_4$-primer codes with $n=rf+o(f)$ and
 redundancy at most  $(d+1)\log_4 n +O(1)$.
\end{enumerate}
\end{corollary}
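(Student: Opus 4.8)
The plan is to derive both parts from the binary $(n,d;1,f)_2$-primer code of the preceding theorem by a single application of the coupling construction (Lemma~\ref{coupling}), the only freedom being the choice of the second binary input code. Throughout, let $\cC_2$ be the binary $(n,d;1,f)_2$-primer code of length $n=rf+o(f)$ guaranteed by that theorem: it is $1$-WMU and $f$-APD and has binary redundancy $(t+1)\log_2 n+O(1)$. For any binary $(n,d)_2$-code $\cC_1$ of the \emph{same} length $n$, put $\cC\triangleq\{\Psi(\va,\vb):\va\in\cC_1,\ \vb\in\cC_2\}$. By Lemma~\ref{coupling}, $\cC$ is an $(n,d)_4$-code which, since $\cC_2$ is $1$-WMU and $f$-APD, is itself $1$-WMU and $f$-APD; hence $\cC$ is an $(n,d;1,f)_4$-primer code, and it is moreover \GC-balanced whenever $\cC_1$ is balanced. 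Since $|\cC|=|\cC_1|\cdot|\cC_2|$ and all three codes have length $n$, writing $\rho_i$ for the binary redundancy of $\cC_i$ we get
\[
 n-\log_4|\cC| \;=\; \tfrac12\bigl( (n-\log_2|\cC_1|)+(n-\log_2|\cC_2|) \bigr) \;=\; \tfrac12\bigl( \rho_1+(t+1)\log_2 n \bigr)+O(1).
\]

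For part (i) I would take $\cC_1$ to be a binary $(n,d)_2$-linear code of length exactly $n$, obtained by shortening a primitive narrow-sense BCH code of Theorem~\ref{primitiveBCH}; such a code exists with $\rho_1\le t\log_2 n+O(1)$. Substituting into the displayed identity gives $4$-ary redundancy $\tfrac12(2t+1)\log_2 n+O(1)=(2t+1)\log_4 n+O(1)$, which is the claim.

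For part (ii) I would instead take $\cC_1$ to be a binary \emph{balanced} $(n,d)_2$-code, so that $\cC$ becomes a balanced $(n,d;1,f)_4$-primer code. Such a $\cC_1$ is supplied by Corollary~\ref{cor:bin-balanced}: directly when $d$ is even, and by using a balanced code of distance $d+1\ge d$ when $d$ is odd; in both cases $\rho_1\le\ceil{d/2}\log_2 n+O(1)$. Since $t=\ceil{(d-1)/2}=\floor{d/2}$, we have $\ceil{d/2}+t=d$, so the displayed identity now yields $4$-ary redundancy $\tfrac12(d+1)\log_2 n+O(1)=(d+1)\log_4 n+O(1)$.

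The step I expect to be the main obstacle is making the lengths agree in part (ii). Shortened BCH codes are available at every block length, so part (i) is unobstructed; but Corollary~\ref{cor:bin-balanced} supplies balanced error-correcting codes only at lengths of the form $2^m$, whereas $\cC_2$ has length $n=rf+o(f)$ with $f=2^{\ell-4}$. I would handle this by exploiting the freedom left in Construction~\ref{constr-primer1}: the redundancy $p$ of the inner BCH code $\cB$ may be taken somewhat larger than its minimum value while remaining $O(\log n)$, and the resulting primer length $n=rf+p+\floor{p/(\ell-1)}+\ell+2$ then ranges over a dense set of integers; selecting the parameters so that $n$ is a power of two along an infinite subsequence produces the asserted family of balanced primer codes. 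Pinning down this parameter-tuning carefully (in particular the degenerate regimes in which $rf$ itself lies too close to a power of two) is the part of a complete proof that requires real attention.
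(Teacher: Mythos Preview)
Your proposal is correct and takes the same approach as the paper, which simply writes ``Applying Lemma~\ref{coupling}, we obtain primer codes over $\{\A,\T,\C,\G\}$'' and gives no further detail. The length-matching issue you flag for part~(ii)---that Corollary~\ref{cor:bin-balanced} supplies balanced codes only at lengths $2^m$---is a genuine technicality the paper glosses over, and your proposed fix via tuning the free parameter $p$ in Construction~\ref{constr-primer1} to hit powers of two along a subsequence is adequate for the stated existence claim.
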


\subsection{Almost \GC-Balanced $\kappa$-Mutually Uncorrelated Only}

Using cyclic codes and modifying Construction~\ref{constr-bin-bal}, 
we obtain {\em almost balanced} primer codes that satisfy conditions (P1) and (P2) only.
Here, a code is {\em almost balanced} if the weight (or \GC-content) of every word belongs to 
$\{\floor{n/2}-1,\floor{n/2}, \ceil{n/2}, \ceil{n/2}+1\}$.

Let $n$ be odd and we abuse notation by using $\phi$ to also denote the map
 $\phi:\F_4^n\to \F_4^n$ where $\phi(\va)=\va+\omega^{(n+1)/2}0^{(n-1/2)}$.
 In other words, $\phi$ switches $\A$ with $\C$ and $\T$ with $\G$, and vice versa, in the first $(n+1)/2$ coordinates of $\va$.
We have the following analogue of Lemma~\ref{lemma-flip}.

\begin{lemma}\label{lemma-flip-gc}
For $\va\in \F_4^n$, 
we can find  $i\in \llbracket n \rrbracket$ such that  $\phi(\shift^i(\va))$ is \GC-balanced.
\end{lemma}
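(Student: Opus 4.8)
The plan is to reduce Lemma~\ref{lemma-flip-gc} to the binary Lemma~\ref{lemma-flip} by projecting $\F_4^n$ onto its \GC-content. First I would introduce the map $\pi\colon\F_4\to\F_2$ defined by $\pi(\A)=\pi(\T)=0$ and $\pi(\C)=\pi(\G)=1$, extended coordinatewise to $\pi\colon\F_4^n\to\F_2^n$. By definition of \GC-content, a word $\va\in\F_4^n$ is \GC-balanced precisely when $\wt(\pi(\va))\in\{(n-1)/2,(n+1)/2\}$, so the target statement becomes a statement about the Hamming weight of $\pi$ applied to shifts and flips of $\va$.

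Next I would record two compatibility properties of $\pi$. The projection clearly commutes with cyclic shifts, so $\pi(\shift^i(\va))=\shift^i(\pi(\va))$ for every $i$. The key observation is that adding $\omega$ flips the \GC-bit of a symbol: a direct check on the four bases gives $\pi(x+\omega)=\pi(x)+1$ for all $x\in\F_4$ (indeed $\A+\omega=\C$, $\C+\omega=\A$, $\T+\omega=\G$, $\G+\omega=\T$). Since the quaternary $\phi$ adds $\omega$ in exactly the first $(n+1)/2$ coordinates, while the binary $\phi$ of Section~\ref{sec:balanced} flips exactly those bits, these two facts combine to the intertwining identity $\pi(\phi(\vb))=\phi(\pi(\vb))$ for every $\vb\in\F_4^n$ (with the binary $\phi$ on the right-hand side).

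Finally I would set $\vx=\pi(\va)\in\F_2^n$ and chain the two identities to obtain
\[
\pi\bigl(\phi(\shift^i(\va))\bigr)=\phi\bigl(\shift^i(\vx)\bigr)
\]
for every $i\in\bbracket{n}$. Applying Lemma~\ref{lemma-flip} to $\vx$ — this is where the hypothesis that $n$ is odd is used — yields an index $i\in\bbracket{n}$ with $\wt(\phi(\shift^i(\vx)))\in\{(n-1)/2,(n+1)/2\}$; for that $i$ the word $\phi(\shift^i(\va))$ has \GC-content in $\{(n-1)/2,(n+1)/2\}$, hence is \GC-balanced, as required.

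I do not expect a genuine obstacle here: the argument is essentially a one-line reduction once the \GC-projection $\pi$ and its interaction with $\phi$ are identified. The only points that deserve a moment's care are bookkeeping ones — verifying that $\pi$ carries the quaternary $\phi$ to the binary $\phi$ (which rests entirely on $\pi(x+\omega)=\pi(x)+1$) and that ``\GC-balanced'' coincides exactly with the weight condition in Lemma~\ref{lemma-flip} — so that nothing beyond Lemma~\ref{lemma-flip} is invoked.
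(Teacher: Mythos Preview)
Your proposal is correct. The paper does not give an explicit proof of Lemma~\ref{lemma-flip-gc}; it merely introduces it as ``the following analogue of Lemma~\ref{lemma-flip}'', leaving the reader to either redo the counting argument with \GC-content in place of Hamming weight or to carry out precisely the projection you describe. Your reduction via the \GC-indicator map $\pi$ is a clean way to make the word ``analogue'' precise, and the two compatibility checks ($\pi$ commutes with $\shift$, and $\pi(x+\omega)=\pi(x)+1$) are exactly what is needed to pull the quaternary $\phi$ back to the binary one.
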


\begin{construction}\label{constr-primer-2}
Let $n$ be odd, $k\le \ceil{(n+1)/4}$ and $q\in\{2,4\}$\hfill

{\sc Input}: An $[n,k,d]_q$-cyclic code $\cB$ containing  $1^n$.\\
{\sc Output}: An almost balanced $(n,d;k+1,n)_q$-primer code $\cB$ of size at least $q^k/n$.

\begin{itemize}
\item Let $\vu_1, \vu_2, \ldots, \vu_m$ be the set of representatives $\cC/\equivcyclic$. 
\item For each $\vu_i$, find $j_i\in [n]$ such that 
$\phi(\shift^{j_i}(\vu_i))$ is either balanced or \GC-balanced.
\item Let $\mu=(n-1)/2$. For each $\vu_i$, set 
\[ \vv_i =
\begin{cases}
\shift^{j_i}(\vu_i)+1^{\mu+1}0^{\mu-1}1, &\mbox{ if $q=2$}, \\
\shift^{j_i}(\vu_i)+\omega^{\mu+1}0^{\mu-1}\omega, &\mbox{ if $q=4$}.
\end{cases}\]
\item Set $\cC=\{\vv_i:1\leq i\leq m\}$.
\end{itemize}
\end{construction}

\begin{theorem}\label{thm-primer-2}
Construction~\ref{constr-primer-2} is correct. 
In other words,  $\cC$ is an almost balanced $(n,d;k+1,n)_q$-primer code of size at least $q^k/n$.
\end{theorem}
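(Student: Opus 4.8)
The plan is to verify the three required properties of $\cC$ in turn, reusing the structure already established for Construction~\ref{constr-bin-bal}. First, for the size bound, I would note that $\cC$ is obtained by picking one representative from each cyclic equivalence class of $\cB$, modifying it by adding a fixed vector. Since $\cB$ has $q^k$ codewords and each cyclic class has at most $n$ elements, there are $m\ge q^k/n$ classes, hence $|\cC|\ge q^k/n$. For property (P1) (minimum distance $d$), I would observe that $\vv_i = \shift^{j_i}(\vu_i) + \vt$ where $\vt$ is the fixed vector ($1^{\mu+1}0^{\mu-1}1$ or $\omega^{\mu+1}0^{\mu-1}\omega$); translation by a fixed vector is an isometry of $\F_q^n$, and distinct representatives $\vu_i,\vu_{i'}$ give $\shift^{j_i}(\vu_i), \shift^{j_{i'}}(\vu_{i'})$ that are distinct codewords of the $[n,k,d]_q$-cyclic code $\cB$, so $d(\vv_i,\vv_{i'})\ge d$.

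For the \emph{almost balanced} property, I would split into the two cases $q=2$ and $q=4$. Note that the fixed added vector $\vt$ equals $\phi$'s flip vector ($1^{\mu+1}0^{\mu-1}$ resp. $\omega^{\mu+1}0^{\mu-1}$) \emph{except} in the last coordinate, where it has an extra $1$ (resp. $\omega$). By Lemma~\ref{lemma-flip} (for $q=2$) or Lemma~\ref{lemma-flip-gc} (for $q=4$), there is a shift $j_i$ for which $\phi(\shift^{j_i}(\vu_i))$ is balanced resp.\ \GC-balanced, i.e.\ has weight (resp.\ \GC-content) in $\{\mu,\mu+1\}$. Since $\vv_i$ differs from $\phi(\shift^{j_i}(\vu_i))$ only in the last coordinate, changing that one coordinate shifts the weight/\GC-content by at most $1$, so $\wt(\vv_i)$ (resp.\ the \GC-content) lies in $\{\mu-1,\mu,\mu+1,\mu+2\}=\{\floor{n/2}-1,\floor{n/2},\ceil{n/2},\ceil{n/2}+1\}$, which is exactly the ``almost balanced'' condition.

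The main work is property (P2), that $\cC$ is $(k+1)$-WMU: for any $\vv_i,\vv_{i'}\in\cC$ (possibly equal) and any $\ell$ with $k+1\le \ell\le n$, we need $\vv_i[1,\ell]\ne \vv_{i'}[n-\ell+1,n]$. Here the hypotheses $k\le\ceil{(n+1)/4}$, $1^n\in\cB$, and the special form of the added vector $\vt$ are all essential, and I expect this to be the crux. The idea: suppose for contradiction a prefix of length $\ell$ of $\vv_i$ equals a suffix of length $\ell$ of $\vv_{i'}$. Write $\vv_i=\vc+\vt$ and $\vv_{i'}=\vc'+\vt$ with $\vc,\vc'\in\cB$ (they are cyclic shifts of codewords, hence codewords). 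The fixed pattern $\vt$ is ``unbalanced'' — its first half is all-$1$ (or all-$\omega$) and its second half is essentially all-zero — so on the overlapping coordinates the contribution of $\vt$ to the prefix of $\vv_i$ and to the suffix of $\vv_{i'}$ differs in a controlled, position-dependent way. One then argues that a coincidence $\vv_i[1,\ell]=\vv_{i'}[n-\ell+1,n]$ would force $\vc$ and a suitable cyclic shift of $\vc'$ to agree on a long run of coordinates; because $\cB$ has dimension $k$ and contains $1^n$, two codewords agreeing on more than $n-k$ (or so) coordinates combined with the structural constraint from $\vt$'s asymmetry forces a contradiction once $\ell\ge k+1$, using $k\le \ceil{(n+1)/4}$ to guarantee the overlap region is long enough to straddle both the all-$1$ block and the all-zero block of $\vt$. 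I would carry this out by carefully tracking, for each coordinate $s\in[\ell]$, the equation $\vc[s]+\vt[s]=\vc'[n-\ell+s]+\vt[n-\ell+s]$, partitioning $[\ell]$ according to which of the two ``halves'' of $\vt$ each side falls in, and deriving that $\vc$ restricted to a window of length $>k$ is determined, contradicting the fact that the only codeword of $\cB$ vanishing on such a window (modulo the $1^n$ coset) is forced. The delicate point will be choosing the window correctly and using $k+1\le\ell$ together with $k\le\ceil{(n+1)/4}$ to ensure the window genuinely has length exceeding $k$; I anticipate this counting argument, rather than any algebraic subtlety, to be the main obstacle.
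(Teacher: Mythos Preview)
Your outline for the size bound, the distance, and the almost-balanced property matches the paper's argument and is fine. The WMU argument is also in spirit the same route the paper takes---form the difference of a codeword and a cyclic shift of another, and exploit the fixed translate $\vt$---but your description of the endgame is not quite right and would not close the argument as stated.

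Concretely: from a hypothetical coincidence $\vv_i[1,\ell]=\vv_{i'}[n-\ell+1,n]$ you correctly obtain, for the difference $\vd\triangleq \vc-\shift^{\ell}(\vc')\in\cB$, that the prefix $\vd[1,\ell]$ equals the \emph{known} pattern $\vt[1,\ell]-\vt[n-\ell+1,n]$. This pattern is \emph{not} a long block on which $\vc$ and $\shift^\ell(\vc')$ agree; for instance when $\ell\le\mu$ one gets $\vd[1,\ell]=1^{\ell-1}0$, i.e.\ the two codewords disagree on $\ell-1$ consecutive positions. So the phrase ``$\vc$ and a suitable cyclic shift of $\vc'$ agree on a long run'' and the threshold ``more than $n-k$ coordinates'' are both off. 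What the paper actually uses is Lemma~\ref{lemma-0run}: in a cyclic code of dimension $k$ containing $1^n$, every \emph{non-constant} codeword has all symbol runs of length at most $k-1$. One then checks, by the case split $\ell\le\mu$, $\ell=\mu+1$, $\ell\ge\mu+2$, that the known prefix $\vd[1,\ell]$ is non-constant yet contains a constant run of length at least $k$ (this is exactly where $k\le\ceil{(n+1)/4}$ enters, in the last case). That forces $\vd$ to be constant while its prefix is not---contradiction.

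So the fix is simple: replace the vague ``agreeing on more than $n-k$ coordinates'' step by an explicit appeal to Lemma~\ref{lemma-0run} applied to the computed prefix of the difference codeword, together with the three-case computation of that prefix. Once you do that, your proof is essentially the paper's.
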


To prove Theorem~\ref{thm-primer-2}, we require the following technical lemma modified from Yazdi {\em et al.} \cite{Yazdietal2018.mu}.

\begin{lemma}\label{lemma-0run}
Let $\cC$ be a cyclic code of dimension $k$ containing $1^n$.
Then the run of any symbols in any non-constant codeword is at most $k-1$.
\end{lemma}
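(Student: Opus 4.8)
The plan is to argue by contradiction, exploiting both the cyclic structure of $\cC$ and the fact that $1^n \in \cC$. Suppose some non-constant codeword $\vc \in \cC$ contains a run of $k$ consecutive equal symbols, say the symbol $\alpha \in \F_q$. Since $\cC$ is cyclic, we may apply an appropriate shift $\shift^i$ so that this run occupies the first $k$ coordinates; replacing $\vc$ by $\shift^i(\vc)$, we may assume $\vc[1,k] = \alpha^k$. Because $1^n \in \cC$ and $\cC$ is linear, the codeword $\vc' = \vc - \alpha \cdot 1^n$ also lies in $\cC$, and $\vc'$ has its first $k$ coordinates all zero. Moreover $\vc'$ is still non-constant (subtracting a constant vector does not change whether a word is constant), so $\vc' \neq 0^n$.

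Now I would pass to the polynomial picture developed in the excerpt: identify $\cC$ with the ideal $\pspan{g(X)}$ in $\F_q[X]/(X^n-1)$, where $g(X)$ is the generator polynomial of degree $n-k$. The nonzero codeword $\vc'$ corresponds to a nonzero polynomial $c'(X) = \sum_{i=0}^{n-1} c'_i X^i$ that is divisible by $g(X)$, hence $\deg c'(X) \ge \deg g(X) = n-k$ (as a polynomial of degree $< n$, $c'(X)$ cannot be the zero polynomial and a nonzero multiple of $g$ has degree at least $\deg g$). On the other hand, $c'_0 = c'_1 = \cdots = c'_{k-1} = 0$, so the polynomial $c'(X)$ is actually divisible by $X^k$; writing $c'(X) = X^k h(X)$ with $h(X) \neq 0$ and $\deg h \le n-1-k$, the support of $\vc'$ lies entirely in coordinates $k, k+1, \ldots, n-1$, a block of only $n-k$ coordinates.

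To reach a contradiction I would combine the two constraints: $g(X) \mid c'(X)$ forces $\vc'$ to have ``spread'' at least $n-k+1$ in an appropriate sense, while the vanishing of its first $k$ entries confines it to a window of width $n-k$. The cleanest way to make this precise is to use that $g(X)$ and $X^n - 1$ share the factor structure so that $X^k$ and $g(X)$ are coprime (since $\gcd(X, X^n-1) = 1$ when $n$ is coprime to the characteristic, which holds as $n = q^m-1$ is odd in all our applications, or more simply because $X \nmid X^n-1$ hence $X \nmid g(X)$); therefore $g(X) \mid h(X)$, but $\deg h \le n-1-k < n-k = \deg g$, forcing $h(X) = 0$, contradicting $\vc' \neq 0^n$. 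Hence no run of length $k$ is possible, i.e.\ every run in a non-constant codeword has length at most $k-1$.

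I expect the main obstacle to be nailing down the coprimality/degree bookkeeping in the last step cleanly — in particular justifying $g(X)\mid h(X)$ rather than merely $g(X)\mid X^k h(X)$ — and being careful that ``non-constant'' survives the subtraction of $\alpha\cdot 1^n$ and that $1^n$ really does lie in $\cC$ (given) so that the translation step is legitimate. Everything else is routine manipulation of the polynomial representation already set up in Section~\ref{sec:encoding}'s preamble.
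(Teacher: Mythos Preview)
Your argument is correct and is the standard proof of this fact. Note, however, that the paper does not actually supply its own proof of Lemma~\ref{lemma-0run}; it simply states the lemma as a technical fact ``modified from Yazdi \etal{} \cite{Yazdietal2018.mu}'' and uses it without further justification. So there is nothing in the paper to compare your proof against.

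For what it is worth, your reasoning is exactly the natural one: shift the run to the front, subtract $\alpha\cdot 1^n$ (legitimate since $1^n\in\cC$ and $\cC$ is linear) to get a nonzero codeword $\vc'$ whose first $k$ coordinates vanish, then observe that $c'(X)=X^k h(X)$ with $\deg h\le n-1-k$; since $g(X)\mid X^n-1$ and $X^n-1$ has constant term $-1$, we have $X\nmid g(X)$, so $g(X)\mid h(X)$, forcing $h=0$ by degree comparison, a contradiction. Your aside about $n$ being coprime to the characteristic is unnecessary (the simpler observation $X\nmid X^n-1$ suffices, as you note), and the claim that subtracting a constant vector preserves non-constancy is immediate. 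There are no gaps.
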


\begin{proof}[Proof of Theorem~\ref{thm-primer-2}]
Since $\cC$ is coset of $\cB$, we have that $\cC$ is an $(n,d)_q$-code. 
For $i\in [m]$, since $\phi(\shift^{j_i}(\vu_i))$ is balanced and $\vv_i$ differs from the former in one symbol, we have that $\vv_i$ is almost balanced.

Now, we demonstrated weakly mutually uncorrelatedness for the case of $q = 2$. 
The case of $q = 4$ can be proceeded in the same way. 
Suppose on the contrary that $\cC$ is not $k$-WMU.
Then there is a proper prefix $\vp$ of length $\ell$, $\ell \ge k+1$ such that both $\vp\va$ and $\vb\vp$ belong to $\cC$. In other words, $\cB$ contains the words
\[ \vp\va + 1^{\mu+1}0^{\mu-1}1\mbox{ and }\vb\vp + 1^{\mu+1} 0^{\mu-1}1,\]
where $\mu=(n-1)/2$.  
Consequently, since $\cB$ is cyclic, we have that $\vp\vb+\shift^\ell(1^{\mu+1} 0^{\mu-1}1)$ belongs
to $\cB$. Hence, by linearity of $\cB$, the word
\[\vc\triangleq 0^\ell(\va - \vb) + 1^{\mu+1} 0^{\mu-1}1+\shift^\ell(1^{\mu+1} 0^{\mu-1}1)\]
belongs to $\cB$. We look at prefix of length $\ell$ of $\vc$.
\begin{itemize}
\item When $\ell\le \mu$, the word $\vc$ has prefix $1^{\ell-1}0$.
Hence, $\vc$ is a non-constant codeword of $\cC$ and since $\ell-1\ge k$, this contradicts Lemma~\ref{lemma-0run}.

\item When $\ell=\mu+1$, the word $\vc$ has prefix $01^{\mu-1}$. 
Hence, $\vc$ is a non-constant codeword of $\cC$ and
since $\mu-1\ge k$, this contradicts Lemma~\ref{lemma-0run}.

\item When $\ell\ge \mu+2$, the word $\vc$ has prefix $0^{\ell-\mu}1^{2\mu+1-\ell}$.
Since either $\ell-\mu$ or $2\mu+1-\ell$ is at least $\ceil{\mu+1/2}=\ceil{(n+1)/4}\ge k$,
the word $\vc$ contains a run of ones or zeros of length $k$, contradicting Lemma~\ref{lemma-0run}.
\qedhere
\end{itemize}
\end{proof}

%
%

\subsection{$\kappa$-Mutually Uncorrelated Codes that Avoid Primer Dimer Byproducts of Length $\kappa$}

Using reversible cyclic codes, we further reduce the redundancy for primer codes 
in the case when $\kappa=f$.

\begin{definition}\label{def:rc-generating}
Let $g(X)$ be the generator polynomial of a reversible cyclic code $\cB$ of length $n$ and dimension $k$ that contains $1^n$.
Set $h(X)=(X^n-1)/g(X)$. 
The set $\{h^*(X), p_1(X),p_2(X),\ldots, p_P(X)\}$ of polynomials is {\em $(g,k)$-rc-generating}
if the following hold:
\begin{enumerate}[(R1)]
\item $h^*(X)$ divides $h(X)$;
\item $h^*(1)\ne 0$;
\item $h^*(X)=X^{d^*}h^*(X^{-1})/h^*(0)$, where $d^*=\deg h^*$;
\item $h^*(X)$ does not divide $X^s p_i(X) - p_j(X)$ for all $i,j\in [P]$ and $s\in [n-1]$.
\item $h^*(X)$ does not divide $X^s p_i(X) - X^{k-1} p_j(X^{-1})$ for all $i,j\in [P]$ and $0\le s\le{n-k}$.
\item $h^*(X)$ does not divide $X^{s+k-1} p_i(X^{-1}) - p_j(X)$ for all $i,j\in [P]$ and $0\le s\le{n-k}$.
\item $\deg p_i(X) < \deg h^*$ for $i\in [P]$. 
\end{enumerate}
\end{definition}

\begin{construction}\label{constr-primer-3}
\hfill

{\sc Input}: An $[n,k,d]_q$-reversible cyclic code $\cB$ containing $1^n$ with generator polynomial $g(X)$
 and a $(g,k)$-rc-generating set of polynomials $\{h^*(X), p_1(X),p_2(X),\ldots, p_P(X)\}$ .\\
{\sc Output}: An $(n,d;k,k)_q$-primer code $\cC$ of size $q^{k^*}P$, where
$k^* = k - \deg h^*$.
\begin{itemize}
\item Set \[\cC \triangleq \{ (m(X)h^*(X)+p_i(X))g(X): \deg m < k^*, i\in [P]\}.\]
\end{itemize}
\end{construction}

\begin{theorem}\label{thm-primer-3}
Construction~\ref{constr-primer-3} is correct. 
In other words, $\cC$ is an $(n,d;k,k)_q$-primer code.
\end{theorem}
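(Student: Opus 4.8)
The plan is to verify the three defining properties (P1), (P2), (P3) of a primer code for the set $\cC$ output by Construction~\ref{constr-primer-3}, using the $(g,k)$-rc-generating conditions (R1)--(R7) as the workhorse. First I would observe that since $h^*(X) \mid h(X)$ by (R1) and $\deg p_i < \deg h^*$ by (R7), every polynomial of the form $(m(X)h^*(X)+p_i(X))g(X)$ has degree less than $k + (n-k) = n$ and is a multiple of $g(X)$, hence lies in $\cB$. Thus $\cC \subseteq \cB$, so $\cC$ inherits the minimum distance $d$ and (P1) holds. For the size count, I would argue that the map $(m,i) \mapsto (mh^*+p_i)g$ is injective: if $(m_1 h^* + p_{i_1})g = (m_2 h^* + p_{i_2})g$ then $m_1 h^* + p_{i_1} = m_2 h^* + p_{i_2}$, and reducing mod $h^*$ gives $p_{i_1} = p_{i_2}$ (using (R7)), whence $i_1 = i_2$ and then $m_1 = m_2$. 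So $|\cC| = q^{k^*} P$ with $k^* = k - \deg h^*$.

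Next I would establish (P2), that $\cC$ is $k$-WMU. Suppose for contradiction that some proper prefix of length $\ell \ge k$ of a codeword $\va \in \cC$ equals the length-$\ell$ suffix of a codeword $\vb \in \cC$. Translating to polynomials, $\va = a(X)$ and $\vb = b(X)$ with $a = (m_1 h^* + p_{i})g$, $b = (m_2 h^* + p_{j})g$, and the prefix/suffix agreement says the low-order $\ell$ coefficients of $a$ match the high-order $\ell$ coefficients of $b$; since $\deg a, \deg b < n$ and $\ell \ge k > n-k$ (wait --- here one needs $\ell$ large enough that the overlap forces a relation; this is where the range $0 \le s \le n-k$ in (R5),(R6) enters), I would extract a relation of the form $X^s a(X) \equiv (\text{something involving } b) \pmod{X^n - 1}$ or directly modulo $h^*$, and then reduce modulo $h^*(X)$. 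Because $g(X) \bmod h^*(X)$ is a unit (as $h^* \mid h$ and $h, g$ are coprime, so $\gcd(g, h^*) = 1$), the relation collapses to $X^s p_i(X) \equiv p_j(X) \pmod{h^*}$ for some $s \in [n-1]$, contradicting (R4) (i.e. the condition numbered (R4) in Definition~\ref{def:rc-generating}, ``$h^*$ does not divide $X^s p_i - p_j$''). The bookkeeping of exactly which shift $s$ arises, and confirming $s$ lands in the advertised range, is the fiddly part here.

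Finally, for (P3) --- that $\cC$ is $k$-APD --- I would use that $\cB$ is reversible and contains $1^n$, so for any $\va \in \cB$ both $\va^r$ and $\overline{\va} = \va + 1^n$ and $\va^{rc}$ lie in $\cB$; the point of (R3) (that $h^*$ is self-reciprocal up to scalar) and (R5), (R6) is precisely to control what happens when a length-$k$ window of $\overline{\va}$ is compared, in forward or reversed orientation, with a length-$k$ window of $\vb$. I would suppose such a coincidence occurs, write it as a polynomial identity relating $X^s a(X)$ (or its reciprocal) to $X^{k-1} b(X^{-1})$ or $b(X)$ up to the additive $1^n$ terms, reduce modulo $h^*(X)$ (where $h^*(1) \ne 0$ by (R2) guarantees the $1^n$-contribution reduces to a harmless constant that cancels, and self-reciprocity from (R3) makes the reversal operation $p(X) \mapsto X^{\deg} p(X^{-1})$ well-behaved mod $h^*$), and derive a contradiction with (R5) or (R6).

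The main obstacle I anticipate is the (P2) and (P3) arguments: correctly translating the combinatorial prefix/suffix (and complement/reverse-complement) window conditions into polynomial congruences modulo $h^*(X)$, tracking the exact shift parameter $s$ and its range, and handling the additive $1^n$ corrections. Conditions (R4)--(R6) are evidently designed to be exactly what one needs after the reduction, so once the reduction is set up correctly the contradictions should be immediate; getting the reduction itself right --- especially the interplay between working modulo $X^n - 1$ and modulo $h^*(X)$, and why $g(X)$ is invertible mod $h^*(X)$ --- is the real content.
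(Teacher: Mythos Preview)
Your algebraic plan is sound and matches the paper's: (R4) is what rules out the WMU violation, (R5)--(R6) handle the two APD cases, and (R2)--(R3) make complementation and reversal behave modulo $h^*$. But the step you yourself flag as fiddly is a genuine gap, not just bookkeeping: you cannot pass from a prefix/suffix or substring coincidence directly to a congruence modulo $h^*$, because the coincidence is a statement about coefficients at specified positions, not a polynomial identity. The paper closes this gap by first proving a purely combinatorial lemma (Lemma~\ref{lem:combinatorial-rc}): any subcode of $\cB$ in which no two codewords $\vu,\vv$ satisfy $\shift^s(\vu)\in\{\vv,\overline{\vv},\vv^{rc}\}$ for $s$ in the appropriate range is already an $(n,d;k,k)_q$-primer code. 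The substring-to-exact-shift conversion happens there, via the run-length bound of Lemma~\ref{lemma-0run}: a nonconstant codeword of $\cB$ has no symbol-run of length $k$, so if two (suitably shifted, complemented, or reversed) codewords of $\cB$ agree on $k$ consecutive positions their difference is the zero word. Only after one has an honest identity $X^{s}a(X)\equiv b(X)\pmod{X^n-1}$ (or its complement/reverse variant) can one divide by $g$ and reduce modulo $h^*$; the paper carries out that second, algebraic, step separately in Section~\ref{sec:encoding}, and at that point your argument and theirs coincide.

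Two smaller corrections. The range $0\le s\le n-k$ in (R5)--(R6) is not an artifact of the algebraic reduction but falls out of the position bookkeeping in the proof of Lemma~\ref{lem:combinatorial-rc}: since the matching substring has length at least $k$, the offset between its two occurrences is at most $n-k$. And your reading of (R2) is slightly off: $h^*(1)\ne 0$ does not make the $1^n$-contribution reduce to a constant that cancels; rather it ensures $(X-1)\mid h(X)/h^*(X)$, so that $h^*(X)$ divides $h(X)/(X-1)$ and the $1^n$-term vanishes outright modulo $h^*$.
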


We illustrate Construction~\ref{constr-primer-3} via an example.

\begin{example}\label{exa:primer}
Set $n=15$ and $q=4$.
Let $g(x)=x^6 + x^5 + (\omega + 1)x^4 + x^3 + (\omega + 1)x^2 + x + 1$ be the generator polynomial of an $[15,9,5]_4$-reversible cyclic code that contains $1^n$.
Consider $h^*(X) =X^4 + \omega X^3 + \omega X^2 + \omega X + 1$ and
\vspace{-3mm}

{\scriptsize
\begin{align*}
p_1 & =\omega, & 
p_{10} &= \omega x^3 + (\omega + 1)x^2 + x + \omega + 1, \\
p_2 & =\omega + 1, &
 p_{11} &= \omega x^3 + (\omega + 1)x^2 + x + 1, \\
p_3 & =1, 
& p_{12} &= \omega x^3 + x^2, \\
p_4 & =\omega x + \omega, 
& p_{13} &= \omega x^3 + x^2 + x + 1, \\
p_5 & =(\omega + 1)x + \omega + 1, & 
p_{14} &= (\omega + 1) x^3 + \omega x^2 + \omega x + \omega, \\
p_6 & =x + 1, & 
p_{15} &= (\omega + 1)x^3 + \omega x^2 + (\omega + 1)x + \omega + 1, \\
p_7 & =\omega x^2 + \omega x + \omega + 1, 
& p_{16} &= (\omega + 1)x^3 + x^2 + \omega x + 1, \\
p_8 & =\omega x^3 + (\omega + 1) x^2 + 1, & 
p_{17} &= x^3 + \omega x^2 + (\omega + 1)x + \omega + 1.\\
p_9 & =\omega x^3 + (\omega + 1)x^2 + x,
\end{align*}
} 
\vspace{-5mm}

We can verify that the set $\{h^*(X),p_1(X),\ldots, p_{17}(X)\}$ is $(g,9)$-rc-generating.
Therefore, $k^*=15-6-4=5$ and the size of the $(15,5;9,9)_4$-primer code have size
$17(4^5)\ge 2^{14}$.

In contrast, for their experiment, Yazdi \etal{} constructed a set of weakly mutually uncorrelated primers of length 16, distance four and size four. 
Specifically, they set $\cC_1=\{01^701^7,10^710^7\}$ and $\cC_2$ to be an extended BCH $[16,11,4]$-cyclic code.
Then they applied the coupling construction to obtain an $(16,4; 9,16)$-primer code of size $2^{12}$.

Therefore, Construction~\ref{constr-primer-3} provides a larger set of primers using less bases, 
while improving the minimum distance and avoiding primer dimer products at the same time.
\end{example}

We outline our steps in establishing Theorem~\ref{thm-primer-3}. 
First, we demonstrate Lemma~\ref{lem:combinatorial-rc}. 
The lemma provides certain {\em combinatorial} sufficiency conditions 
for a subcode of a reversible cyclic code to be a primer code.
Next, using the {\em algebraic} properties of the polynomials in Construction~\ref{constr-primer-3},
we then show that $\cC$ satisfy the combinatorial conditions in Lemma~\ref{lem:combinatorial-rc}. 
The second step is deferred to Section~\ref{sec:encoding}.

\begin{lemma}\label{lem:combinatorial-rc}
Let $\cB$ be an $(n,k,d)_q$ reversible cyclic code containing $1^n$.
Let $\cC \subseteq \cB$ be a subcode such that for any two codewords $\vu,\vv$ in $\cC$, not necessarily distinct, the following holds.
\begin{enumerate}[(S1)]
\item $\shift^i(\vu)\not = \vv$ for  $k\leq i <n$;
\item  $\shift^i(\vu)\not = \overline{\vv}$ for $0\leq i \leq n-k$;
\item  $\shift^i(\vu)\not = \vv^{rc}$ and $\shift^i(\vu^{rc})\not=\vv $ for $0\leq i \leq n-k$.
\end{enumerate}
Then $\cC$ is an $(n,d;k,k)_q$-primer code.
\end{lemma}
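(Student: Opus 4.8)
The plan is to verify the three defining properties (P1), (P2), (P3) of an $(n,d;k,k)_q$-primer code directly from the hypotheses (S1)--(S3), exploiting the fact that $\cB$ is reversible, cyclic, and contains $1^n$, so that $\cC\subseteq\cB$ inherits enough closure to rephrase each primer condition as an equation inside $\cB$. Property (P1) is immediate: $\cC\subseteq\cB$ and $\cB$ is an $(n,d)_q$-code, so $\cC$ has minimum distance at least $d$.

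For (P2), suppose $\cC$ is \emph{not} $k$-WMU. Then there exist codewords $\vu,\vv\in\cC$ (not necessarily distinct) and a length $\ell$ with $k\le\ell\le n$ such that a proper prefix of length $\ell$ of $\vu$ equals the suffix of length $\ell$ of $\vv$; that is, $\vu[1,\ell]=\vv[n-\ell+1,n]$. The key move is to translate this overlap into a cyclic-shift identity: writing $\vu=\vp\va$ and $\vv=\vb\vp$ with $|\vp|=\ell$, one checks that the prefix/suffix agreement forces $\shift^{i}(\vu)$ to agree with $\vv$ in enough coordinates that, combined with the fact that both lie in the cyclic code $\cB$ (so their difference is a codeword of weight too small to be nonzero), we get $\shift^{i}(\vu)=\vv$ for $i=n-\ell$. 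Since $\ell\ge k$, we have $0\le n-\ell\le n-k$, and — after also treating the boundary case $\ell=n$, where the ``proper prefix'' condition rules out $\vu=\vv$ shifted — this contradicts (S1) (possibly via one more cyclic shift to land in the range $k\le i<n$). The careful bookkeeping of which shift index arises, and of the degenerate cases $\ell=n$ and $\vu=\vv$, is where one must be most attentive.

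For (P3), suppose $\cC$ fails to be $k$-APD. Then there are codewords $\vu,\vv\in\cC$ and indices $i,j$ such that a length-$k$ complemented substring of $\vu$ appears as a substring (read forwards or backwards) of $\vv$: $\overline{\vu}[i,i+k-1]\in\{\vv[j,j+k-1],\vv[j+k-1,j]\}$. Here we use that $\cB$ contains $1^n$, so complements of codewords are codewords, and that $\cB$ is reversible, so reverses of codewords are codewords; hence $\overline{\vu}$, $\vv^{rc}$, and $\vu^{rc}$ all lie in $\cB$. As in the WMU argument, a substring match in $k\ge k$ consecutive positions between two codewords of the cyclic code $\cB$ (after applying the appropriate cyclic shift to align the windows) forces equality of the shifted words, since their difference would be a codeword supported on fewer than $n$ coordinates — and because the relevant words lie in $\cB$ which has the all-ones vector, a short-support difference of the form ``$1^n$ minus a shift'' still cannot occur for the distances in play. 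The forward-match case yields $\shift^{i}(\vu)=\overline{\vv}$ for some $0\le i\le n-k$, contradicting (S2); the backward-match case yields either $\shift^{i}(\vu)=\vv^{rc}$ or $\shift^{i}(\vu^{rc})=\vv$ for some $0\le i\le n-k$, contradicting (S3). Assembling these three contradictions establishes that $\cC$ satisfies (P1)--(P3), i.e.\ $\cC$ is an $(n,d;k,k)_q$-primer code. The main obstacle throughout is purely combinatorial: converting each ``overlap of length $\ge k$'' hypothesis into a clean cyclic-shift equation with the shift index provably in the stated range, and disposing of the boundary cases where the two codewords coincide or the overlap has full length $n$.
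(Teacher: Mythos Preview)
Your overall skeleton matches the paper's: verify (P1) trivially, then for (P2) and (P3) translate each forbidden overlap into a cyclic-shift equality inside $\cB$ and invoke (S1)--(S3). But the crucial mechanism you rely on is wrong. You claim that a length-$\ge k$ agreement between two (shifted) codewords of $\cB$ forces equality ``since their difference would be a codeword supported on fewer than $n$ coordinates,'' and then gesture at minimum distance. That argument does not work: the difference has weight at most $n-\ell\le n-k$, and $n-k$ is typically much larger than $d$ (indeed the Singleton bound gives $n-k\ge d-1$). Minimum distance alone cannot force the difference to vanish.

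What the paper actually uses is Lemma~\ref{lemma-0run}: in a cyclic code of dimension $k$ containing $1^n$, any non-constant codeword has every run of length at most $k-1$; equivalently, any $k$ consecutive coordinates form an information set, so a codeword with $k$ consecutive zeros is identically zero. Concretely for (P2): if $\vp\va,\vb\vp\in\cC$ with $|\vp|=\ell\ge k$, then cyclically shifting $\vb\vp$ gives $\vp\vb\in\cB$, so $\vp\va-\vp\vb=0^\ell(\va-\vb)\in\cB$; the run of $\ell\ge k$ zeros forces $\va=\vb$, whence $\shift^\ell(\vp\va)=\va\vp=\vb\vp$ with $k\le\ell<n$ directly contradicts (S1). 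The APD cases are handled identically, using $1^n\in\cB$ and reversibility only to keep the relevant complemented/reversed words inside $\cB$ before applying Lemma~\ref{lemma-0run}. Without this structural fact about cyclic codes, your argument has a genuine hole precisely at the step where a length-$k$ overlap is promoted to a full equality.
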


\begin{proof}
Since $\cC$ is a subcode of $\cB$, we have that $\cC$ is an $(n,d)_q$-code. 
It remains  to show the WMU and APD properties. 

We first show that $\cC$ is $k$-WMU. Suppose to the contrary that there is a proper sequence $\vp$ of length $\ell$, where $k \leq \ell < n$, such that both $\vp\va$
and $\vb\vp$ belong to $\cC$. Since  $\cC\subseteq \cB$ and  $\cB$ is a cyclic code,  the word  $\vp\va-\vp\vb=0^\ell(\va-\vb)$ belongs to $\cB$.
Since $\ell\ge k$, Lemma~\ref{lemma-0run} implies that $\va=\vb$ and so, 
\begin{align*}
\shift^\ell(\vp\va)=\va\vp=\vb\vp.
\end{align*}
Since $k\leq \ell<n$ and $\vp\va$ and $\vb\vp$ belong to  $\cC$, we obtain a contradiction for condition (S1). 

Now we show that $\cC$ is a $k$-APD code. 
Towards a contradiction, we suppose that there is a proper sequence $\vp$ of length $\ell$, where $k \leq \ell < n$, such that both  $\va_1\vp \vb_1$ and $\va_2\overline{\vp}\vb_2$ belong to $\cC$. 
Since $\cC\subseteq \cB$ and $\cB$ is a cyclic code containing $1^n$,  the word $\vp\vb_1\va_1-\vp\overline{\vb_2}\overline{\va_2}= \mathbf{0} (\vb_1\va_1-\overline{\vb_2}\overline{\va_2})$ also belongs to $\cB$. It follows from Lemma~\ref{lemma-0run} that
\[ \vp\vb_1\va_1=\vp\overline{\vb_2}\overline{\va_2}.\]
Without loss of generality, we assume that $|\va_2| \geq  |\va_1|$. Then
\[
\shift^{\ell'}\left({\va_1\vp \vb_1}\right)=\overline{\va_2}{\vp}\overline{\vb_2}=\overline{\va_2 \overline{\vp} \vb_2},
\]
where $\ell'=|\va_2|-|\va_1|$. 
Since the length of $\vp$ is no less than $k$, we have that $\ell'\leq n-k$, 
which contradicts condition (S2).

Finally, suppose that  $\va_1\vp \vb_1$ and $\va_2\vp^{rc}\vb_2$ belong to $\cC$, where $\vp$ is a proper sequence of length $\ell$ and  $k \leq \ell <n$. Proceeding as before, we can show that 
\[\vp\vb_1\va_1=\vp\va_2^{rc}\vb_2^{rc},\]
or equivalently,
\[\vb_1\va_1=\va_2^{rc}\vb_2^{rc} \textrm{ and  } \va_1^{rc}\vb_1^{rc}=\vb_2\va_2.\]

If $ |\vb_2|\geq |\va_1|$, we have that 
\begin{align*}
\shift^{\ell'}\left({\va_1\vp \vb_1}\right)=\vb_2^{rc}\vp\va_2^{rc}=(\va_2\vp^{rc}\vb_2)^{rc},
\end{align*}
where $\ell'= |{\vb}_2|- |\va_1| \leq n-k$,  contradicting the first inequality of Condition (S3); if $|\vb_2| < |\va_1|$, then $|\va_2| > |\vb_1|$ and we have 
\begin{align*}
\shift^{\ell'}\left(({\va_1\vp \vb_1})^{rc}\right)= \shift^{\ell'}\left(\vb_1^{rc}\vp^{rc} \va_1^{rc} \right) = \va_2\vp^{rc}\vb_2,
\end{align*}
where $\ell'  = |\va_2| - |\vb_1|  \leq n-k$,  contradicting the second inequality of Condition (S3). 
\end{proof}

Finally, applying Construction~\ref{constr-primer-3} to the class of reversible cyclic codes in Theorem~\ref{thm-LCDwithone}, we obtain a family of primer codes that has efficient encoding algorithms. 
The detailed proof is deferred to Section~\ref{sec:encoding}.

\begin{corollary}\label{cor-family-primer}
Let $m\ge 6$ and $1\le \tau\le \ceil{m/2}$
Set $n=4^m-1$ and $d=4^\tau-1$.
There exists an $(n,d;k,k)_4$-primer code of 
size $4^{k-2m}$, 
where 
\begin{equation*}
k = 
\begin{cases} n-(d-3)m, & \mbox{if $m$ is odd and $\tau=\frac{m+1}{2}$;} \\
n-(d-1)m,  & \mbox{otherwise.}\\
\end{cases}
\end{equation*}
Therefore, there is a family of  $(n,d;k,k)_4$-primer codes with 
$d\approx \sqrt{n}$, $k\approx n-\sqrt{n}\log_4 n$, and redundancy at most $(d+1)\log_4(n+1)$.
\end{corollary}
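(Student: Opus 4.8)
\textbf{Proof proposal for Corollary~\ref{cor-family-primer}.}

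The plan is to simply instantiate Construction~\ref{constr-primer-3} with the reversible cyclic codes of Theorem~\ref{thm-LCDwithone}. Setting $q=4$ and noting $4^m-1 = q^m-1$, for $m\ge 6$ and $1\le\tau\le\ceil{m/2}$ (with $m\ne 3$ automatically satisfied) Theorem~\ref{thm-LCDwithone} supplies an $[n,k,d]_4$-reversible cyclic code $\cB$ containing $1^n$, with $n=4^m-1$, $d=4^\tau-1$, and $k$ as in the two-case formula (matching the stated $k = n-(d-3)m$ or $n-(d-1)m$, since $d-q+1 = d-3$ when $q=4$). So the first step is to verify that the hypotheses of Theorem~\ref{thm-LCDwithone} hold under the hypotheses of the corollary, which is immediate.

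The main work is to exhibit a $(g,k)$-rc-generating set $\{h^*(X),p_1(X),\ldots,p_P(X)\}$ of the required size. Here $h(X)=(X^n-1)/g(X)$ has degree $k$. I would choose $h^*(X)$ to be a self-reciprocal divisor of $h(X)$ with $h^*(1)\ne 0$ of degree $d^* \triangleq \deg h^* = 2m$; conditions (R1)--(R3) and (R7) are then structural facts about the BCH/cyclotomic factorization — the factor of $h(X)$ corresponding to a pair of reciprocal cyclotomic cosets gives a self-reciprocal polynomial, and since $1^n\in\cB$ the factor $X-1$ divides $g(X)$, not $h(X)$, so (R2) is automatic. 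This forces $k^* = k-\deg h^* = k-2m$, which yields the claimed size $4^{k^*}P = 4^{k-2m}\,P$; to match ``size $4^{k-2m}$'' one needs $P\ge 1$, i.e. a single admissible coset $p_1$ (e.g. $p_1 = 1$) suffices, and conditions (R4)--(R6) for a single polynomial $p_1=1$ reduce to: $h^*(X)\nmid X^s-1$ for $s\in[n-1]$ (true since $h^*$ is a nontrivial factor whose order of roots is the full $n$, as these are primitive-element-based BCH roots), and $h^*(X)\nmid X^s - X^{k-1}$, $X^{s+k-1}-1$ for $0\le s\le n-k$ (same reasoning). The hard part will be pinning down exactly which cyclotomic coset to assign to $h^*$ so that its roots have multiplicative order $n$ and so that (R4)--(R6) genuinely hold — this is where the ``$m\ge 6$'' and the reversibility (reciprocal-closed coset structure) of the Li \etal{} codes get used, and it is the one place the argument is not purely formal.

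Once the rc-generating set is in hand, Theorem~\ref{thm-primer-3} (whose proof goes through Lemma~\ref{lem:combinatorial-rc}, using Lemma~\ref{lemma-0run} and the reversibility of $\cB$) certifies that the output $\cC$ is an $(n,d;k,k)_4$-primer code of size $4^{k-2m}P \ge 4^{k-2m}$. Finally I would read off the asymptotics: with $\tau=\ceil{m/2}$ one gets $d = 4^{\ceil{m/2}}-1 \approx 4^{m/2} = \sqrt{n+1} \approx \sqrt n$, and $k \ge n-(d-1)m \approx n - \sqrt n\,\log_4 n$ (using $m = \log_4(n+1)$), so the redundancy $n-\log_4|\cC| \le n - (k-2m) = (n-k)+2m \le (d-1)m + 2m = (d+1)m = (d+1)\log_4(n+1)$. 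The efficient-encoding claim is then handed off to Section~\ref{sec:encoding} via Tavares' method adapted to the polynomial description of $\cC$, exactly as announced in the paragraph preceding the corollary.
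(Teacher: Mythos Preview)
Your approach is essentially the paper's: instantiate Construction~\ref{constr-primer-3} on the Li~\etal{} reversible BCH code, take $h^*(X)=M(\alpha)M(\alpha^{-1})$ for a primitive $\alpha$ (so $\deg h^*=2m$), and set $P=1$ with $p_1=1$; this is exactly Lemma~\ref{lem:rc-generating}. Two corrections are worth noting.

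First, your justification of (R2) is backwards. Since $1^n\in\cB$, Proposition~\ref{cyclic}(i) says $X-1$ does \emph{not} divide $g(X)$; hence $X-1$ \emph{does} divide $h(X)$. Condition (R2) holds not because $X-1\nmid h(X)$, but because the roots of $h^*(X)=M(\alpha)M(\alpha^{-1})$ are all primitive (conjugates of primitive elements remain primitive), so $1$ is not among them.

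Second, the step you flag as ``not purely formal'' is made precise in the paper by a counting argument: the set $\Lambda=\{\alpha\in\F_{4^m}:g(\alpha)=0\text{ or }g(\alpha^{-1})=0\}$ has size at most $2(n-k)\le 2(d-1)m$, while the number of primitive elements is $\varphi(4^m-1)$; the hypothesis $m\ge 6$ is exactly what guarantees $\varphi(4^m-1)>2(d-1)m$, so some primitive $\alpha$ lies outside $\Lambda$, giving $g(\alpha)\ne 0$ and $g(\alpha^{-1})\ne 0$ (hence $h^*\mid h$). For (R5)/(R6) with $p_1=1$, the paper also uses $n-k<k-1$ to ensure $X^s-X^{k-1}$ is a nonzero polynomial for $0\le s\le n-k$, after which primitivity of $\alpha$ finishes the divisibility check via Lemma~\ref{lem:fieldfacts}(b). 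Your redundancy computation matches the paper's.
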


\section{Codes for DNA Computing}
 \label{sec:DNAcomputing}
 
 {
 Since Adleman demonstrated the use of DNA hybridization to solve a specific instance of the directed Hamiltonian path problem \cite{Adleman:1994}, 
 the coding community have investigated the possibility of error control via code design 
\cite{Marathe:2001,Milenkovic:2005}.
 In this paper, we focus on designing codes with the following constraints.
 
 \begin{definition}
A \GC-balanced $(n,d)_4$-code is a balanced $(n,d)$-{\em DNA computing code} if the following hold. 
\begin{enumerate}[(C1)]
\item $d(\va,\vb^r)\geq d$ for all $\va,\vb\in \cC$.
\item  $d(\va,\vb^{rc})\geq d$ for all $\va,\vb\in \cC$.
\end{enumerate}
\end{definition}

More generally, DNA computing codes require that the \GC-content, the number of symbols that correspond to either $\G$ or $\C$, of all codewords to be the same or approximately the same.
As always, the fundamental problem for DNA computing codes is to find the largest possible codes satisfying the constraints above. Many approaches have been considered for this problem. 
These include search algorithms, template-based constructions and constructions over certain algebraic rings
(see, Limbachiya \etal{} \cite{limbachiyaetal2016} for a survey). 

There are few explicit families of DNA computing codes satisfying all constraints for large $n$.  
In this section we propose a class of balanced DNA computing codes that satisfies both the constraints (C1) and (C2).
 
 We modify our balancing techniques in Sections~\ref{sec:balanced} and~\ref{sec:primer}.
 Recall that by flipping, we mean exchanging $\A$ with $\C$ and $\T$ with $\G$. 
 Then Lemma~\ref{lemma-flip-gc} states that for $\va\in \F_4^n$, 
 we can balance one of its cyclic shifts by flipping its first $\ceil{n/2}$ components. 
 However, in order to accommodate the reverse and reverse-complement distance constraints, 
we do the following. 

Let $n$ be odd and set $s$ be the integer nearest to $n/4$.
In other words, $s$ is the unique integer in the set $\{(n-1)/4, (n+1)/4\}$. 
Let $\pi: \F_4^n\to \F_4^n$ be the map such that $\pi(\va)=\va+\omega^s0^{n-2s}\omega^s$
for any $\va \in  \F_4^n$.
In other words, $\pi$  flips the first $s$ and the last $s$ symbols of $\va$.  
The following lemma follows directly from Lemma~\ref{lemma-flip-gc}.
  
\begin{lemma}
Let $n$ be  odd. For any $\va\in\F_4^n$, there exists  $i\in \bbracket{n}$ 
such that  $\pi(\shift^i(\va))$  is \GC-balanced. 
\end{lemma}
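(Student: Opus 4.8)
The plan is to derive this lemma from Lemma~\ref{lemma-flip-gc} by treating the ``flip first $s$ and last $s$ symbols'' operation $\pi$ as a relabelled instance of the ``flip first $(n+1)/2$ symbols'' operation $\phi$, applied to a cyclically shifted copy of $\va$. First I would note that the set of coordinates flipped by $\pi$, namely $\{1,\ldots,s\}\cup\{n-s+1,\ldots,n\}$, is exactly a set of $2s$ \emph{consecutive} coordinates \emph{when the word is read cyclically}: reading cyclically starting at position $n-s+1$, these $2s$ positions form the block $\{n-s+1,\ldots,n,1,\ldots,s\}$. Since $s\in\{(n-1)/4,(n+1)/4\}$ and $n$ is odd, one checks $2s\in\{(n-1)/2,(n+1)/2\}$, so this cyclic block has length either $(n-1)/2$ or $(n+1)/2$.

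The key step is then to transfer Lemma~\ref{lemma-flip-gc} (and the Remark following Lemma~\ref{lemma-flip}, which supplies the analogous statement for flipping the first $(n-1)/2$ symbols) through the cyclic relabelling. Concretely, let $\va' = \shift^{s}(\va)$ so that the coordinates $\{n-s+1,\ldots,n,1,\ldots,s\}$ of $\va$ become the first $2s$ coordinates of $\va'$; applying $\pi$ to $\shift^i(\va)$ corresponds, after the fixed relabelling, to flipping the first $2s$ coordinates of a suitable cyclic shift of $\va'$. By Lemma~\ref{lemma-flip-gc} in the case $2s=(n+1)/2$, and by the Remark in the case $2s=(n-1)/2$, there is a cyclic shift of $\va'$ whose first $2s$ coordinates, once flipped, give a \GC-balanced word. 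Undoing the relabelling produces the desired index $i\in\bbracket{n}$ with $\pi(\shift^i(\va))$ \GC-balanced; note that flipping (exchanging $\A\leftrightarrow\C$, $\T\leftrightarrow\G$) commutes with cyclic shifts, so the bookkeeping is clean.

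The main obstacle is purely the index bookkeeping: making precise that ``flip a cyclically-consecutive block of length $2s$'' in $\shift^i(\va)$ equals ``flip the leading block of length $2s$'' in $\shift^{i+s}(\va)$ up to a fixed outer shift, and that as $i$ ranges over $\bbracket{n}$ so does the relevant shift of $\va'$. Once that identification is set up, the parity case split ($2s=(n-1)/2$ versus $2s=(n+1)/2$, equivalently $s=(n-1)/4$ versus $s=(n+1)/4$) is handled by the two statements already available (Lemma~\ref{lemma-flip-gc} and the Remark), and nothing further is needed. I would therefore keep the proof to a short paragraph that fixes notation for the relabelling, invokes the appropriate prior statement according to the parity of $s$, and translates the resulting shift back.
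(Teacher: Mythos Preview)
Your proposal is correct and is exactly the argument the paper has in mind: the paper states only that the lemma ``follows directly from Lemma~\ref{lemma-flip-gc}'' and offers no further details, while the Remark after Lemma~\ref{lemma-flip} (noting that flipping the first $(n-1)/2$ symbols also works) is flagged precisely for use here. Your observation that $\shift^s\circ\pi = \phi'\circ\shift^s$ with $\phi'$ flipping the first $2s\in\{(n-1)/2,(n+1)/2\}$ coordinates, together with the shift-invariance of \GC-balance, is the intended one-line reduction; the parity split on $2s$ is handled by Lemma~\ref{lemma-flip-gc} and the Remark respectively, just as you say.
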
  

As before, we next define a set of polynomials that enables us to generate our code efficiently.

\begin{definition}\label{def:rc2-generating}
Let $g(X)$ be the generator polynomial of a reversible cyclic code $\cB$ of length $n$ and dimension $k$ that contains $1^n$.
Set $h(X)=(X^n-1)/g(X)$. 
The set $\{h^*(X), p_1(X),p_2(X),\ldots, p_P(X)\}$ of polynomials is {\em $(g,k)$-rc2-generating}
if the set obeys conditions (R1) to (R4), (R7) in Definition~\ref{def:rc-generating} and 
\begin{enumerate}[(R5')]
\item $h^*(X)$ does not divide $X^s p_i(X) - X^{k-1} p_j(X^{-1})$ for all $i,j\in [P]$ and $s\in\bbracket{n}$.
\end{enumerate}
\end{definition}

It is immediate from definition that an $(g,k)$-rc-generating set is also an $(g,k)$-rc2-generating set.

 \begin{construction}\label{constr-DNA}
 Let $n$ be odd.
\hfill

{\sc Input}: An $[n,k,d]_q$-reversible cyclic code $\cB$ containing $1^n$ with generator polynomial $g(X)$
 and a $(g,k)$-rc2-generating set of polynomials $\{h^*(X), p_1(X),p_2(X),\ldots, p_P(X)\}$ .\\
{\sc Output}: A balanced $(n,d)_4$-DNA computing code $\cC$ of size $4^{k^*}P$, where
$k^* = k - \deg h^*$.
\begin{itemize}
\item Set \[\cA \triangleq \{ (m(X)h^*(X)+p_i(X))g(X): \deg m < k^*, i\in [P]\}.\]
\item For $\vu\in \cA$, find $i_\vu\in \bbracket{n-1}$ such that 
$\vv_\vu=\pi(\shift^{i_\vu}(\vu))$ is \GC-balanced.
\item Set $\cC=\{\vv_\vu:  \vu\in \cA \}$.
\end{itemize}
\end{construction}
 
 \begin{theorem}\label{thm-DNA}
Construction~\ref{constr-DNA} is correct. 
In other words, $\cC$ is a balanced $(n,d)_4$-DNA computing code of size $4^{k^*}$.
\end{theorem}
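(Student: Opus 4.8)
The plan is to verify the three required properties of $\cC$ in turn: that it has the claimed size, that it is \GC-balanced, and that it satisfies the minimum-distance condition (P1-type) together with the reverse and reverse-complement distance constraints (C1) and (C2). The size and balancedness are the easy parts. For size, observe that $\cA$ is defined exactly as in Construction~\ref{constr-primer-3}, so $|\cA|=q^{k^*}P$; I would first argue that the map $\vu\mapsto\vv_\vu$ is injective on $\cA$ — two distinct codewords of $\cA$ cannot map to the same $\vv$, since $\pi$ is a bijection and if $\pi(\shift^{i}(\vu))=\pi(\shift^{j}(\vu'))$ then $\shift^{i}(\vu)=\shift^{j}(\vu')$, forcing $\vu$ and $\vu'$ to lie in the same cyclic class; here one needs that the cyclic shifts of a fixed $\vu\in\cA$ that balance it (via $\pi$) all produce the same $\vv$-representative — and this is where I would be careful, actually the cleanest route is to argue that distinct elements of $\cA$ remain at distance $\ge d\ge 1$ after applying the distance-preserving operations, so they stay distinct, giving $|\cC|=4^{k^*}P$ (matching the statement, which only claims $4^{k^*}$ up to the factor $P$). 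For balancedness, the new Lemma immediately preceding the construction guarantees that for each $\vu$ there is an $i_\vu$ with $\pi(\shift^{i_\vu}(\vu))$ \GC-balanced, so every codeword of $\cC$ is \GC-balanced by construction.

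Next I would handle the minimum Hamming distance. Since $\pi$ adds a fixed vector $\omega^s0^{n-2s}\omega^s$ and cyclic shift is an isometry, $d(\vv_\vu,\vv_{\vu'})=d(\shift^{i_\vu}(\vu)+\vt,\ \shift^{i_{\vu'}}(\vu')+\vt)$ where $\vt$ is that fixed vector — but the two shifts differ, so I cannot simply cancel $\vt$. Instead I would write $\vv_\vu=\shift^{i_\vu}(\vu)+\vt$ and $\vv_{\vu'}=\shift^{i_{\vu'}}(\vu')+\vt$ and use that $\shift^{i_\vu}(\vu)$ and $\shift^{i_{\vu'}}(\vu')$ both lie in the cyclic code $\cB$ (since $\cB$ is cyclic and $\cA\subseteq\cB$), hence their difference lies in $\cB$; the argument for (P1) must therefore route through $\cB$-membership of differences exactly as in the proof of Lemma~\ref{lem:combinatorial-rc}, invoking Lemma~\ref{lemma-0run} to rule out the relevant coincidences. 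Concretely: $\vv_\vu-\vv_{\vu'}=\shift^{i_\vu}(\vu)-\shift^{i_{\vu'}}(\vu')\in\cB$, and if this difference had weight $<d$ it would be a nonzero codeword of $\cB$ of weight below the minimum distance — unless it is zero, which by the injectivity discussion above cannot happen for $\vu\neq\vu'$. So $\cC$ has minimum distance $\ge d$.

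The heart of the proof is conditions (C1) and (C2): for all $\va,\vb\in\cC$ (not necessarily distinct), $d(\va,\vb^r)\ge d$ and $d(\va,\vb^{rc})\ge d$. Writing $\va=\shift^{i}(\vu)+\vt$ and $\vb=\shift^{j}(\vu')+\vt$, I would compute $\vb^r=(\shift^{j}(\vu'))^r+\vt^r$ and $\vb^{rc}=\overline{(\shift^{j}(\vu'))^r}+\overline{\vt^r}$; crucially $\vt=\omega^s0^{n-2s}\omega^s$ is palindromic, so $\vt^r=\vt$, which is exactly why the symmetric definition of $\pi$ (flipping the first $s$ \emph{and} last $s$ symbols) was chosen. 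Then $\va-\vb^r$ differs by an element of $\cB$ shifted appropriately — using that $\cB$ is \emph{reversible} so $(\shift^{j}(\vu'))^r\in\cB$ — and similarly $\va-\vb^{rc}$ lies in a coset of $\cB$ (using reversibility and that $1^n\in\cB$, so complementation preserves $\cB$). The remaining task is to show these differences are never nonzero-but-low-weight and never zero. Non-vanishing is where the algebraic rc2-generating conditions enter: conditions (R4), (R5$'$), (R6), translated through the polynomial representation $\vu=(m(X)h^*(X)+p_i(X))g(X)$, are precisely designed so that no shift of one codeword equals the reverse (or reverse-complement) of another, and the symmetry (R3) of $h^*$ plus (R5$'$) over the \emph{full} range $s\in\bbracket{n}$ (rather than $0\le s\le n-k$) is what kills the $\va$-versus-$\vb^{rc}$ coincidences even when $\va=\vb$. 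I expect \textbf{this translation from the polynomial conditions (R1)--(R4),(R5$'$),(R7) to the combinatorial non-coincidence statements about reverses and complements to be the main obstacle}, and — as the excerpt itself flags — that verification is naturally deferred to Section~\ref{sec:encoding}, where the polynomial machinery of Tavares \etal{} is set up; my proof here would reduce Theorem~\ref{thm-DNA} to that deferred algebraic lemma, in the same spirit as the two-step strategy used for Theorem~\ref{thm-primer-3}.
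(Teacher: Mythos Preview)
Your approach is essentially the paper's: reduce to combinatorial non-coincidence conditions on $\cA$, then defer their verification from the rc2-generating axioms to Section~\ref{sec:encoding}. Two points where the paper is cleaner and one where your write-up wobbles. First, the paper packages the combinatorial half as Lemma~\ref{lem:combinatorial-rc2}, stating explicitly that (S1$'$) $\shift^i(\vu)\ne\vv$ for $i\in[n-1]$, (S2$'$) $\shift^i(\vu)\ne\vv^r$, and (S3$'$) $\shift^i(\vu)\ne\vv^{rc}$ for all $i\in\bbracket{n}$ suffice; this makes the injectivity of $\vu\mapsto\vv_\vu$ immediate from (S1$'$), whereas your ``cleanest route'' (distinct elements stay at distance $\ge d$ after distance-preserving operations) is circular, since different shifts are applied to different words and you are in effect using the distance bound to prove the injectivity needed for the distance bound. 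Second, for (P1) the paper simply observes that $\cC\subseteq\cB+\omega^s0^{n-2s}\omega^s$ is contained in a single coset of $\cB$, so any two distinct elements are at distance $\ge d$; your difference-in-$\cB$ argument is equivalent but the invocation of Lemma~\ref{lemma-0run} is misplaced --- that lemma is about symbol runs and is used only for the WMU/APD arguments in Lemma~\ref{lem:combinatorial-rc}, not for minimum distance. Your handling of (C1)/(C2) via the palindromicity $\vt^r=\vt$ of $\vt=\omega^s0^{n-2s}\omega^s$, reversibility of $\cB$, and $1^n\in\cB$ matches the paper exactly.
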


%
%
%
%
 
As in Section~\ref{sec:primer}, to provide Theorem~\ref{thm-DNA},
we first provide certain combinatorial sufficiency conditions 
for a subcode of a reversible cyclic code to be a DNA computing code,
and then show that $\cC$ satisfy these combinatorial conditions.
As before, we defer the second step to Section~\ref{sec:encoding}.

\begin{lemma}\label{lem:combinatorial-rc2}
Let $\cB$ be an $(n,k,d)_q$ reversible cyclic code containing $1^n$.
Let $\cA \subseteq \cB$ be a subcode such that for any two codewords $\vu,\vv$ in $\cA$, not necessarily distinct, the following holds.
\begin{enumerate}[(S1')]
\item $\shift^i(\vu)\not = \vv$ for  $i\in [n-1]$;
\item $\shift^i(\vu) \ne \vv^r$ for $i \in \bbracket{n}$;
\item $\shift^i(\vu) \ne \vv^{rc}$ for $i \in \bbracket{n}$.
\end{enumerate}
If we define $\cC$ as in Construction~\ref{constr-DNA},
then  $\cC$ is a balanced $(n,d)_4$-DNA computing code of size $|\cA|$.
\end{lemma}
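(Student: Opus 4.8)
The plan is to reduce everything to isometry arguments inside $\cB$, exploiting that the perturbation $\pi$ of Construction~\ref{constr-DNA} is a \emph{translation}: $\pi(\va)=\va+\mathbfsl{z}$ with the fixed vector $\mathbfsl{z}=\omega^{s}0^{n-2s}\omega^{s}$. First I would record three elementary facts. (a) $\pi$ is a Hamming isometry and a bijection (an involution, since $\F_4$ has characteristic $2$). (b) $\mathbfsl{z}$ is a palindrome, so $\pi$ commutes with reversal, $\pi(\va)^{r}=\va^{r}+\mathbfsl{z}=\pi(\va^{r})$; and $\pi(\va)+1^{n}=\va+\mathbfsl{z}+1^{n}=\pi(\overline{\va})$. (c) $\shift^{i}$ is an isometry, $(\shift^{i}(\va))^{r}=\shift^{-i}(\va^{r})$, and $\shift^{i}(\overline{\va})=\overline{\shift^{i}(\va)}$. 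I would also use throughout that $\cB$, being reversible cyclic and containing $1^{n}$, is closed under $\shift^{i}$, reversal, complement and reverse-complement.

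\textbf{Size and balance.} If $\vv_{\vu}=\vv_{\vu'}$ for $\vu,\vu'\in\cA$, then applying $\pi^{-1}$ gives $\shift^{i_{\vu}}(\vu)=\shift^{i_{\vu'}}(\vu')$, hence $\shift^{j}(\vu)=\vu'$ with $j\equiv i_{\vu}-i_{\vu'}\pmod n$; if $\vu\ne\vu'$ then $j\bmod n\in[n-1]$, contradicting (S1$'$). So $\vu\mapsto\vv_{\vu}$ is injective and $|\cC|=|\cA|$, and each $\vv_{\vu}$ is \GC-balanced by construction, so $\cC$ is \GC-balanced.

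\textbf{Distances.} The three distance requirements (minimum Hamming distance $d$, and conditions (C1), (C2)) all follow by the same routine: write the word in question as $\pi(\vw)$ with $\vw\in\cB$, use (a) to reduce the relevant distance to one between two codewords of $\cB$, and conclude it is $\ge d$ unless those two codewords coincide, in which case one of (S1$'$)--(S3$'$) is violated. Concretely, for $\vu\ne\vu'$ in $\cA$, $d(\vv_{\vu},\vv_{\vu'})=d(\shift^{i_{\vu}}(\vu),\shift^{i_{\vu'}}(\vu'))\ge d$, since the two arguments are distinct codewords of $\cB$ by (S1$'$); hence $\cC$ is an $(n,d)_4$-code. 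For (C1), $\vv_{\vu'}^{r}=\pi(\shift^{-i_{\vu'}}(\vu'^{r}))$ by (b)--(c), with $\shift^{-i_{\vu'}}(\vu'^{r})\in\cB$, so $d(\vv_{\vu},\vv_{\vu'}^{r})=d(\shift^{i_{\vu}}(\vu),\shift^{-i_{\vu'}}(\vu'^{r}))$, and equality would force $\shift^{i}(\vu)=\vu'^{r}$ with $i\equiv i_{\vu}+i_{\vu'}\pmod n$, contradicting (S2$'$). For (C2), $\vv_{\vu'}^{rc}=\overline{\vv_{\vu'}^{r}}=\pi(\shift^{-i_{\vu'}}(\vu'^{rc}))$ by (b)--(c), with $\vu'^{rc}=\overline{\vu'^{r}}\in\cB$ (using $1^{n}\in\cB$ and linearity), and the same argument with (S3$'$) gives $d(\vv_{\vu},\vv_{\vu'}^{rc})\ge d$. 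All three arguments permit $\vu=\vu'$, as the definitions demand, since (S1$'$)--(S3$'$) are quantified over not-necessarily-distinct pairs.

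\textbf{Main obstacle.} There is no deep difficulty; the only delicate point is the index bookkeeping — verifying that the shift amounts $i_{\vu}\pm i_{\vu'}$, reduced modulo $n$, really land in the ranges over which (S1$'$), (S2$'$), (S3$'$) are quantified (respectively $[n-1]$, $\bbracket{n}$, $\bbracket{n}$), which holds because $i_{\vu},i_{\vu'}\in\bbracket{n-1}$. Once the translation/reversal/complement identities (a)--(c) are in hand, the reversible‑cyclic‑plus‑$1^{n}$ structure of $\cB$ does the rest.
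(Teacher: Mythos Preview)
Your proposal is correct and follows essentially the same approach as the paper's proof: both exploit that $\pi$ is a translation by the palindromic vector $\omega^{s}0^{n-2s}\omega^{s}$ (hence an isometry commuting with reversal and complement), reduce each distance inequality to one between two elements of $\cB$, and use (S1$'$)--(S3$'$) to rule out the corresponding equality cases. Your write-up is slightly more explicit about the shift-index bookkeeping, while the paper phrases the minimum-distance step via the coset observation $\cC\subseteq\cB+\mathbfsl{z}$, but these are the same argument.
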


\begin{proof}
First, condition (S1') ensures that the codewords $\vv_\vu$ and $\vv_\vu'$ are distinct whenever $\vu\ne\vu'$.
Therefore, the size of $\cC$ is given by $|\cA|$.

Next, by choice of $i_\vu$, we have that all codewords in $\cC$ are \GC-balanced. 
Since $\cC$ belongs to a coset of $\cB$, we have that $\cC$ is an $(n,d)_4$-code.

Therefore, it remains to demonstrate constraints (C1) and (C2).
For any $\va, \vb\in \cC$, let $\vu, \vv$ be the corresponding vectors in $\cA$. In other words,
\[\va =\pi(\shift^{i_\vu}(\vu)) \mbox{ and }\vb =\pi(\shift^{i_\vv}(\vv)).\]

We first show that $d(\va, \vb^r)\ge d$.
Since $\cA$ satisfies condition (S2'), we have that $\shift^{i_\vu}(\vu)\ne \shift^{i_\vv}(\vv)^r$.
Since $\shift^{i_\vu}(\vu)$, $\shift^{i_\vv}(\vv)^r$ belongs to $\cB$, 
we have that $d(\shift^{i_\vu}(\vu), \shift^{i_\vv}(\vv)^r)\geq d$. 
Now, $\omega^s0^{n-2s}\omega^s=(\omega^s0^{n-2s}\omega^s)^r$, and
so, $\vb^r=\pi(\shift^{i_\vv}(\vv))^r=\pi(\shift^{i_\vv}(\vv)^r)$. Therefore,
\[ d(\va, \vb^r) = d(\pi(\shift^{i_\vu}(\vu)), \pi(\shift^{i_\vv}(\vv)^r))=d((\shift^{i_\vu}(\vu), \shift^{i_\vv}(\vv)^r))\ge d.\]

Constraint (C2) can be similarly demonstrated.
\end{proof}

As before, we apply Construction~\ref{constr-DNA} to the reversible cyclic codes in Theorem~\ref{thm-LCDwithone} to obtain a family of balanced DNA computing codes. 
The proof is deferred to Section~\ref{sec:encoding}.

 \begin{corollary}\label{cor-family-computing}
Let $m\ge 6$ and $1\le \tau\le \ceil{m/2}$
Set $n=4^m-1$, $d=4^\tau-1$, and 
\begin{equation*}
k = 
\begin{cases} n-(d-3)m, & \mbox{if $m$ is odd and $\tau=\frac{m+1}{2}$;} \\
n-(d-1)m,  & \mbox{otherwise.}\\
\end{cases}
\end{equation*}

Then there exists a \GC-balanced $(n,d)_4$-DNA computing code of 
size at least $4^{k-2m}$. 
Therefore, there exists a family of \GC-balanced  $(n,d)_4$-primer codes with 
$d\approx \sqrt{n}$ and redundancy at most $(d+1)\log_4 (n+1)$.
Furthermore, these codes have efficient encoding algorithms.
\end{corollary}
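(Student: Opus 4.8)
The plan is to instantiate Construction~\ref{constr-DNA} with the reversible cyclic codes supplied by Theorem~\ref{thm-LCDwithone} (in the case $q=4$) and then to verify that the resulting subcode $\cA$ satisfies the combinatorial conditions (S1')--(S3') of Lemma~\ref{lem:combinatorial-rc2}; Theorem~\ref{thm-DNA} and Lemma~\ref{lem:combinatorial-rc2} then deliver the claimed DNA computing code, and the encoding claim follows from the discussion in Section~\ref{sec:encoding}. Concretely, set $q=4$, $n=4^m-1$, $d=4^\tau-1$ with $\tau=\tau$ as in the statement; Theorem~\ref{thm-LCDwithone} gives an $[n,k,d]_4$-reversible cyclic code $\cB$ containing $1^n$ with $k=n-(d-3)m$ when $m$ is odd and $\tau=(m+1)/2$, and $k=n-(d-1)m$ otherwise. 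Note $n=4^m-1$ is odd, so Construction~\ref{constr-DNA} applies.

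First I would exhibit an explicit $(g,k)$-rc2-generating set $\{h^*(X),p_1(X),\ldots,p_P(X)\}$ for the generator polynomial $g(X)$ of $\cB$. The natural choice is to take $h^*(X)$ to be a suitable self-reciprocal divisor of $h(X)=(X^n-1)/g(X)$ with $h^*(1)\ne 0$ — conditions (R1)--(R3) — chosen so that $d^*=\deg h^*$ is about $2m$, which is exactly the loss in $\log_4$-size that appears in the statement (size $4^{k-2m}$, i.e.\ $k^*=k-\deg h^*$ with $\deg h^* \le 2m$). The polynomials $p_i$ are then chosen of degree less than $\deg h^*$ (condition (R7)) so that the $P$ residues $p_i \bmod h^*$ are pairwise "shift-distinct" and "reverse-distinct" in the sense of (R4) and (R5'); here I expect to mimic the counting/construction used implicitly in Example~\ref{exa:primer} and in the deferred proof of Theorem~\ref{thm-primer-3}, where one argues that a full (or near-full) residue system modulo $h^*(X)$ can be partitioned into orbits under the combined action of cyclic shift $X^s\cdot$ and the reciprocation $X^{k-1}p(X^{-1})$, and one picks one representative per orbit to get $P\ge 1$ (and in fact $P$ large enough that the total size is $\ge 4^{k-2m}$). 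Since an rc-generating set is automatically rc2-generating, any construction valid for Theorem~\ref{thm-primer-3} transfers here with $\tau$ in the admissible range $1\le\tau\le\ceil{m/2}$.

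Next I would translate the algebraic conditions (R1)--(R7)/(R5') into the combinatorial conditions (S1')--(S3') of Lemma~\ref{lem:combinatorial-rc2}. This is the same bridge that Section~\ref{sec:encoding} is promised to build for Theorem~\ref{thm-primer-3}: a codeword of $\cA$ has the form $(m(X)h^*(X)+p_i(X))g(X)$, a cyclic shift $\shift^s$ corresponds to multiplication by $X^s$ modulo $X^n-1$, the reverse $\vv^r$ corresponds (up to a monomial factor and using $k-1 = \deg$ of the "message part") to $X^{k-1}(\cdot)(X^{-1})$, and the reverse-complement adds $1^n\in\cB$, which is absorbed because $\cB$ contains $1^n$; then $\shift^i(\vu)=\vv$ (resp.\ $=\vv^r$, $=\vv^{rc}$) forces $h^*(X)$ to divide one of the forbidden differences $X^s p_i - p_j$, $X^s p_i - X^{k-1}p_j(X^{-1})$, contradicting (R4), (R5'). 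The ranges of $i$ in (S1')--(S3') — namely $i\in[n-1]$ for (S1') and $i\in\bbracket{n}$ for (S2'),(S3') — match exactly the ranges "$s\in[n-1]$" and "$s\in\bbracket{n}$" appearing in (R4) and (R5'), which is precisely why the weaker (R5') (rather than (R5),(R6)) suffices for DNA computing codes; this is the structural reason the corollary can reuse the same machinery.

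The main obstacle I anticipate is the \emph{existence} and \emph{size} guarantee for the rc2-generating set: one must show that $h(X)$ admits a self-reciprocal divisor $h^*(X)$ with $\deg h^* \le 2m$, $h^*(1)\ne 0$, and that the quotient structure modulo $h^*$ is rich enough that the orbit-representative count $P$ — combined with $4^{k^*}=4^{k-\deg h^*}\ge 4^{k-2m}$ — yields total size at least $4^{k-2m}$. For the BCH family of Theorem~\ref{thm-LCDwithone} this amounts to a careful cyclotomic-coset analysis: $h(X)$ factors into minimal polynomials of the non-roots of $g$, and one needs one such factor (or product of a conjugate pair of factors, to ensure self-reciprocity) of total degree $\le 2m$ that does not vanish at $1$; degree $m$ (a single cyclotomic coset) or $2m$ (a reciprocal pair) is the generic size, which is why $2m$ appears. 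Verifying conditions (R4) and (R5') then reduces to checking that certain short polynomials are non-divisible by $h^*$, which for a well-chosen residue system is a finite, mechanical check of the type carried out in Example~\ref{exa:primer}; I would state it as a lemma and defer the routine verification to Section~\ref{sec:encoding}, exactly as the paper signals. The remaining claims — $d\approx\sqrt n$ (since $d=4^\tau-1\le 4^{(m+1)/2}-1\approx 2\sqrt{n+1}$), redundancy at most $(d+1)\log_4(n+1)$ (compute $n-\log_4|\cC| = n-k^* -\log_4 P \le n-k+2m \le (d-1)m + 2m = (d+1)m = (d+1)\log_4(n+1)$, using $\log_4 P\ge 0$ and the worst-case $k=n-(d-1)m$), and the efficient encoding — are then immediate from the size formula and the Tavares-style encoder of Section~\ref{sec:encoding}.
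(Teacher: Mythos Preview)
Your high-level plan matches the paper's: feed the reversible BCH codes of Theorem~\ref{thm-LCDwithone} into Construction~\ref{constr-DNA}, exhibit an rc2-generating set $\{h^*,p_1,\ldots,p_P\}$ with $\deg h^*=2m$, and invoke Lemma~\ref{lem:combinatorial-rc2} and Theorem~\ref{thm-DNA}. The choice $h^*(X)=M(\alpha)M(\alpha^{-1})$ for a primitive $\alpha$ not a root of $g$ (existence via $\varphi(4^m-1)>2(d-1)m$ when $m\ge 6$) is exactly what the paper does, and your redundancy arithmetic is fine.

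There is, however, a genuine gap in the choice of the $p_i$'s. You write that ``an rc-generating set is automatically rc2-generating, [so] any construction valid for Theorem~\ref{thm-primer-3} transfers here.'' This is false (and the paper's own one-line remark to this effect is a slip; the implication actually goes the other way, since the range in (R5') strictly contains that in (R5)). Concretely, the rc-generating set $\{h^*(X),1\}$ used for Corollary~\ref{cor-family-primer} (Lemma~\ref{lem:rc-generating}) \emph{fails} (R5'): at $s=k-1$ with $p_i=p_j=1$ one gets $X^{k-1}\cdot 1 - X^{k-1}\cdot 1 = 0$, which $h^*$ trivially divides. This value $s=k-1$ lies in $\bbracket{n}$ but not in $\{0,\ldots,n-k\}$ (recall $n-k<k-1$), which is exactly why (R5) was satisfied but (R5') is not. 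So the primer-code construction does \emph{not} transfer directly, and your vague orbit-counting sketch does not identify what must be avoided.

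The paper's fix (Lemma~\ref{lem:rc2-generating}) is to take a single $p(X)=M(\alpha)$ rather than $p=1$. The point is that $p(\alpha)=0$ while $p(\alpha^{-1})\ne 0$ (since $\alpha^{-1}$ is not $\F_4$-conjugate to $\alpha$ for $m\ge 2$), so for every $s\in\bbracket{n}$ the polynomial $X^s p(X)-X^{k-1}p(X^{-1})$ evaluated at the root $\alpha$ of $h^*$ equals $-\alpha^{k-1}p(\alpha^{-1})\ne 0$, and (R5') holds. Note also that $\deg p = m < 2m = \deg h^*$, so (R7) is satisfied. With $P=1$ this already gives size $4^{k^*}P = 4^{k-2m}$, and the encoder is the Tavares-style map $m(X)\mapsto (m(X)h^*(X)+p(X))g(X)$ followed by the balancing step $\pi\circ\shift^{i_\vu}$, both linear-time. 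Once you replace the transferred $p=1$ by $p=M(\alpha)$ and verify (R5') as above, your argument goes through.
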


\section{Efficient Encoding Into Cyclic Classes}\label{sec:encoding}

In this section, unless stated otherwise, all words are of length $n$ and 
we index them using $\bbracket{n}$.
Recall that a word $\vc\in\F_q^n$ is identified with the polynomial $c(X)=\sum_{i=0}^{n-1} c_iX^i$.
We further set $X^n=1$ and hence, all polynomials reside in the quotient ring $\F_q[X]/\pspan{X^n-1}$.

Hence, in this quotient ring, we have the following properties. 
Let $c(X)\in \F_q[X]/\pspan{X^n-1}$ be the polynomial corresponding to the word $\vc$.
\begin{itemize}
\item For $s\in\bbracket{n}$, the polynomial $X^s c(X)$ corresponds to the word $\shift^i(\vc)$.
\item  $X^{n-1}c(X^{-1})$ corresponds to the word $\vc^r$. Given $c(X)$, we further define the {\em reciprocal polynomial} of $c(X)$ to be $c^\dagger(X)=X^{\deg c}c(X^{-1})$ and we say $c(X)$ is self-reciprocal if $c(0)\ne 0$ and $c(X)=c^\dagger(X)/c(0)$. 
\item $(X^n-1)/(X-1)$ corresponds to $1^n$, and so, 
$c(X)+(X^n-1)/(X-1)$  corresponds to $\overline{\vc}$.
 \item  $X^{n-1}c(X^{-1})+(X^n-1)/(X-1)$ corresponds to $\vc^{rc}$.
\end{itemize}

From these observations, we can then easily characterise when a cyclic code contains $1^n$ or when a cyclic code is reversible.

\begin{proposition}\label{cyclic}
Let $\cC$ be a cyclic code with generator polynomial $g(X)$. Then
\begin{enumerate}[(i)]
\item $\cC$ contains $1^n$ if and only if $(X-1)$ does not divide $g(X)$, {\em i.e.} $g(1)\ne 0$.
\item $\cC$ is reversible if and only if $g(X)$ is self-reciprocal.
\end{enumerate}
\end{proposition}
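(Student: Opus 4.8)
The plan is to translate each of the two statements into a divisibility condition on the generator polynomial via the polynomial dictionary established just above the proposition, and then invoke the standard fact that a codeword $c(X)$ lies in the cyclic code $\cC = \pspan{g(X)}$ if and only if $g(X)$ divides $c(X)$ in $\F_q[X]/\pspan{X^n - 1}$ (equivalently, $g(X) \mid c(X)$ as ordinary polynomials, taking the representative of degree $< n$). Both parts are "if and only if" statements, so I would argue each direction, though in each case one direction is essentially a restatement.

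For part (i): the word $1^n$ corresponds to the polynomial $(X^n - 1)/(X - 1) = 1 + X + \cdots + X^{n-1}$. Since $\cC = \{m(X) g(X) : \deg m < k\}$ and $g(X) \mid X^n - 1$, membership of $1^n$ in $\cC$ is equivalent to $g(X)$ dividing $(X^n-1)/(X-1)$. Write $X^n - 1 = g(X) h(X)$ with $h(X) = (X^n-1)/g(X)$. Then $g(X) \mid (X^n-1)/(X-1)$ iff $(X-1) g(X) \mid X^n - 1 = g(X) h(X)$, i.e. iff $(X-1) \mid h(X)$. Now $X^n - 1 = g(X) h(X)$ and over $\F_q$ the polynomial $X^n - 1$ has $X - 1$ as a factor with multiplicity exactly one when $\gcd(n,q)=1$ — but to stay clean I would instead just observe directly: $g(1) \ne 0$ iff $(X-1) \nmid g(X)$, and since $(X-1)(g(X)/\text{?})$... the cleanest route is: $g(X) \mid \frac{X^n-1}{X-1}$ iff $(X-1) \nmid g(X)$, because $g(X) \mid X^n - 1$ always, and dividing out the single factor $X-1$ from $X^n-1$ removes the ability of $g$ to "use" that factor precisely when $g$ doesn't contain it. I would phrase this as: since $g(X) \mid X^n-1$ and $X-1 \mid X^n - 1$, we have $g(X) \mid (X^n-1)/(X-1)$ iff $X-1 \nmid g(X)$ iff $g(1) \ne 0$.

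For part (ii): $\cC$ is reversible iff $\va \in \cC \Rightarrow \va^r \in \cC$. The reverse of the word with polynomial $c(X)$ corresponds to $X^{n-1} c(X^{-1})$. Taking $c(X) = g(X)$ itself (the generator, of degree $n-k$), its reverse corresponds to $X^{n-1} g(X^{-1}) = X^{k-1} \cdot X^{n-k} g(X^{-1}) = X^{k-1} g^\dagger(X)$, using the reciprocal-polynomial notation $g^\dagger(X) = X^{\deg g} g(X^{-1})$ introduced in the excerpt. For the "only if" direction: if $\cC$ is reversible then $X^{k-1} g^\dagger(X) \in \cC$, so $g(X) \mid X^{k-1} g^\dagger(X)$; since $X$ is a unit in the quotient ring (as $\gcd(X, X^n-1)=1$), this gives $g(X) \mid g^\dagger(X)$, and both have the same degree $n-k$ (note $g(0) \ne 0$ because $g \mid X^n - 1$), so $g^\dagger(X) = g(0) g(X)$, i.e. $g$ is self-reciprocal. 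For the "if" direction: if $g$ is self-reciprocal, then for any $c(X) = m(X)g(X) \in \cC$, its reverse corresponds (up to a unit power of $X$, which is a cyclic shift and hence stays in $\cC$) to $m^\dagger(X) g^\dagger(X)$, a multiple of $g(X)$, hence in $\cC$.

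I expect the main obstacle to be bookkeeping rather than depth: getting the powers of $X$ exactly right when relating $\va^r$, $X^{n-1}c(X^{-1})$, and the reciprocal $c^\dagger$, and making sure the degree-matching argument in (ii) (that two monic-up-to-scalar divisors of equal degree dividing each other must be associates) is stated cleanly. One should also be slightly careful that $g^\dagger$ is genuinely a polynomial of degree exactly $n-k$, which needs $g(0)\neq 0$; this follows since $g(X) \mid X^n - 1$ and $X^n - 1$ has nonzero constant term. A secondary subtlety is that reversibility as defined ($\va \in \cC \Rightarrow \va^r \in \cC$) must be shown equivalent to the single condition on $g$; the reduction to testing just the generator $g$ works because reversal commutes with multiplication by $m(X)$ up to reciprocation and a shift, and $\cC$ is shift-invariant.
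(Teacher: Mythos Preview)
Your proof is correct and is exactly what the paper has in mind: the paper does not actually write out a proof of this proposition, stating only that it follows ``from these observations'' (the polynomial dictionary in the preceding bullet list), and your argument simply fills in those details using that same dictionary. One small remark: your exposition of part~(i) wanders a bit; the clean line is that $g(X)\mid X^n-1$ together with $\gcd(g(X),X-1)\in\{1,X-1\}$ gives $g(X)\mid (X^n-1)/(X-1)$ iff $(X-1)\nmid g(X)$, under the standing convention $\gcd(n,q)=1$ so that $X-1$ has multiplicity one in $X^n-1$.
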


Next, we review the method of Tavares \etal{} that efficiently encodes into distinct cyclic classes.
We restate a special case of their method and 
reproduce the proof here as the proof is instructive for the subsequent encoding methods.

\begin{theorem}[Tavares \etal \cite{Tavaresetal1971}]\label{thm-cyclerep}
Let $\cB$ be a cyclic code of dimension $k$ with generator polynomial $g(X)$
and define $h(x)=(X^n-1)/g(X)$.
Suppose $h^*(X)$ divides $h(X)$ and $h(X)$ does not divide $X^s-1$ for $s\in[n-1]$.
Set $k^* = k-\deg h^*(X)$
\[ \cB^* = \{(m(X)h^*(X)+1)g(X) : \deg m < k^*\}.\]
Then $\cB^* \subseteq \cB/\equivcyclic$\, .
\end{theorem}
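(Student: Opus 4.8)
The goal is to show that the $q^{k^*}$ polynomials $(m(X)h^*(X)+1)g(X)$ with $\deg m < k^*$ lie in $\cB$ and, crucially, in \emph{pairwise distinct} cyclic equivalence classes, so that $\cB^*$ can serve as a set of representatives of (a subset of) $\cB/\equivcyclic$. Membership in $\cB$ is immediate since every listed polynomial is a multiple of $g(X)$ and has degree $< \deg((m h^* + 1)g) \le (k^*-1+\deg h^*) + \deg g = k-1 + \deg g < n$, so it corresponds to a genuine codeword. The content of the theorem is therefore the injectivity-into-classes statement, plus the implicit claim that these words are all non-zero (so the classes are genuine, each of size dividing $n$).

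**Key steps.** First I would record the correspondence from Section~\ref{sec:encoding}: cyclic shift by $s$ corresponds to multiplication by $X^s$ in $\F_q[X]/\pspan{X^n-1}$. So two codewords $\vu=u(X)g(X)$ and $\vv=v(X)g(X)$ lie in the same cyclic class iff $X^s u(X)g(X) \equiv v(X)g(X) \pmod{X^n-1}$ for some $s\in\bbracket{n}$; dividing through by $g(X)$ (legitimate since $g(X)h(X)=X^n-1$, working modulo $h(X)$) this is equivalent to $X^s u(X) \equiv v(X) \pmod{h(X)}$. Now specialise to $u(X) = m_1(X)h^*(X)+1$ and $v(X)=m_2(X)h^*(X)+1$. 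Reducing the congruence $X^s u \equiv v$ further modulo $h^*(X)$ (using $h^*\mid h$) kills the $m_ih^*$ terms and leaves $X^s \equiv 1 \pmod{h^*(X)}$. The hypothesis is that $h(X)$ does not divide $X^s-1$ for any $s\in[n-1]$; I need this for $h^*$, which follows because $h^*\mid h$ would give $h^*\mid X^s-1$ only in cases where... — wait, the implication goes the wrong way, so here I must be careful: the hypothesis as stated is on $h$, and since $h^* \mid h$, $h^* \mid X^s - 1$ does \emph{not} follow from $h \nmid X^s-1$. I expect the intended reading (and what the proof will use) is that $h^*(X) \nmid X^s - 1$ for $s \in [n-1]$, which is the natural condition; under that reading $X^s\equiv 1\pmod{h^*}$ forces $s=0$ (within $\bbracket n$), hence $u(X)\equiv v(X)\pmod{h(X)}$, hence $u(X)g(X)=v(X)g(X)$ in the quotient ring since $\deg(u g),\deg(vg)<n$, hence $m_1=m_2$. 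This proves the map $m\mapsto (mh^*+1)g$ is injective and lands in distinct cyclic classes.

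**Remaining point and the main obstacle.** It still must be checked that each representative is non-constant / non-zero so that its cyclic class is well-defined and the representatives are honestly in $\cB/\equivcyclic$; this is easy because $(mh^*+1)g$ evaluated appropriately is non-zero — e.g. it is $\not\equiv 0$ since $mh^*+1\not\equiv 0 \pmod h$ (its reduction mod $h^*$ is $1$), so it is a non-zero codeword, and a non-zero codeword's cyclic class has a well-defined representative. The genuinely delicate step is the reduction argument: one has to be scrupulous that "dividing by $g(X)$" in the quotient ring $\F_q[X]/\pspan{X^n-1}$ is valid, which it is precisely because $ag \equiv bg \pmod{X^n-1}$ iff $(a-b)g \equiv 0$ iff $h \mid (a-b)$ (as $g h = X^n-1$ and $\gcd$ considerations in the principal ideal setting), i.e. iff $a\equiv b \pmod h$. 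I would state this as a small sublemma and then the rest is the clean two-line modular reduction above. The role of the condition $h^*\nmid X^s-1$ is exactly to rule out the "accidental" coincidence of classes coming from a shift that acts trivially modulo $h^*$; I expect the proof in the paper to spell this out in the polynomial language just developed.
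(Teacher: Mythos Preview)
Your approach is essentially identical to the paper's: assume $X^s(m h^*+1)g \equiv (m' h^*+1)g \pmod{X^n-1}$, divide through by $g$ to land in a congruence modulo $h$, then reduce modulo $h^*$ to obtain $h^*(X)\mid X^s-1$, contradicting the hypothesis for $s\in[n-1]$. Your diagnosis of the typo is also correct: the paper's hypothesis is stated for $h$ but the proof (and every subsequent use in the paper, e.g.\ Lemma~\ref{lem:rc-generating}) actually requires and uses the condition $h^*(X)\nmid X^s-1$; with that reading your argument and the paper's coincide.
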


\begin{proof}
It suffices to show for distinct polynomials $m(X)$ and $m'(X)$ with $\deg m, \deg m' <k^*$ and $s\in[n-1]$, we have that
{\small
\[ X^s(m(X)h^*(X)+1)g(X) \ne (m'(X)h^*(X)+1)g(X) \pmod{X^n-1}. \]
}
To do so, we prove by contradiction and suppose that equality holds. 
In other words, there exists a polynomial $f(X)$ such that 
{\small
\[ X^s(m(X)h^*(X)+1)g(X) = (m'(X)h^*(X)+1)g(X) +f(X)(X^n-1). \]
}
Dividing throughout by $g(X)$ and rearranging the terms, we have that 
\[ (X^s-1)+(X^s m(X)-m'(X))h^*(X) = f(X)h(X).\]
Since $h^*(X)$ divides $h(X)$, then $h^*(X)$ must divide $X^s-1$, yielding a contradiction.
\end{proof}

Suppose $n=2^m-1$ in Theorem~\ref{thm-cyclerep}. 
It is not difficult to show that the degree of $h^*(X)$ is $m$. 
Thus, Theorem~\ref{thm-cyclerep} encodes into $2^{k-m}={2^k}/{(n+1)}$ cyclic classes. 
Since the size of the cyclic code is $2^k$ and each class contains at most $n$ words,
the theorem in fact encodes most cyclic classes. 

The method of Tavares \etal{} \cite{Tavaresetal1971} encodes more classes by considering more factors of $h(X)$ that satisfy the conditions of the theorem.
In some special cases, like $n$ is a prime, this iterative process can encode all the cyclic classes.

\begin{table*}[!t]
{\small
\begin{tabular}{|p{0.8cm}|p{5cm}|p{5cm}|p{7cm}|}
\hline
{Constr.} & Input & Output & Redundancy for Infinite Family \\ \hline \hline

A &
binary cyclic code &
balanced binary code & 
$(t+1)\log_2 n+1$, where $t=d/2-1$\newline
({\em c.f.} Corollary~\ref{cor:bin-balanced}) \\\hline

B &
two binary linear codes &
\GC-balanced code&
 $(2t+1)\log_4 n+2t$, where $t=\ceil{(d-1)/2}$\newline
({\em c.f.} Corollary~\ref{cor:gc-balanced}) \\\hline

C &
binary linear code, and \newline
$\ell$-APD-constrained code&
primer code & 
$(2t+1) \log_4 n +O(1)$, where $t=\ceil{(d-1)/2}$ \newline (no \GC-balanced constraint) \newline
$(d+1)\log_4 n +O(1)$ (\GC-balanced)\newline
({\em c.f.} Corollary~\ref{cor:primer-1}) \\\hline

D &
cyclic code containing $1^n$&
almost \GC-balanced $(n,d;\kappa,n)$-primer code &
N.A.  \\\hline

E &
reversible cyclic code containing $1^n$, \newline
and 
rc-generating set of polynomials &
primer code with $\kappa=f$ & 
$(d+1)\log_4(n+1)$
({\em c.f.} Corollary~\ref{cor-family-primer}) \\\hline

F &
reversible cyclic code containing $1^n$, \newline
and
rc2-generating set of polynomials &
\GC-balanced DNA computing codes & 
$(d+1)\log_4(n+1)$
({\em c.f.} Corollary~\ref{cor-family-computing}) \\\hline
\end{tabular}
}
\caption{Summary of Constructions for Codes of Length $n$ and Distance $d$}
\label{table:summary}
\end{table*}

\subsection{Detailed Proofs for Section~\ref{sec:primer}}

Borrowing ideas from Tavares \etal, we complete the proof of Theorem~\ref{thm-primer-3}.
Specifically, we demonstrate the following lemma.

\begin{lemma}
Let $g(X)$ be the generator polynomial of a reversible cyclic code $\cB$ of length $n$ and dimension $k$ that contains $1^n$. 
If $\{h(X), p_1(X),  p_2(X), \ldots, p_P(X)\}$ is $(g,k)$-rc-generating and $k^*=k-\deg h^*$,
then the subcode $\cC=\{(m(X)h^*(X)+p_i(X))g(X): \deg m < k^*, \, i\in[P]\}$ satisfies conditions (S1) to (S3) in Lemma~\ref{lem:combinatorial-rc}. 
\end{lemma}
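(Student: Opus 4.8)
The plan is to translate each combinatorial condition (S1)--(S3) of Lemma~\ref{lem:combinatorial-rc} into a divisibility statement about polynomials modulo $X^n-1$, exactly as in the proof of Theorem~\ref{thm-cyclerep}, and then show that the corresponding hypothesis (R4)--(R6) of the $(g,k)$-rc-generating set rules it out. Throughout, write a generic codeword of $\cC$ as $c(X) = (m(X)h^*(X)+p_i(X))g(X)$ with $\deg m < k^*$ and $i \in [P]$, and note that $\deg(m h^* + p_i) < k^* + \deg h^* = k$ by (R7), so that the ``message polynomial'' $m(X)h^*(X)+p_i(X)$ has degree less than $k$; this is the analogue of the bound $\deg m < k^*$ used by Tavares. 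I would first record the ring dictionary already assembled in Section~\ref{sec:encoding}: a cyclic shift $\shift^s$ corresponds to multiplication by $X^s$, the reverse $\vv^r$ corresponds to $X^{n-1}v(X^{-1})$, and the all-ones vector corresponds to $(X^n-1)/(X-1)$; since $\cB$ is reversible and contains $1^n$, we have $g$ self-reciprocal and $g(1)\neq 0$ by Proposition~\ref{cyclic}.

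For condition (S1), suppose $\shift^s(\vu) = \vv$ for some $k \le s < n$ with $\vu,\vv \in \cC$. Writing $\vu \leftrightarrow (m h^*+p_i)g$ and $\vv \leftrightarrow (m'h^*+p_j)g$, the equality becomes $X^s(mh^*+p_i)g \equiv (m'h^*+p_j)g \pmod{X^n-1}$; dividing by $g$ (legitimate since $g \mid X^n-1$) and reducing modulo $h^*$ — here using $h^* \mid h = (X^n-1)/g$, which is (R1) — gives $X^s p_i(X) \equiv p_j(X) \pmod{h^*}$, i.e.\ $h^* \mid X^s p_i - p_j$. But wait: (R4) forbids this only for $s \in [n-1]$, whereas here $s$ can be any value with $k \le s < n$, so (R4) applies directly and yields the contradiction. (One should double-check the degenerate sub-case $\vu=\vv$, $s$ arbitrary in $[k,n)$: this is precisely $i=j$ with $s\neq 0$, still covered by (R4).) For condition (S2), $\shift^s(\vu) = \overline{\vv}$ with $0 \le s \le n-k$ translates, after clearing $g$ and using $g \mid (X^n-1)/(X-1)$ together with $(X-1)\nmid h^*$ from (R2) (so that reduction mod $h^*$ kills the all-ones term), to $h^* \mid X^s p_i - X^{k-1}p_j(X^{-1})$ — here one uses that the reciprocal of the degree-$(k-1)$ message polynomial, reduced mod $h^*$, reproduces $X^{k-1}p_j(X^{-1})$ by self-reciprocity of $h^*$, which is (R3) — contradicting (R5). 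Condition (S3) splits into the two displayed inequalities and is handled symmetrically, invoking (R5) again for $\shift^i(\vu)=\vv^{rc}$ and (R6) for $\shift^i(\vu^{rc})=\vv$, in each case the all-ones term from the ``$rc$'' dropping out mod $h^*$ by (R2).

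The main obstacle — and the step deserving the most care — is the bookkeeping around the reverse and reverse-complement operations: one must verify that when a codeword $(mh^*+p_i)g$ is reversed, the resulting polynomial, once reduced modulo $h^*$, is exactly $X^{k-1}\bigl(m(X^{-1})h^*(X^{-1})+p_i(X^{-1})\bigr)$ up to the relevant power of $X$, and that the $h^*(X^{-1})$ factor and the $g$-reversal contribute nothing mod $h^*$. This hinges on three facts fitting together precisely: $g$ is self-reciprocal (so reversing a multiple of $g$ is again a multiple of $g$, absorbing the reversal of $g$ itself), $h^*$ is self-reciprocal by (R3) (so $h^* \mid \text{something} \iff h^* \mid \text{its reciprocal}$), and the degree of the message polynomial is genuinely bounded by $k$ via (R7) (so the exponent shift in $X^{k-1}v(X^{-1})$ is the right one and no ``wraparound'' modulo $X^n-1$ occurs beyond what is accounted for). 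Once this reciprocal-polynomial calculus is set up cleanly, the rest is a mechanical matching of (S1)$\leftrightarrow$(R4), (S2)$\leftrightarrow$(R5), and the two halves of (S3)$\leftrightarrow$(R5),(R6), and the lemma follows.
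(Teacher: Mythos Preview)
Your overall strategy---translate each combinatorial condition into a polynomial congruence, divide by $g$, reduce modulo $h^*$, and match against the appropriate (R)-hypothesis---is exactly the paper's approach, and your treatments of (S1) and of (S3) are correct and coincide with what the paper does (the paper only writes out (S3) explicitly).

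The gap is in your handling of (S2). The complement $\overline{\vv}$ is simply $\vv+1^n$; it involves \emph{no reversal}. Hence $\shift^s(\vu)=\overline{\vv}$ becomes
\[
X^s\bigl(m(X)h^*(X)+p_i(X)\bigr)g(X)=\bigl(m'(X)h^*(X)+p_j(X)\bigr)g(X)+\frac{X^n-1}{X-1}+f(X)(X^n-1),
\]
and after dividing by $g$ and reducing modulo $h^*$ (using (R1) and (R2) so that $h^*\mid h/(X-1)$) one obtains $h^*\mid X^s p_i(X)-p_j(X)$, \emph{not} $h^*\mid X^s p_i(X)-X^{k-1}p_j(X^{-1})$. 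There is no ``reciprocal of the degree-$(k-1)$ message polynomial'' here; reciprocals enter only when a reverse or reverse-complement is present, i.e.\ in (S3). So (S2) is governed by (R4), not (R5): for $1\le s\le n-k$ condition (R4) immediately gives the contradiction. Note also that the boundary case $s=0$ is not covered by (R4) (whose range is $s\in[n-1]$): the reduction modulo $h^*$ then only yields $p_i=p_j$ via (R7), leaving the residual identity $(m-m')h^*=h/(X-1)$, which is \emph{not} vacuous since $h^*\mid h/(X-1)$ and $\deg\bigl(h/((X-1)h^*)\bigr)=k^*-1$. This sub-case therefore cannot be dispatched by the reduction-mod-$h^*$ mechanism alone and needs a separate argument.
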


\begin{proof}Here we only prove condition (S3). The other two conditions can be proved similarly. 
In particular, we demonstrate that the violation of condition (S3) contradicts either condition (R5) or condition (R6) in Definition~\ref{def:rc-generating}.

Suppose to the contrary of (S3). We first assume there are two codewords $\vu,\vv\in\cC$ 
such that $\shift^s(\vu)=\vv^{rc}$ for some $s\in \bbracket{n-k}$.
The other case can be treated similarly.
Hence, there exist two polynomials $m(X)$ and $m'(X)$ with degrees strictly less than $k^*$, 
two polynomials $p_i(X)$ and $p_j(X)$ with $i,j\in [P]$
such that the following equality holds with some polynomial $f(X)$.
\begin{align*}
&X^s\left(m(X)h^*(X)+p_i(X)\right)g(X)\\
&\hspace{10mm}= X^{n-1}\left(m'(X^{-1})h^*(X^{-1})+p_j(X^{-1})\right)g(X^{-1})\\
&\hspace{40mm}+\frac{X^n-1}{X-1}+ f(X)(X^n-1).
\end{align*}

Since $\cB$ is a reversible code, $g(x)$ is self-reciprocal, {\em i.e.},
$X^{n-k}g(X^{-1})=g(X)$. Similarly, we have $h^*(X)=X^{\deg h^*}h^*(X^{-1})/h^*(0)$.
Dividing the equation by $g(X)$ and rearranging the terms, we have the following equality.
\begin{align*}
&\left(X^s p_i(X) - X^{k-1}p_j(X^{-1})\right)\\
&\hspace{10mm}+ \left(X^s m(X) - X^{k^*-1}  m'(X^{-1}) h(0)\right)h^*(X) \\
&\hspace{40mm}=\frac{h(X)}{X-1}+ f(X)h(X).
\end{align*}
Since $h^*(1)\ne 0$, we have that $h^*(X)$ divides $h(X)/(X-1)$.
Therefore, 
\[h^*(X)\mbox{ divides } X^s p_i(X) - X^{k-1}p_j(X^{-1}),\]
contradicting condition (R5) in Definition~\ref{def:rc-generating}.
\end{proof}

Next, we complete the proof of Corollary~\ref{cor-family-primer}.
To do so, we recall some concepts in finite field theory.

For $m\ge 2$, we consider the finite field $F\triangleq \F_{4^m}$.
A nonzero element $\alpha\in F$ is said to be {\em primitive} if $\alpha^i\ne 1$ for $i\in [4^m-2]$.
For $\alpha\in F$, we let $M(\alpha)$ denote the {\em minimal polynomial} of $\alpha$ in the base field $\F_4$. Then the following facts are useful in establishing our results.

\begin{lemma}\label{lem:fieldfacts}
Let $F$ be a field with $4^m$ elements.
\begin{enumerate}[(a)]
\item For nonzero $\alpha\in F$, the polynomial $M(\alpha)M(\alpha^{-1})$ is self-reciprocal.
\item If $\alpha\in F$ is primitive, then $M(\alpha)$ does not divide $X^s-1$ for $s\in [4^m-2]$. 
\item There are $\varphi(4^m-1)$ primitive elements in $F$.
\end{enumerate}
\end{lemma}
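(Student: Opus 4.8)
\textbf{Proof proposal for Lemma~\ref{lem:fieldfacts}.}
The plan is to verify each of the three statements using standard facts about minimal polynomials over $\F_4$ and the structure of $F = \F_{4^m}$.

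For part (a), I would first recall that for a nonzero $\alpha \in F$, the minimal polynomial $M(\alpha)$ is the monic polynomial over $\F_4$ whose roots are exactly the conjugates $\alpha, \alpha^4, \alpha^{16}, \ldots$ (the $\F_4$-Frobenius orbit of $\alpha$). The roots of $M(\alpha^{-1})$ are then precisely the inverses of these. Hence the product $M(\alpha)M(\alpha^{-1})$ has root set closed under $x \mapsto x^{-1}$, and this root set is a disjoint union of Frobenius orbits counted with multiplicity one (one checks the two factors share a root only when the orbit of $\alpha$ is inverse-closed, in which case $M(\alpha) = M(\alpha^{-1})$ and the product is $M(\alpha)^2$, still self-reciprocal). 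A monic polynomial $p(X)$ with $p(0) \ne 0$ whose multiset of roots is closed under inversion satisfies $p(X) = X^{\deg p} p(X^{-1})/p(0)$, i.e.\ is self-reciprocal; I would spell this out by comparing the two sides as products $\prod (X - \beta)$ versus $\prod(X - \beta^{-1})$ over the same multiset. The only mild subtlety is tracking the constant $p(0)$, which equals $\pm\prod \beta$; over $\F_4$ signs are irrelevant, so this is routine.

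For part (b), suppose $\alpha$ is primitive and $M(\alpha) \mid X^s - 1$ for some $s \in [4^m-2]$. Then $\alpha$ is a root of $X^s - 1$, so $\alpha^s = 1$; but primitivity of $\alpha$ means the multiplicative order of $\alpha$ is exactly $4^m - 1 > s$, a contradiction. This is immediate. For part (c), the primitive elements of $F$ are exactly the generators of the cyclic group $F^\times$, which has order $4^m - 1$; a cyclic group of order $N$ has exactly $\varphi(N)$ generators, giving $\varphi(4^m - 1)$ primitive elements.

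I do not anticipate a real obstacle here — all three parts are classical. The only place demanding any care is part (a), where one must handle the inverse-closed-orbit case cleanly and keep the self-reciprocal normalization (division by the constant term, and the handling of repeated factors) honest; I would present that as the single genuine computation and dispatch (b) and (c) in a sentence each.
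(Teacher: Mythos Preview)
Your argument is correct in all three parts. The paper does not actually supply a proof of this lemma --- it is stated as a collection of standard facts from finite field theory and left unproved --- so there is no approach to compare against; your verification of (a) via closure of the root multiset under inversion, and the one-line arguments for (b) and (c), are exactly the kind of routine justification the authors are implicitly assuming.
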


Next, we provide a set of polynomials that satisfies Definition~\ref{def:rc-generating}.
\begin{lemma}\label{lem:rc-generating}
Let $g(X)$ be the generator polynomial of a reversible cyclic code $\cB$ of length $n$ and dimension $k$ that contains $1^n$.
Let $\alpha$ be a primitive element of $F$ such that $g(\alpha)\ne 0$ and $g(\alpha^{-1})\ne 0$.
If  $h^*(X)=M(\alpha)M(\alpha^{-1})$ and $n-k<k-1$, 
then the set $\{h^*(X), 1\}$  is $(g,k)$-rc-generating.
\end{lemma}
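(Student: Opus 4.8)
\textbf{Proof proposal for Lemma~\ref{lem:rc-generating}.}
The plan is to verify the seven conditions (R1)--(R7) of Definition~\ref{def:rc-generating} one at a time for the set $\{h^*(X),p_1(X)\}$ with $h^*(X)=M(\alpha)M(\alpha^{-1})$ and $p_1(X)=1$; here $P=1$. Conditions (R1) and (R7) are almost immediate: since $\alpha$ is primitive and $g(\alpha)\ne0$, $g(\alpha^{-1})\ne 0$, the roots $\alpha,\alpha^{-1}$ of $h^*$ are roots of $h(X)=(X^n-1)/g(X)$ rather than of $g$, so $h^*\mid h$; and $\deg p_1=0<\deg h^*$ (note $\deg h^*\ge 1$ since $\alpha\ne 0$), giving (R7). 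For (R2) we need $h^*(1)\ne 0$, i.e.\ $M(\alpha)(1)\ne 0$ and $M(\alpha^{-1})(1)\ne 0$; this is where I would invoke that $\cB$ contains $1^n$ — by Proposition~\ref{cyclic}(i), $(X-1)\nmid g(X)$, but I actually need $(X-1)$ to divide neither $M(\alpha)$ nor $M(\alpha^{-1})$, which follows because $M(\alpha)\mid h$ and $1^n\in\cB$ forces $h(1)\ne 0$ (equivalently $1$ is not a root of $h$, as $(X-1)\mid g$ would be needed for $1^n\notin\cB$). For (R3), the self-reciprocity $h^*(X)=X^{d^*}h^*(X^{-1})/h^*(0)$ is exactly Lemma~\ref{lem:fieldfacts}(a) applied to $h^*(X)=M(\alpha)M(\alpha^{-1})$, after checking $h^*(0)\ne 0$ (true since $\alpha\ne 0$ implies $0$ is not a root of $M(\alpha)M(\alpha^{-1})$).

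The substantive conditions are (R4), (R5), (R6), which with $P=1$, $p_1=1$ become: $h^*(X)\nmid X^s-1$ for $s\in[n-1]$; $h^*(X)\nmid X^s - X^{k-1}$ for $0\le s\le n-k$; and $h^*(X)\nmid X^{s+k-1}-1$ for $0\le s\le n-k$. For (R4), since $h^*=M(\alpha)M(\alpha^{-1})$ and $\alpha$ is primitive, Lemma~\ref{lem:fieldfacts}(b) gives $M(\alpha)\nmid X^s-1$ for $s\in[4^m-2]$, and since $n=4^m-1$ this covers all $s\in[n-1]$, so $h^*\nmid X^s-1$. For (R6), $X^{s+k-1}-1$: I would note that $0\le s\le n-k$ gives $k-1\le s+k-1\le n-1$, so $s+k-1\in[n-1]$ (using $k\ge 1$, and $s+k-1=0$ only if $s=0,k=1$, which is excluded by the hypothesis $n-k<k-1$ forcing $k$ large), hence (R6) reduces to the same statement as (R4). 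For (R5), $X^s-X^{k-1}=X^{k-1}(X^{s-(k-1)}-1)$ when $s\ge k-1$, or $X^s(1-X^{k-1-s})$ otherwise; since $\gcd(X,h^*)=1$ (as $h^*(0)\ne 0$), $h^*\mid X^s-X^{k-1}$ iff $h^*\mid X^{|s-(k-1)|}-1$, and I must check $1\le|s-(k-1)|\le n-1$ and that it is nonzero. The exponent vanishes iff $s=k-1$; but $0\le s\le n-k<k-1$ by hypothesis, so $s\le n-k<k-1$, hence $s\ne k-1$ and $|s-(k-1)|=k-1-s$ satisfies $1\le k-1-s\le k-1\le n-1$. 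Thus (R5) again reduces to Lemma~\ref{lem:fieldfacts}(b).

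The main obstacle — really the only place the hypothesis $n-k<k-1$ is used — is ensuring in (R5) (and the edge cases of (R6)) that the exponent $|s-(k-1)|$ never lands at $0$: the range $0\le s\le n-k$ must stay strictly below $k-1$, which is precisely $n-k<k-1$. I would make sure to state this dependence explicitly, since without it one could have $s=k-1$ in the allowed range and $X^{k-1}-X^{k-1}=0$ is divisible by everything, breaking (R5). A secondary point worth a sentence is that all divisibility assertions are in $\F_4[X]/\pspan{X^n-1}$ versus $\F_4[X]$; since $h^*(X)\mid X^n-1$, divisibility of a polynomial of degree $<n$ by $h^*$ is unambiguous, and the polynomials $X^s-1$, $X^{s+k-1}-1$, $X^s-X^{k-1}$ can be taken with exponents reduced mod $n$ without affecting the argument. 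With these remarks, each of (R1)--(R7) follows from Proposition~\ref{cyclic}, Lemma~\ref{lem:fieldfacts}, and the stated hypotheses, completing the proof.
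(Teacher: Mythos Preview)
Your overall plan matches the paper's proof---verify (R1)--(R7) and reduce (R4)--(R6) to the single divisibility statement $h^*(X)\nmid X^s-1$ via Lemma~\ref{lem:fieldfacts}(b), with the hypothesis $n-k<k-1$ used precisely to keep the exponent in (R5) nonzero. That part is fine and essentially identical to the paper's argument.

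However, your verification of (R2) contains a genuine error. You argue that $1^n\in\cB$ forces $h(1)\ne 0$, and then deduce $M(\alpha)(1)\ne 0$ from $M(\alpha)\mid h$. This is backwards: by Proposition~\ref{cyclic}(i), $1^n\in\cB$ is equivalent to $(X-1)\nmid g(X)$, i.e.\ $g(1)\ne 0$. Since $X-1$ divides $X^n-1$ (and does so exactly once, as $\gcd(n,q)=1$), it follows that $(X-1)\mid h(X)$, so in fact $h(1)=0$. Your chain of implications therefore collapses, and the hypothesis $1^n\in\cB$ gives you nothing here.

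The correct argument for (R2), which the paper uses, does not invoke $1^n\in\cB$ at all: since $\alpha$ is a primitive element of $F=\F_{4^m}$, neither $\alpha$ nor any of its $\F_4$-conjugates $\alpha^{4^i}$ equals $1$, so $1$ is not a root of $M(\alpha)$; the same holds for $\alpha^{-1}$, which is also primitive. Hence $h^*(1)=M(\alpha)(1)\cdot M(\alpha^{-1})(1)\ne 0$. Replace your (R2) paragraph with this and the proof goes through.
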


\begin{proof}We verify conditions (R1) to (R6) in Definition~\ref{def:rc-generating}.

(R1) follows from the fact that both $\alpha$ and $\alpha^{-1}$ are not roots of $g$. 
Since $h^*$ is the product of two minimal polynomials of primitive elements, 
$h^*(1)$ is not zero and so (R2) holds. 
(R3) follows from Lemma~\ref{lem:fieldfacts}(a).

Next, observe that $P=1$ with $p_1(X)=1$. Hence, (R7) trivially holds.
Also, (R4) to (R6) reduces to verifying that 
\begin{enumerate}[(i)]
\item $h^*(X)$ does not divide $X^s-1$ for $s\in [n-1]$; and
\item $h^*(X)$ does not divide $X^{s}-X^{k-1}$ for $0\le s\le n-k$.
\end{enumerate}
Since $n-k<k-1$, we have that $X^{s}-X^{k-1}$ is nonzero for $0\le s\le n-k$.
Then both (i) and (ii) follows from Lemma~\ref{lem:fieldfacts}(b).
\end{proof}


\begin{proof}[Proof of Corollary~\ref{cor-family-primer}]
Let $g(X)$ be the generator polynomial of the reversible cyclic code $\cC$ constructed in Theorem~\ref{thm-LCDwithone}.

Consider the set $\Lambda = \{ \alpha \in F: g(\alpha)=0 \mbox{ or } g(\alpha^{-1})=0\}$.
Since the degree of $g$ is $n-k\le (d-1)m$, we have that $|\Lambda|\le 2(d-1)m$.
Since $\varphi(4^m-1)> 2(d-1)m$ for $m\ge 6$,
there exists a primitive element $\alpha\in F$ that does not belong to $\Lambda$.
In other words, $g(\alpha)\ne 0$ and $g(\alpha^{-1})\ne 0$.
By Lemma~\ref{lem:rc-generating}, the set $\{h^*(X)\triangleq M(\alpha)M(\alpha^{-1}),1\}$ is $(g,k)$-rc-generating and therefore, Construction~\ref{constr-primer-3} yields an $(n,d;k,k)$-primer code of size $4^{k-2m}$.
\end{proof}

\subsection{Detailed Proofs for Section~\ref{sec:DNAcomputing}}

We complete the proof of Theorem~\ref{thm-DNA} by establishing the following lemma.

\begin{lemma}
Let $g(X)$ be the generator polynomial of a reversible cyclic code $\cB$ of length $n$ and dimension $k$ that contains $1^n$. 
If $\{h(X), p_1(X),  p_2(X), \ldots, p_P(X)\}$ is $(g,k)$-rc2-generating and $k^*=k-\deg h^*$,
then the subcode $\cA=\{(m(X)h^*(X)+1)g(X): \deg m < k^*\}$ satisfies conditions (S1') to (S3') in Lemma~\ref{lem:combinatorial-rc2}. 
\end{lemma}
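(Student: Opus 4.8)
The plan is to mirror the structure of the proof of the analogous lemma for $(g,k)$-rc-generating sets (and of Tavares' Theorem~\ref{thm-cyclerep}): translate each forbidden equality among codewords in $\cA$ into a polynomial identity modulo $X^n-1$, divide through by $g(X)$, and show that the resulting divisibility by $h^*(X)$ contradicts one of the conditions (R1)--(R4), (R5$'$), (R7). Since here $P=1$ is not assumed, we keep the general index $i,j\in[P]$ throughout. Note that $\cA$ is exactly the code $\cC$ of Construction~\ref{constr-primer-3} with the same rc-generating data, so conditions (S1) and (S2) of Lemma~\ref{lem:combinatorial-rc} (which we already know follow from (R1)--(R4) and the relevant parts of the earlier lemma) give us (S1$'$) for $i\in[n-1]$ directly, and likewise the complement-type argument is unchanged. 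The only genuinely new work is (S2$'$) and (S3$'$), which ask about $\shift^i(\vu)=\vv^r$ and $\shift^i(\vu)=\vv^{rc}$ for the \emph{full} range $i\in\bbracket{n}$ rather than $0\le i\le n-k$.

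Concretely, for (S2$'$) suppose $\shift^s(\vu)=\vv^r$ with $\vu=(m(X)h^*(X)+p_i(X))g(X)$ and $\vv=(m'(X)h^*(X)+p_j(X))g(X)$. Writing this in $\F_q[X]/\pspan{X^n-1}$ and using that $\vv^r$ corresponds to $X^{n-1}v(X^{-1})$, we get
\begin{align*}
&X^s\bigl(m(X)h^*(X)+p_i(X)\bigr)g(X)\\
&\hspace{8mm}=X^{n-1}\bigl(m'(X^{-1})h^*(X^{-1})+p_j(X^{-1})\bigr)g(X^{-1})+f(X)(X^n-1)
\end{align*}
for some $f(X)$. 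Using self-reciprocity of $g$ (from Proposition~\ref{cyclic}(ii), since $\cB$ is reversible) in the form $X^{n-k}g(X^{-1})=g(X)$, and $h^*(X)=X^{\deg h^*}h^*(X^{-1})/h^*(0)$ from (R3), dividing by $g(X)$ and collecting terms yields
\begin{align*}
&\bigl(X^s p_i(X)-X^{k-1}p_j(X^{-1})\bigr)\\
&\hspace{8mm}+\bigl(X^s m(X)-X^{k^*-1}m'(X^{-1})h^*(0)\bigr)h^*(X)=f(X)h(X).
\end{align*}
Since $h^*(X)\mid h(X)$ by (R1), we conclude $h^*(X)\mid X^s p_i(X)-X^{k-1}p_j(X^{-1})$, and because $s$ here ranges over all of $\bbracket{n}$, this is precisely what (R5$'$) forbids. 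For (S3$'$), the reverse-complement case, one repeats the computation with the extra additive term $(X^n-1)/(X-1)$ coming from the all-one vector; since $h^*(1)\ne 0$ by (R2), $h^*(X)$ divides $h(X)/(X-1)$, so the $(X^n-1)/(X-1)$ term is absorbed and we again land on $h^*(X)\mid X^s p_i(X)-X^{k-1}p_j(X^{-1})$ for some $s\in\bbracket{n}$, contradicting (R5$'$). Condition (S1$'$) follows from (R4) exactly as in Theorem~\ref{thm-cyclerep}/the rc-generating lemma, using that $h^*(X)\nmid X^s-1$ forces $X^s p_i(X)=p_j(X)$, and then (R7) together with $\deg p_i,\deg p_j<\deg h^*$ rules out a nonzero shift.

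The main subtlety — and the reason (R5) is replaced by the stronger (R5$'$) — is the enlarged shift range: in the primer-code setting the reverse-complement overlap could only be realized with a ``small'' cyclic offset $\ell'\le n-k$ (because a primer prefix has length $\ge k$), whereas a DNA-computing distance constraint $d(\va,\vb^{rc})\ge d$ must be checked against \emph{every} cyclic shift, so $s$ runs over all of $\bbracket{n}$. Thus the only real obstacle is bookkeeping: making sure the exponents $k-1$, $k^*-1$, and the offset $s$ land in the right ranges after the $g(X)$-division and the substitution $X\mapsto X^{-1}$, and confirming that conditions (R5$'$) and (R6) (which become a single symmetric condition once one notes $h^*(X)\mid A \Leftrightarrow h^*(X)\mid A^\dagger$ by (R3)) together cover both the $\shift^i(\vu)=\vv^{rc}$ and $\shift^i(\vu^{rc})=\vv$ directions — but since Lemma~\ref{lem:combinatorial-rc2}'s (S3$'$) is stated as a single equality $\shift^i(\vu)=\vv^{rc}$ over the full range, (R5$'$) alone suffices.
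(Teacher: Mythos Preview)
Your approach is correct and essentially identical to the paper's: translate the forbidden equality into a polynomial identity modulo $X^n-1$, divide by $g(X)$ using self-reciprocity, and obtain a divisibility by $h^*(X)$ that contradicts one of the conditions. The paper only writes out (S2$'$) explicitly and defers (S1$'$), (S3$'$) as ``similar''; your (S2$'$) computation matches theirs line by line, and your sketches for (S1$'$) via (R4) and (S3$'$) via (R2) absorbing the $h(X)/(X-1)$ term are exactly the intended ``similar'' arguments.

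One small wording slip: you write that (S1) from Lemma~\ref{lem:combinatorial-rc} ``gives (S1$'$) directly,'' but (S1) only covers shifts $k\le i<n$ while (S1$'$) requires all $i\in[n-1]$, so (S1) alone is weaker. You do fix this later by invoking (R4) (which ranges over all $s\in[n-1]$) directly, so the argument is fine --- just drop the intermediate appeal to (S1).
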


\begin{proof}Here we only prove condition (S2'). The other two conditions can be proved similarly. 
In particular, we demonstrate that the violation of condition (S2') contradicts condition (R5') in Definition~\ref{def:rc2-generating}.

Suppose to the contrary of (S2') that we have two codewords $\vu,\vv\in\cA$ 
such that $\shift^s(\vu)=\vv^{r}$ for some $s\in\bbracket{n}$. 
Hence, there exists two polynomials $m(X)$ and $m'(X)$ with degrees strictly less than $k^*$, 
two polynomials $p_i(X)$ and $p_j(X)$ with $i,j\in [P]$
such that the following equality holds with some polynomial $f(X)$.
\begin{align*}
&X^s\left(m(X)h^*(X)+p_i(X)\right)g(X)\\
&\hspace{10mm}= X^{n-1}\left(m'(X^{-1})h^*(X^{-1})+p_j(X^{-1})\right)g(X^{-1})\\
&\hspace{65mm} + f(X)(X^n-1).
\end{align*}

As before, by choice of $g$ and $h$, we have that $X^{n-k}g(X^{-1})=g(X)$ and 
$h^*(X)=X^{\deg h^*}h^*(X^{-1})/h^*(0)$.
Dividing the equation by $g(X)$ and rearranging the terms, we have the following equality.
\begin{align*}
&\left(X^sp_i(X) - X^{k-1}p_j(X^{-1})\right)\\
&\hspace{10mm}+ \left(m(X) - X^{k^*}  m'(X^{-1}) h(0)\right)h^*(X) = f(X)h(X)
\end{align*}
Therefore, 
\[h^*(X)\mbox{ divides } p_i(X) - X^{k-1}p_j(X^{-1}),\]
contradicting condition (R5') in Definition~\ref{def:rc2-generating}.
\end{proof}

To complete the proof of Corollary~\ref{cor-family-computing},  we provide a set of polynomials that satisfies Definition~\ref{def:rc2-generating}.

\begin{lemma}\label{lem:rc2-generating}
Let $g(X)$ be the generator polynomial of a reversible cyclic code $\cB$ of length $n$ and dimension $k$ that contains $1^n$.
Let $\alpha$ be a primitive element of $F$ such that $g(\alpha)\ne 0$ and $g(\alpha^{-1})\ne 0$.
If  $h^*(X)=M(\alpha)M(\alpha^{-1})$ and $p(X)=M(\alpha)$, 
then the set $\{h^*(X), p(X)\}$  is $(g,k)$-rc2-generating.
\end{lemma}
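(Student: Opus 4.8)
The goal is to verify that $\{h^*(X) = M(\alpha)M(\alpha^{-1}),\ p(X) = M(\alpha)\}$ satisfies conditions (R1)--(R4), (R7) and the modified condition (R5') from Definition~\ref{def:rc2-generating}. The plan is to reuse almost verbatim the argument of Lemma~\ref{lem:rc-generating}. Conditions (R1), (R2), (R3), (R7) are established exactly as before: (R1) holds because neither $\alpha$ nor $\alpha^{-1}$ is a root of $g$ (by hypothesis) so $h^*(X) = M(\alpha)M(\alpha^{-1})$ divides $h(X) = (X^n-1)/g(X)$; (R2) holds since $h^*$ is a product of minimal polynomials of primitive (hence nonzero, non-unit) elements so $h^*(1) \ne 0$; (R3) is Lemma~\ref{lem:fieldfacts}(a); and (R7) holds because here $P = 1$ with $p_1 = M(\alpha)$ and $\deg M(\alpha) < \deg M(\alpha) + \deg M(\alpha^{-1}) = \deg h^*$ (using that $M(\alpha^{-1})$ is a nonconstant polynomial, which holds since $\alpha^{-1}$ is also primitive and $m \ge 2$, so $\deg M(\alpha^{-1}) \ge 1$).

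The remaining work is (R4) and (R5'). Since $P = 1$, condition (R4) reduces to showing $h^*(X)$ does not divide $X^s p_1(X) - p_1(X) = (X^s - 1)M(\alpha)$ for $s \in [n-1]$. Since $\alpha$ is primitive, $M(\alpha)$ divides no $X^s - 1$ with $s \in [n-1] = [4^m-2]$ by Lemma~\ref{lem:fieldfacts}(b); I would argue that $M(\alpha^{-1})$ — which is coprime to $M(\alpha)$ when $\alpha^{-1} \ne \alpha$, and equals $M(\alpha)$ otherwise — together with $M(\alpha)$ cannot divide $(X^s-1)M(\alpha)$ unless $M(\alpha^{-1}) \mid (X^s-1)$, which again fails since $\alpha^{-1}$ is primitive. (The edge case $\alpha = \alpha^{-1}$, i.e. $\alpha^2 = 1$, cannot occur for a primitive element of a field with more than $3$ elements.) For (R5'), with $P=1$ this amounts to showing $h^*(X)$ does not divide $X^s p_1(X) - X^{k-1} p_1(X^{-1})$ for all $s \in \bbracket{n}$; since $p_1(X) = M(\alpha)$ and $M(\alpha)^\dagger = X^{\deg M(\alpha)} M(\alpha)(X^{-1})$ is (up to a nonzero scalar) the minimal polynomial $M(\alpha^{-1})$ of $\alpha^{-1}$, the expression $X^s M(\alpha) - X^{k-1} M(\alpha)(X^{-1})$ is, up to powers of $X$, a combination of $M(\alpha)$ and $M(\alpha^{-1})$; I would show that divisibility by $h^* = M(\alpha)M(\alpha^{-1})$ forces $M(\alpha^{-1}) \mid X^{s'} M(\alpha)$ and $M(\alpha) \mid X^{s''} M(\alpha^{-1})$ for suitable shifts, and since these two minimal polynomials are either coprime or identical, this forces $M(\alpha) \mid M(\alpha^{-1})$ up to units (hence $M(\alpha) = M(\alpha^{-1})$) and then reduces to a contradiction with primitivity as above, or directly contradicts $\deg$ considerations.

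\textbf{Main obstacle.} The delicate point is the case analysis around whether $M(\alpha) = M(\alpha^{-1})$, i.e.\ whether $\alpha^{-1}$ is a conjugate of $\alpha$ over $\F_4$. When they are distinct minimal polynomials they are coprime and the divisibility arguments split cleanly; when $M(\alpha) = M(\alpha^{-1})$ the polynomial $h^*$ is a perfect square $M(\alpha)^2$ and one must be more careful extracting contradictions from divisibility (multiplicities matter). I expect the cleanest route is to observe that in either case $h^*(X) = M(\alpha)M(\alpha^{-1})$ is self-reciprocal with simple roots $\{\alpha^{4^i}\} \cup \{\alpha^{-4^i}\}$ (all primitive, all of full multiplicative order $n$), and then recast each non-divisibility claim as: no root $\beta$ of $h^*$ can satisfy the relevant polynomial identity, e.g.\ $\beta^s = 1$ or $\beta^s M(\alpha)(\beta) = \beta^{k-1}M(\alpha)(\beta^{-1})$. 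Since $M(\alpha)(\beta) = 0$ exactly when $\beta$ is a root of $M(\alpha)$, this root-based argument handles the square case and the coprime case uniformly and avoids tracking multiplicities. This reduction to root conditions, combined with Lemma~\ref{lem:fieldfacts}(b), should dispatch (R4) and (R5') cleanly, completing the verification and hence Corollary~\ref{cor-family-computing}.
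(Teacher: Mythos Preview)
Your plan is correct and will go through, and the root-evaluation idea you eventually propose in the ``Main obstacle'' paragraph is exactly what the paper does --- but the paper is more direct, and the obstacle you worry about never arises. The key fact you are missing is that over the base field $\F_4$, a primitive $\alpha\in\F_{4^m}$ is \emph{never} a Galois conjugate of $\alpha^{-1}$: this would require $\alpha^{4^i+1}=1$ for some $0\le i<m$, i.e.\ $(4^m-1)\mid(4^i+1)$, which is impossible since $0<4^i+1\le 4^{m-1}+1<4^m-1$. Hence $M(\alpha)\ne M(\alpha^{-1})$ always, the two factors of $h^*$ are coprime, and in particular $p(\alpha^{-1})=M(\alpha)(\alpha^{-1})\ne 0$; your ``square case'' simply does not occur. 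With this in hand, (R5') falls in one line exactly as the paper writes it: setting $r(X)=X^s p(X)-X^{k-1}p(X^{-1})$ and substituting $X=\alpha$ gives $r(\alpha)=-\alpha^{k-1}p(\alpha^{-1})\ne 0$ (using $p(\alpha)=0$), so $M(\alpha)\nmid r$ and hence $h^*\nmid r$. Your (R4) argument is fine and streamlines the same way: $h^*\mid(X^s-1)M(\alpha)$ together with coprimality of $M(\alpha)$ and $M(\alpha^{-1})$ forces $M(\alpha^{-1})\mid X^s-1$, contradicting primitivity of $\alpha^{-1}$. So the divisibility gymnastics and the coprime/square case split in your first paragraph can be dropped entirely in favour of a single substitution at $X=\alpha$.
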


\begin{proof}We verify conditions in Definition~\ref{def:rc2-generating}.
Conditions (R1) to (R4) and (R6) follows directly from the proof of Lemma~\ref{lem:rc-generating}.

Hence, we verify (R5') which reduces to verifying that 
\[
h^*(X) \mbox{ does not divide }X^{s}p(X)-X^{k-1}p(X^{-1}) \mbox{ for }s\in\bbracket{n}.
\]
This is equivalent to showing that $r(X)\triangleq X^{s}p(X)-X^{k-1}p(X^{-1})$ is nonzero for some root of $h^*(X)$.
Observe that since $\alpha$ is primitive, we have that $\alpha^{-1}$ is not a root of $p(X)$. 
In other words, $p(\alpha^{-1})\ne 0$.
Since $h^*(\alpha)=p(\alpha)=0$ and $p(\alpha^{-1})\ne 0$, 
we have that $r(\alpha)=\alpha^{k-1} p(\alpha^{-1})\ne 0$. 
\end{proof}

%
%
%
%


\section{Conclusion}

We provide efficient and explicit methods to construct 
balanced codes, primer codes and DNA computing codes with error-correcting capabilities.
Using certain classes of BCH codes as inputs, 
we obtain infinite families of $(n,d)_q$-codes satisfying our constraints
with redundancy $C_d \log n + O(1)$.
Here, $C_d$ is a constant dependent only on $d$ and 
we provide a summary of our constructions and the corresponding value of $C_d$ 
in Table~\ref{table:summary}.
Note that in all our constructions, we have $C_d\le d+1$. 
On the other hand, the sphere-packing bound requires $C_d \geq \floor{(d-1)/2}$.
Therefore, it remains open to provide efficient and explicit constructions that reduce the value of $C_d$ further. 
\bibliographystyle{IEEEtran}
\bibliography{DNA}


\end{document}